\tikzset
{
    treenode/.style = {circle, draw=black, align=center, 
                          minimum size=1cm, anchor=center},
}
\theoremstyle{plain}
\newtheorem{theorem}{Theorem}
\newtheorem{lemma}[theorem]{Lemma}
\newtheorem{proposition}[theorem]{Proposition}
\newtheorem{corollary}[theorem]{Corollary}
\theoremstyle{definition}
\newtheorem{definition}{Definition}
\newtheorem{example}{Example}
\newtheorem{observation}{Observation}
\newtheorem{remark}{Remark}
\newtheorem{conjecture}{Conjecture}
\newtheorem{property}{Property}
\newcommand{\N}{\ensuremath\mathbb N}
\newcommand{\R}{\mathbb R}
\newcommand{\Ru}{R}
\DeclareMathOperator*{\argmax}{argmax}
\DeclareMathOperator*{\argmin}{argmin}
\def\vec{\boldsymbol}
\newcommand{\C}{\boldsymbol C}
\newcommand{\re}{\boldsymbol r}
\newcommand{\ut}{\boldsymbol{u}}
\newcommand{\p}{{\ensuremath{\widetilde p}}}
\newcommand{\n}{[n]}
\renewcommand{\a}{\ensuremath\vec a}
\newcommand{\om}{\ensuremath{\vec \omega}}
\newcommand{\sig}{\ensuremath{\vec \sigma}}
\newcommand{\x}{\vec x}
\newcommand{\ve}{\vec v}
\newcommand{\length}{{\rm length}}
\newcommand{\red}{{\rm red}}
\newcommand{\cyc}{{\rm cyc}}
\newcommand{\bs}[1]{\ensuremath{\boldsymbol{#1}}}
\newcommand{\zs}{{\rm zs}}
\newcommand{\si}{{\rm si}}
\newcommand{\G}{\mathcal G}
\newcommand{\Lu}{{\ensuremath{\rm Left}}}
\newcommand{\Ri}{{\ensuremath{\rm Right}}}
\newcommand{\Le}{\ensuremath{\mathcal L}}
\newcommand{\Rig}{\ensuremath{\mathcal R}}
\newcommand{\Pre}{\ensuremath{\mathcal P}}
\newcommand{\Ne}{\ensuremath{\mathcal N}}
\renewcommand{\emptyset}{\varnothing}
\newcommand{\s}{\mathcal S}
\newcommand{\A}{\mathcal A}
\newcommand{\T}{{\rm T}}
\renewcommand{\geq}{\geqslant}
\renewcommand{\leq}{\leqslant}
\renewcommand{\ge}{\geqslant}
\definecolor{darkgreen}{rgb}{0,0.6,0}
\newcommand{\kibitz}[2]{\ifnum\Comments=1{\textcolor{#1}{{#2}}}\fi}
\newcommand{\rmr}[1]{\kibitz{red}{[RM:#1]}}
\newcommand{\uln}[1]{\kibitz{blue}{[UL:#1]}}
\title{Cumulative Games: \\ \Large{Who is the current player?}\vspace{1 cm}}
 \author[1]{Urban Larsson \thanks{urban031@gmail.com, partially supported by Aly-Kaufman fellowship.}}
 \author[2]{Reshef Meir \thanks{reshefm@ie.technion.ac.il}}
 \author[1]{Yair Zick\thanks{zick@comp.nus.edu.sg}}
 \affil[1]{National University of Singapore, Singapore}
 \affil[2]{Technion--Israel Institute of Technology, Haifa, Israel}
\newcommand{\Z}{\mathbb Z }
\renewcommand{\ge}{\geqslant}
\def\vec{\boldsymbol}
\begin{document}
	\pgfdeclarelayer{background}
\pgfsetlayers{background,main}

\maketitle
\begin{abstract}
Combinatorial Game Theory (CGT) is a branch of Game Theory that has developed largely independently of Economic Game Theory (EGT), and is concerned with deep mathematical properties of two-player zero-sum games recursively defined over various combinatorial structures. The aim of this work is to lay the foundations for bridging the conceptual and technical gaps between CGT and EGT, here interpreted as multiplayer Extensive Form Games, so that they can be treated within a unified framework. More specifically, we introduce a class of 
$n$-player, general-sum games, called {\sc Cumulative Games}, which can be analyzed using tools from both CGT and EGT. We show how two of the most fundamental definitions of CGT, the outcome function and the disjunctive sum operator, naturally extend to the class of {\sc Cumulative Games}. The outcome function allows for efficient equilibrium computation under certain restrictions, while the disjunctive sum operator lets us define a partial order over games according to the advantage that a given player has. Finally, we show that any Extensive Form Game can be written as a {\sc Cumulative Game}.
\end{abstract}

\section{Introduction}
In the famous game of {\sc Nim}, there are several heaps of pebbles on a table (Figure~\ref{fig:NIM}). Two players take turns, where a player is allowed to remove any positive number of pebbles from a single heap, and the player who picks up the last pebble wins the game.\\ 
\begin{figure}[ht!]
    \centering
   \includegraphics[width = 0.7 \textwidth]{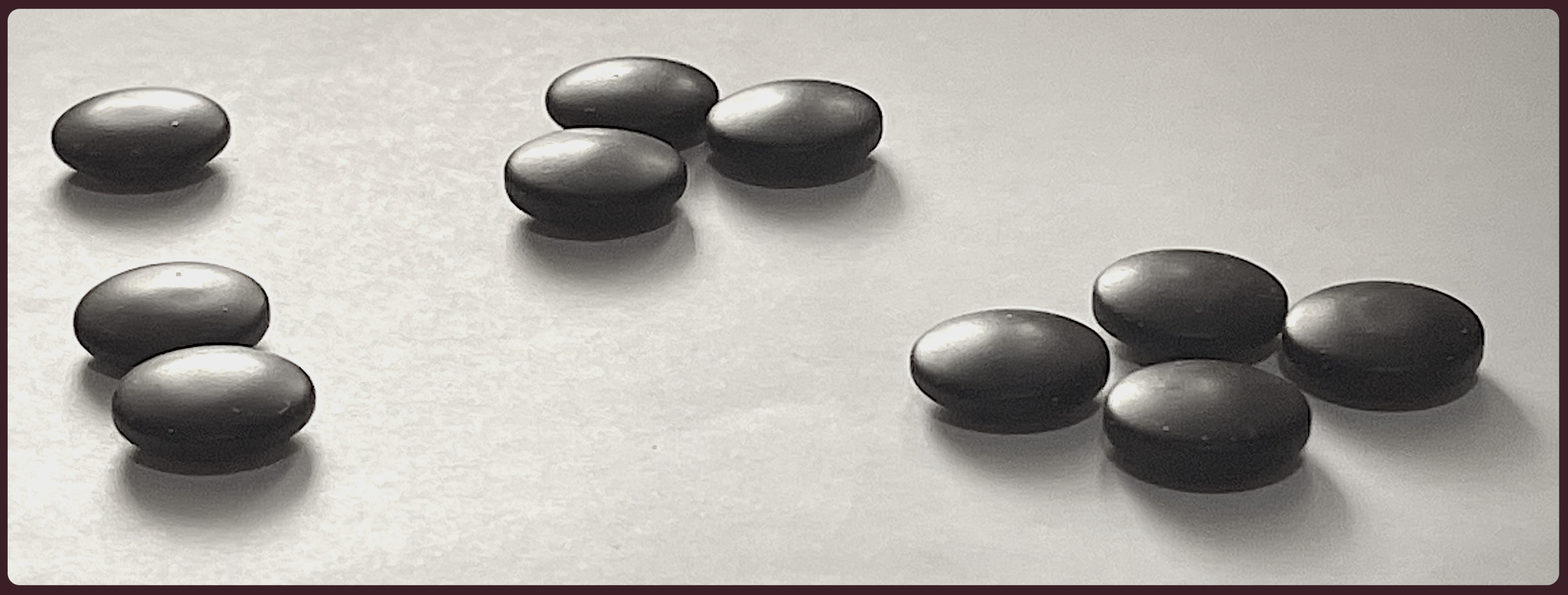}
    \caption{A game of {\sc Nim}.}
    \label{fig:NIM}
\end{figure}

While {\sc Nim} and similar recreational games have been played by people for thousands of years~\cite{CSNS, Jo}, it is in modern times that such games have been analyzed mathematically withing the framework of \emph{Game Theory}, starting with the early works of Bouton~\cite{B}, Zeremelo~\cite{Ze} and von-Neumann~\cite{N}.  
In fact, these early works have inspired at least two branches of Game Theory, that have since developed in different directions and apply different mathematical tools. 

The `main branch' of Game Theory, which we will refer to as \emph{Economic Game Theory} (EGT) uses games as an abstraction for interaction of self-interested parties. Games like {\sc Nim} are viewed as a special case of broader \emph{Extensive Form Games} that describe full information, turn-based interactions between any number of players with arbitrary utility functions. EGT then applies various notions of \emph{equilibrium} (for example Pure Subgame Perfect Equilibrium) that are derived from certain assumptions on players' behavior and rationality. While each game has its own equilibrium outcomes, the focus in EGT literature is typically not on the solution of a particular game, but rather on general properties and techniques,  for example bounds on the efficiency and fairness of equilibrium outcomes, and (more recently) the complexity of algorithms for computing such equilibria~\cite{NRTV}.

In contrast, \emph{Combinatorial Game Theory} (CGT) typically restricts itself to recursively defined  two-player zero-sum games, and focuses on describing the mathematical structures of the winning strategies in specific games and reveal the combinatorial patterns that arise. For example, in {\sc Nim} it is possible to find the winning strategy by considering the binary representation of heap sizes~\cite{B}. Moreover, in CGT we typically analyze whole classes of games together (such as {\sc Nim} with different heap number and sizes), and consider relations of individual positions.  In particular, an important tool in CGT is the \emph{disjunctive sum} operator applied to pairs of games, which intuitively induces a new game where a player can move in one game of her choice in each turn~\cite{BCG}. Conveniently, the disjunctive sum of two {\sc Nim} games is also a {\sc Nim} game, but we can also add up seemingly unrelated games. This special operator can be used to define a partial order over all combinatorial games  and expose unexpected relations, reduction theorems and equivalence classes, sometimes with ``game values'' as simplest representatives. The complexity of such game values grow super exponentially with the rank of its game tree \cite{Su}.

\if 0
Combinatorial Game Theory (CGT) and Economic Game Theory (EGT) have developed almost independently, with a joint origin in Zermelo's backward induction argument \cite{Ze}. CGT is concerned with deep mathematical properties of deterministic two-player zero-sum games that are defined over various combinatorial structures, with key concepts such as ruleset, normal-play, outcome functions and disjunctive sum play. CGT takes inspiration from recreational play traditional board games with no dice and no hidden cards, such as {\sc chess}, {\sc go}, {\sc checkers} and more. On the other hand, EGT revolves around $n$-player simultaneous- or sequential-games in normal- or extensive form, with randomized strategies on general-sum utilities, and often with incomplete information. EGT takes inspiration from the market, real world economic situations, voting and more. In case of complete information, a main analytical tool of EGT concerns subgame-perfect equilibria (SPE), while in CGT one often relies, without mention, on straightformward variations of Zermelo's result (which in essence is a pure SPE).

\uln{Reshef, please fill in correct this part from your point of view. I see now what you wrote: "1 paragraph on EGT (any number of players, general utility functions, main analytical tool is subgame-perfect equilibrium applied to a single game); and one on CGT (2 players, 0-sum, main analytical tool is disjunctive sum applied to classes of games)."}\rmr{I did}
\fi

The aim of this work is to lay foundations to bridging some conceptual and technical gaps between CGT and EGT, so they can be treated within a unified framework. 
More specifically, we introduce a class of $n$-player, general-sum games, called {\sc Cumulative Games}, that can be analyzed by both  CGT and EGT tools.  
{\sc Cumulative Games} are broad enough to model any Extensive Form Game, yet we show how two of the most fundamental definitions of CGT---the outcome function, and the disjunctive sum operator---naturally extend to this class. Moreover, we define the subclass {\sc Heap Dynamic} that permits a tractable analysis of {\em PSPE-outcomes}, which generalizes the classical dynamic programming {\em tabular model} (a.k.a. ``shift register'' model  \cite{Go}) for {\sc Subtraction Games}. 

We relax some of the standard combinatorial games' axioms, but not all. We allow for $n\ge 2$ players with general utilities, thereby substantially expanding beyond traditional combinatorial games, for example introducing ``tragedy of the common'' situations to CGT. But we do not yet allow chance moves, hidden information, simultaneous moves and mixed strategies. 

The main contribution of this paper is conceptual, and takes the form of a discussion, which makes it somewhat different from many theoretical papers. Most of the space is dedicated to exploring various definitions that enable us to reconcile diverging concepts and modelling assumptions in CGT and EGT, and now under one umbrella. In particular, we bridge the conventions that finite games in EGT are rooted trees, that require a particular initial state, and where every node has a coupled player, whereas in CGT states are not a-priori coupled with a current player, and there is usually infinitely many possible initial states.


 Another interesting feature of CGT is that it deals with games that people can actually (and often do) play for recreation and/or professional competition.
Many combinatorial games use the \emph{normal play} convention, meaning that the player who can no longer move loses (as in {\sc Nim}). 
In fact, both the Milnor theory \cite{M} (on positional games), and the breakthrough via normal play theory  \cite{Co, BCG}, took inspiration from endgame studies in the classical eastern game of {\sc Go}, and CGT disjunctive sum theory was born in these ways.\footnote{We now know that Milnor's ``positional games'' behave as normal play games without  infinitesimals; see \cite{LNS} for a discussion.} In this context one should mention that Berlekamp has combined CGT with EGT by playing out endgames  of combinatorial games together with coupon stacks, to estimate their \emph{temperatures} (see \cite{Be, BW} for more on these topics),  but this topic leads towards a different direction than this work.


A game is usually defined via its \emph{ruleset}, which specifies the allowed moves but not the opening position or player turns, e.g. {\sc Hackenbush}, {\sc Domineering} and so on \cite{Co, BCG}; due to its central position in CGT, we use caps fonts for rulsets and ruleset families. Typical results in the CGT literature deal with understanding both winning strategies (as is also done in EGT) and the patterns of the positions in which one or other player may have an advantage (a question that is usually meaningless in EGT, where player roles are fixed). 
 Classes of games are studied by defining various operations such as game addition and game comparison, which take into account games' underlying combinatorial structures together with well-known reduction theorems \cite{Si}.  
 

\subsection{A generalized Cumulative Subtraction: initial examples}\label{sec:intrex}

The ruleset family {\sc Cumulative Subtraction} (CS) is introduced in Stewart's Ph.D. Thesis \cite{St} in a zero-sum setting (also published in \cite{St2}),  with more recent work in \cite{CLMW}. 
Our main conceptual contribution in this paper, {\sc {\sc Cumulative Game}s}, will be based on ideas that further generalize CS to a general-sum setting.  
To the authors best knowledge, general-sum variations of combinatiorial games have not yet been studied in the literature.\footnote{Since the first version of this manuscript, a study inspired by this work has appeared \cite{BKLM}.}

Our general sum extension of CS generalize both Nim and the examples we consider in this section. Recall the classical ruleset family {\sc Subtraction Games}. 
The overarching idea is that there is a heap of pebbles and two players, who alternate removing pebbles given a common subtraction set $\s\subset\N=\{1,2,\ldots \}$. If the size of the heap is $x\in \N_0=\{0,1,\ldots \}$, then the current player acts by removing $s\in \s$ pebbles, and leaves the position $x-s\ge 0$ for the opponent. If there is no such $s\in \s$, then the game ends, and the result is determined, by some prescribed convention, for example, according to the above mentioned normal play convention. 
For a recent survey on {\sc Subtraction Games}, see \cite{LS}. In a more general setting, the set $\s$ may depend on the heap size, and in those cases, we insted refer to a given subtraction-map  $\s:\N_0\rightarrow 2^\N$, where, for all $x$, $\s(x)\subseteq [x]=\{1,\ldots , x\}$. More generally the subtraction set may also depend on the player, and we will return to such situations.


In order to illustrate the basic concepts and motivate the later theory building, we begin by investigating six  concrete situations. We consider a general ruleset family, {\sc Pebbles}, where two players, Alice and Bob, alternate in removing identical objects, say pebbles, from a common heap, given some restriction on the available actions, and given some `winning condition'. Game play ends if the current player is not able to play, because every move would lead to a negative heap size.  
In the first five examples, at their turn, a player, Alice or Bob, must take either two or three pebbles from a single heap, so these are essentially variations of a CS game with $\s=\{2,3\}$. Later we will allow for $n\ge 2$ players. 
The ultimate goal of the game will vary.  

\paragraph{P1) {\sc Normal Pebbles}.} A player who at their turn cannot move loses. Thus, if Alice starts from a heap of size 4, she should remove 3 pebbles.

\paragraph{P2) {\sc Mis\`ere Pebbles}.} A player who at their turn cannot move wins. Thus, if Alice starts from a heap of size 4, she should remove 2 pebbles. 

\paragraph{P3) {\sc Scoring Pebbles}.} A common score is updated during play by the number of pebbles the players remove; Alice's removals add to the current score while Bob's removals subtract from the current score. 
 That is, Alice is the maximizer, whereas Bob is the minimizer;  one may think of a positive final score as a win for Alice, a negative final score as a win for Bob, and a zero final score as a tie. If Alice (Bob) collects 2 pebbles then the score increases (decreases) by 2, etc. Hence, if Alice starts from a heap of size 4, she should remove 3 pebbles, the game ends, and the final score is 3. If they remove 2 pebbles each, the game ends in a tie with a total score of $2-2=0$. 

\paragraph{P4) {\sc Squirrel Pebbles}.} This is self-interest cumulative play. Here, there is no winner, but each squirrel attempts to gather, i.e. {\em cumulate}, the largest possible number of pebbles (nuts) for themselves. Thus, in our example, the first squirrel (Alice) should remove 3 pebbles, and the final utilities will be $(3,0)$. This is not a zero-sum game; all partial cumulations, and in particular the final cumulations, are ordered pairs of nonnegative integers. If Alice starts instead from a heap of size 7, she should collect 2 instead of 3 pebbles. Why? (See Figure~\ref{fig:ext_23-squirrel}.) And this holds for normal play and scoring play too, but in mis\`ere play you will lose regardless of how you play from a heap of size 7.

\paragraph{P5) {\sc Auction Pebbles}.} 
An auctioneer has set up the following two-player auction: the heap consists of 4 bidding-pebbles and a pair of initial bid cumulations, and the players may increase their bids by collecting either 2 or 3 bidding-pebbles. Each player has a utility function of the form: 0 utility if they do not win the auction, and otherwise  the utility is $4 $ minus `their cumulated bid', i.e. in case Alice wins the auction, $4$ minus `her initial bid plus all her play bids'. (In case of a tie, both players get utility 0.) If the initial bids are $(0,0)$ then Alice, playing first, should bid 3 to win the auction, and the utilities will be $(1,0)$. However, if the initial bid is $(1,0)$, then Alice should bid 2, because a bid of 3 would (in spite of winning the auction) give utility $4-3-1=0$, whereas a bid of 2 suffices to win the auction, and her utility will be $4-2-1=1$. Therefore the best-play bid, from a heap of size 4, may depend on the initial bids, and, as we will discuss further, such situations cannot happen in the 4 first examples. This game will be revisited in Example~\ref{ex:auc}. 

\paragraph{P6) {\sc Wealth Pebbles}.}
This is again normal play; the game is played as {\sc Nim} but the players cannot remove any number of pebbles that exceeds their current pocket cumulation. Any removal adds to the current player's pocket. Suppose that the heap size is 3, and the current cumulation is $(2,2)$. Then the first player loses. If the heap size is 3, and the current cumulation is $(2,1)$, then Alice will win if she starts by removing 1 pebble, but she loses if she starts by removing 2 pebbles. Suppose next that the heap is of size 6, and the initial cumulation is $(1,1)$. If Alice starts, then the next position is $5, (2,1)$, followed by Bob, playing to $4,(2,2)$. Now, Alice loses if she removes 2, but she wins if she removes 1.\\

\begin{remark}
A ruleset family {\sc Wealth Nim}, generalizing {\sc Wealth Pebbles}, is studied in \cite{Ankita}. That work considers the instance {\sc Robin Hood}, in which the opponent pays for the current player's move. These particular rules increase the heat of {\sc Wealth Nim}. A complete solution is given in terms of {\em temperature theory}, which quantifies the move advantage of the starting player \cite{Si}.
\end{remark}

All six example games P1 to P6 can be easily encoded as Extensive Form Games, once a starting player is announced: each game state contains a current heap size and player cumulations, and has at most two descendants;  see Figure~\ref{fig:ext_23-squirrel} for P4. Both play and the results of P1 and P2 are independent of cumulations (and thus can be set to $(0,0)$ for example).

The first three winning conditions P1-P3 have received much attention in the CGT literature. In all of them both the allowed moves and the optimal move of the current player are dictated by the size of the heap alone. In this case, the full extensive form used in EGT seems excessively large and inefficient; the game in Figure~\ref{fig:ext_23-squirrel} requires only 11 states instead of 7, but in general we would need $\Theta(x^2)$ states instead of $n$ possible heap sizes. (The relevant heap sizes are $\{0,1,\ldots,x-2,x\}$.) 
In Example~\ref{ex:cumsub}, we develop a much more efficient {\em tabular approach} to compute the utility of arbitrary heap sizes of such self-interest two-player games; this approach generalizes a standard CGT method, which is briefly reviewed in Section~\ref{sec:motivnorm}. 
In fact, this type of complexity issue is at the core of this study, in a sense, the standard EGT game three approach is not a convenient tool to study EGT generalizations of {\sc Subtraction Games}. However, the further we depart from the classical CGT setting, the more we must instead rely on computations as exemplified in this figure.


We believe the {\sc Squirrel}, {\sc Auction} and {\sc Wealth Pebbles} situations are new to combinatorial games' study. 
Indeed, in {\sc Squirrel Pebbles} (P4) a {\em PSPE-move} still only depends on the size of the heap, but it is no longer a zero-sum game. 

In {\sc Auction Pebbles} (P5), the situation becomes more involved. As the example above shows, even for a fixed heap size, the PSPE-move may depend on the amount of pebbles collected by each player so far, so this information is no longer redundant.

{\sc Wealth Pebbles} (P6) adds even more complication, as even the \emph{move options} depend on the collected pebbles.


\tikzstyle{edge} = [draw,thick,-]
\tikzstyle{weight} = [font=\small]
\begin{figure}[!ht]
\centering{
\begin{tikzpicture}[scale=1.1, auto,swap]
\tikzstyle{vertex}=[circle,fill=gray!10, minimum size=50pt, inner sep=0pt]
    \begin{small}
        \foreach \pos/\name in { {(0,7)/S}}
        \node[vertex] (\name) at \pos {$7;(0,0)$, A};    
        \foreach \pos/\name in {{(-2,5)/B}}
        \node[vertex] (\name) at \pos {$5;(2,0)$, B};
        \foreach \pos/\name in {{(-4,3)/BB}}
        \node[vertex] (\name) at \pos {$3;(2,2)$, A};
        \foreach \pos/\name in { {(-6,1)/BBB}}
        \node[vertex] (\name) at \pos {$1;(4,2)$, B};
        \foreach \pos/\name in { {(-4,1)/BBC}}
        \node[vertex] (\name) at \pos {$0;(5,2)$, B};
         
         \foreach \pos/\name in {{(-1,3)/BC}}
        \node[vertex] (\name) at \pos {$2;(2,3)$, A};
        \foreach \pos/\name in {  {(-2,1)/BCB}}
        \node[vertex] (\name) at \pos {$0;(4,3)$, B};
        \foreach \pos/\name in {  {(2,5)/C}}
        \node[vertex] (\name) at \pos {$4;(3,0)$, B};
 `````
        \foreach \pos/\name in { {(1,3)/CB}}
        \node[vertex] (\name) at \pos {$2;(3,2)$, A};
        \foreach \pos/\name in { {(4,3)/CC}}
        \node[vertex] (\name) at \pos {$1;(3,3)$, A};
    \foreach \pos/\name in { {(0,1)/CBB}}
        \node[vertex] (\name) at \pos {$0;(5,2)$, B};
           
    \foreach \source/ \dest /\weight in {S/B/2, S/C/3,
                                         B/BB/2, B/BC/3,
                                         BB/BBB/2,BB/BBC/3,
                                         BC/BCB/2,C/CB/2, C/CC/3, CB/CBB/2
                                        }
\path[edge,thin] (\source) -- node[weight] {$\weight$} (\dest);
\end{small}
\end{tikzpicture}
}
\caption{
The picture is an Extensive Form Game tree, and it illustrates the self-interest game  {\sc Squirrel Pebbles} where the squirrels remove 2 or 3 nuts, and where the initial heap size is 7; Alice (A) starts, and the players take turns. Each node shows the heap size, the current cumulation for each player, and the current player. In Figure~\ref{fig:ext_23} we compute the PSPE of this grounded position.\label{fig:ext_23-squirrel}}
\end{figure}

{\sc Squirrel Pebbles} P4 is obviously not zero-sum, and although very simple, we believe that the general class of such self-interest games has not yet been studied in the literature of {\sc Subtraction Games} (or elsewhere).  This is probably due to the fact that the combinatorial game tradition is rooted in recreational games,  and most recreational two-player games are considered as win/loss/drawn games.\footnote{Note that some classical recreational $n$-player games have ingredients of squirrel play and/or {\sc Wealth Pebbles}. E.g. in {\sc Monopoly} players try to accumulate as much wealth as possible, and the ranges of available actions depend on how much they have accumulated so far. Of course {\sc Monopoly} is not a combinatorial game, because it has random moves and hidden cards, but the analogy is close enough to motivate a natural class of wealth games in our discussions. Note also that a player who cannot play because they are ruined are removed from the game, so it is in a sense close to the normal play convention.} 


We expand the framework of combinatorial games to include P4-P6, among others, while matching the existing literature of Game Theory at large, and in particular embracing the conventions of Extensive Form Games. This is a natural way forward, since all Combinatorial Games, with a given starting player, are Extensive Form Games. As a consequence, our approach provides an additional set of tools to Extensive Form Games such as a variation of starting player, outcome functions and the disjunctive sum operator.  

 We develop a ruleset family large enough to encompass any Extensive Form Game, and small enough to not obscure the main direction, and distinctions we wish to address. 
  The first main result (Theorem~\ref{thm:main1}) studies a vast generalization of the CGT-outcome function, and it includes all six {\sc pebbles} variations, and much more.   
 The second main result (Theorem~\ref{thm:main2}) studies a ruleset restriction, which we will call {\sc Heap Dynamic}, that includes P1-P4 but (probably) not  {\sc Auction Pebbles} P5 and (definitely) not {\sc Wealth Pebbles} P6. The extensions to P5, where equilibrium strategies can depend on cumulations, and P6, where rules  depend on  cumulations, are essential for various modelling purposes, but they may be less efficient in terms of computing utilities and outcomes.

\subsection{The three layers of a Cumulative Game Form}\label{sec:layers}
As mentioned, a key difference between EGT and CGT is that the EGT description of a game requires a particular initial state together with a specification of which player plays in every game state.  

In contrast, for various theoretical reasons, CGT a priori does not fix such information. If we claim to understand a CGT ruleset, we must understand every position (as a potential starting position), and if we claim to understand a specific position, we require information about the optimal moves for any starting player. At a higher level, if we claim to fully understand a particular position, we must understand how it behaves when played in a disjunctive sum with other games (within some super class of games). In CGT, whenever the terminology `game value' is used it refers to a simplest form of an equivalence class of games at this level; this is known to exist in the so-called normal play setting, but might not exist in other conventions.\footnote{Recent research; in particular Siegel's study in \cite{Si3} dwells a lot further on these type of problems.} 

On the other hand, an EGT `game value', if it exists, refers to an equilibrium in a particular game instance, as in the first paragraph. 

In order to bridge this gap, we will advance slowly, by first defining an $n$-player {\em Cumulative Game Form} (CGF) by using three layers, to be specified in the various settings (Definitions~\ref{def:cumsub} and \ref{def:cumgameform}). In this way we can apply both CGT and EGT concepts. Let us sketch the idea of the three layers here, in full generality, i.e. in a setting with $d\ge 1 $ heaps and $n\ge 2$ players.
\begin{description}
\item[Layer~1.] The \emph{heap space} $\Omega(d,n)$ contains all $d$-tuples of finite heap sizes, where each heap, in every tuple, memorizes an $n$-tuple of cumulations;\footnote{In this layer, we use the term ``memorize'', because as we will see later, the cumulations will be updated via certain reward functions. For full generality, we require a cumulation vector on each heap, for example if we want to model a disjunctive sum play, where all heaps are independent. See the ``layered example'' just below.} 
\item[Layer~2.] A \emph{heap position}, $\bs\omega\in\Omega$, is an instance of Layer~1,  together with an $n$-player ruleset \eqref{eq:recursive};

\item[Layer~3.] A \emph{grounded position} is a heap position, or a {\em disjunctive sum}  \eqref{eq:disj} of heap positions,\footnote{There are other sum-operators in the CGT-literature, but the disjunctive sum is by far the most popular.} together with a {\em current player}, which is specified via  a {\em turn function}.
\end{description}



For each player, for each heap position, the ruleset in Layer~2 describes their set of options.\footnote{A {\em ruleset} describes what each player can do at their turn (such as an instance of {\sc Pebbles}).  For each player, it specifies what actions they can take on the heaps, which may depend on the cumulations (as in P6 {\sc Wealth Pebbles} above), and what consequently happens to the individual cumulations on the heaps.} Thus, we can identify a heap position  with the unique $n$-tuple that describes the options for all players, i.e. 
\begin{align}\label{eq:recursive}
\vec\omega=\left(\vec\omega^i\right)_{i\in [n]}, 
\end{align}
where $\bs\omega^i$ denotes the set of options of Player~$i$ on heap position $\bs\omega$. Each option belongs to the same heap space. 
Note that this construction does not require the notion of a turn function.

Let us illustrate the three layers with a CGT-type game tree (see more examples in Section~\ref{sec:motivnorm}). The \textbf{Layer~1} heap space includes all the $\binom{7+1}{3}$ partitions of $7$ pebbles into the heap and the pockets of player~1 and 2.  A \textbf{Layer~2} heap position is one such partition, together with the (at most) four partitions by moving 2 or 3 pebbles from the heap to one of the players, see Fig.~\ref{fig:CGT_23squirrel4}. Finally, a \textbf{Layer~3} grounded position also specifies the player, and thus corresponds to single node in Fig.~\ref{fig:ext_23-squirrel}. Note that there are other grounded positions not shown since they cannot be reached from the root, e.g. $6;(1,0)$).    

\tikzstyle{edge} = [draw,thick,-]
\tikzstyle{weight} = [font=\small, above]
\begin{figure}[!ht]
\centering{
\begin{tikzpicture}[scale=1.1, auto,swap]
\tikzstyle{vertex}=[circle,fill=gray!10, minimum size=40pt, inner sep=0pt]
    \begin{small}
        \foreach \pos/\name in { {(0,0)/S}}
        \node[vertex] (\name) at \pos {$4;(0,0)$};    
         \foreach \pos/\name in {{(-3,-2)/B}}
         \node[vertex] (\name) at \pos {$2;(2,0)$};
         \foreach \pos/\name in {{(-1,-2)/C}}
         \node[vertex] (\name) at \pos {$1;(3,0)$};
          \foreach \pos/\name in {{(3,-2)/D}}
         \node[vertex] (\name) at \pos {$2;(0,2)$};
         \foreach \pos/\name in {{(1,-2)/E}}
         \node[vertex] (\name) at \pos {$1;(0,3)$};
         \foreach \pos/\name in { {(-4,-4)/F}}
         \node[vertex] (\name) at \pos {$1;(2,2)$};
         \foreach \pos/\name in { {(-2,-4)/G}}
         \node[vertex] (\name) at \pos {$0;(4,0)$};
         \foreach \pos/\name in { {(4,-4)/H}}
         \node[vertex] (\name) at \pos {$1;(2,2)$};
         \foreach \pos/\name in { {(2,-4)/I}}
         \node[vertex] (\name) at \pos {$0;(0,4)$};
         
           
    \foreach \source/ \dest /\weight in {S/B/{\hspace{-2mm}2}, S/C/{\hspace{-2mm}3}, S/D/{\hspace{2mm}2}, S/E/{\hspace{2mm}3}, B/F/{\hspace{-3mm}2}, B/G/{\hspace{3mm}2},   D/H/{\hspace{3mm}2}, D/I/{\hspace{-3mm}2}        
                             }
\path[edge,thin] (\source) -- node[weight] {$\weight$} (\dest);
\end{small}
\end{tikzpicture}
}
\caption{ The picture is a CGT-type game tree, and it illustrates a heap position of the self-interest game  {\sc Squirrel Pebbles} where the squirrels remove 2 or 3 nuts, and where the initial heap size is 4; each node shows the heap size, and the current cumulation for each player. The left slanting edges represent Alice's options, whereas the right slanting edges represent Bob's options.}\label{fig:CGT_23squirrel4}
\end{figure}

\paragraph{Disjunctive sum}
Now, we are prepared to give the recursive description of the disjunctive sum operator, ``$+_{\! n}$''. The subscript `$n$' accentuates that we have to fix the number of players if we want to add heap positions, while, as usual,  the number of heaps may vary. Suppose that we have two heap positions, $\bs\omega_1\in\Omega(d_1,n)$ and $\bs\omega_2\in\Omega(d_2,n)$. Then 
\begin{align}\label{eq:disj}
\bs\omega_1+_{\! n}\bs\omega_2=\left(\left(\bs\omega_1+_{\! n}\bs\omega_2^i\right)\cup \left(\bs\omega_1^i+_{\! n}\bs\omega_2\right)\right)_{i\in [n]},
\end{align}
where e.g. $\bs\omega_1+_{\! n}\bs\omega_2^i=\{\bs\omega_1+_{\! n} \vec\omega' \mid \vec\omega'\in \bs\omega_2^i\}$. The disjunctive sum $\bs\omega_1+_{\! n} \bs\omega_2$ belongs to the heap space $\Omega(d_{1}+ d_{2},n)$; see Section~\ref{sec:disum} for further discussions. 




The {\em turn function} determines the current player (at each node in the play sequence). Usually, in CGT terms, it depends only on the previous player, but for full generality (as required by Extensive Form Games) the heap space will be included in the domain.


The $d$ heaps may be independent, in the sense that every possible move for every player affects precisely one heap. In this case, the situation is identical to disjunctive sum play, and we think of the total ruleset as a Cartesian product of the individual heap rulesets. 

A Cumulative Game Form is played by declaring a grounded position, which may be a disjunctive sum, and which includes a turn function as on Layer~3. 
Note that a game cannot be played on the first two layers; these are theoretical building blocks. 
The heap positions on Layer~2 are sometimes called \emph{game components}, as they are prepared for  disjunctive sum play, together with other game components. We cannot apply the disjunctive sum operator on Layer~3, but we can instead ground a disjunctive sum game in order to play it. This gives room for plenty flexibility: we may ground a single component, or we may, at some later point, ground a number of heaps with various rulesets, to merge them into a larger ruleset. 

The reader might have observed two issues that require further discussion. 
\begin{itemize}
\item[1)] Will game play, from a grounded position, terminate?  
\item[2)] Any grounded  position can be played, but yet without any incentive. 
\end{itemize}
To address the first item, we use a standard CGT property, that allows standard recursive constructions of game forms, outcomes, values etc. 

\begin{property}[Ruleset Feasibility]
A ruleset is {\em feasible} if, for every starting position, a game terminates independently of order of play.
\end{property}
Thus {\em feasibility} is a property of the game's Layer~2 description. In particular it guarantees that a disjunctive sum of any two positions with feasible ruleset is also feasible. 

Regarding the second item, when we add a utility function to a game form, it becomes a game. Various incentives will be given by utility functions, and player maximization of individual utilities is the purpose of a game. As we will see in Section~\ref{sec:extform}, a grounded position, together with player utilities, is an (EGT) Extensive Form Game. Here, we wish to emphasize the distinction between the three layers, and in particular the often overlooked gap between Layers~2 and~3. When a utility function is added to a CGF, we call it a  {\sc Cumulative Game}.\footnote{A {Cumulative Game Form} is a ruleset family, where any utility function may be added later. A {\sc Cumulative Game} is. a ruleset family where utilities have been specified. }

\paragraph{Outcome classes}

Zermelo's theorem \cite{Ze} is the foundation of both the outcome classification in combinatorial game theory and the recursively computed pure-strategy perfect equilibrium (PSPE) in extensive form game theory. Despite this shared origin, the two game-theoretic frameworks have diverged over time. Here, we bring them back into contact. In our setting, Zermelo’s result still applies, provided that, in the case of indifference between options, each player possesses a strict preference ordering over the other players' results, including whether they should act friendly or not. Our generalized CGT outcome function computes a vector of PSPE values, depending on the starting player and a specified turn function; each entry should be evaluated (using backward induction) based on  when the current player cannot move.

Let us revisit our layer construction in light of defining a CGT-style outcome function. The heap positions on Layer~2 are more sophisticated than they might initially appear. Taken independently, they are already well-suited to disjunctive sum constructions. With additional structure provided by Layer~3, we can assign them generalized CGT outcomes that, in some cases, coincide with generalized EGT (PSPE) utilities. See our main results in Section~\ref{sec:valout}.

\paragraph{Ruleset.} Here we use the term {\em ruleset} in the sense of Layer~2. At this level, any player knows what they can do, as a current player, and the consequences of the actions are revealed in terms of an immediate update of cumulations; later we use a {\em reward} function for this purpose. But ``a ruleset'' is also standard language, and we will use it in other contexts as well, for example if we want to include the information about the turn function and/or the utilities of the players. All this is usually understood when the term  ``ruleset'' is used. In a sense, the Layer~2 definition of a ruleset is minimalistic: that is the least information we have to give.\\

Sometimes the rules do not depend on whose turn it is. 
\begin{property}[Symmetric Rules]\label{prope:sym}
Rules are \emph{symmetric} if, for all positions, the move options are independent of player.\footnote{In CGT, the common term for symmetric is ``impartial''; but those games are not `impartial' in the economic sense, so we avoid this traditional CGT terminology here.} 
\end{property}

\subsection{A layered disjunctive sum example}\label{sec:laydissum}
Suppose we combine P2 (mis\`ere play) and P4 (squirrel play) in a two-player alternating play game. Thus the heap space contains all pairs of non-negative integers $(x_1,x_2)$, where each $x_i$ comes with a pair of cumulations  $(C_1^i,C_2^i)$. 
So a typical heap position takes the form $$((x_1,(C_1^1,C_2^1)), (x_2,(C_1^2,C_2^2)).$$ 
Since, we are modelling a situation with individual components, this position can also be written as $(x_1,(C_1^1,C_2^1)) \, +_{2}\, (x_2,(C_1^2,C_2^2))$, by using the two-player disjunctive sum operator `$+_{2}$`.

As usual in disjunctive sum theories, `best play' in individual components will not necessarily guide us in understanding `best play' in the sum of the games (for those CGT readers, the ``outcome'' of the ocmponents may not suffice to compute the ``outcome'' of the sum).

A typical grounded position is of the form $$((x_1,(C^1_1,C^2_1))+_2(x_2,(C^1_2,C^2_2),p),$$ where $p$ is the previous player. 

Now, if we would like to invite Alice and Bob to play this disjunctive sum game, we may set $p$ to Bob, so Alice starts from say the grounded position 
\begin{align}\label{eq:exground}
((4,(0,0)), (4,(0,0)),\rm{Bob}).
\end{align}
Recall that the rulesets on the individual heaps are: take either 2 or 3 pebbles from precisely one of the heaps, and add whatever you take from a heap to that heap cumulation. The mis\`ere play component is obviously cumulation independent: the utility will be negative for the player who makes the last move in that component, and positive for the other player, say $-1$ and $+1$ respectively.\footnote{We are turning normal and mis\`ere play into scoring play here in order to be able to add utilities in a set up with general sum games. From a theoretical point of view, some confusion might arise in that we must assign a mis\`ere utility to a component that is already terminal (without any last move). This can be modelled by allowing a last move to a negative heap size on that particular component, that will give a negative utility. But for the purpose of this initial example it is not important, since there will be a last move on that component, which in mis\`ere play should be punished with a negative utility.} The self-interest component will have the natural cumulative scoring principle: ``you get what you take'', 2 or 3 pebbles, until the game ends, and, for each player, the final utility will be the sum of the heap  utilities.

Suppose that Alice starts by removing three pebbles from the self-interest component in \eqref{eq:exground}. Now the position is:
$$((4,(0,0)), (1,(3,0)),\rm{Bob}),$$
and no more move is possible on the second component. 
If Bob removes two pebbles from the first heap, then Alice will have to make the last move in the game, by removing two pebbles from the same heap. The final utilities will be $(2,1)$. Alice has a better opening move, via a sacrifice  (Remark~\ref{rem:greedy}) on the self-interest heap, in which case the utility pair will instead be $(3,1)$, if Bob makes the sensible move and responds on the same heap. If we instead set the respective mis\`ere utilities to $-1/2$ and $1/2$, then Alice is indifferent between greedy and non-greedy play on the self-interest heap. And if we set the utilities to $-1/4$ and $1/4$, then she should play greedily. So, best play in a disjunctive sum of games can be sensitive to the magnitude of the utilities we assign on the individual heaps.

\begin{remark}[Greedy Play]\label{rem:greedy}
\emph{Greedy} play means to play the largest action possible, and \emph{sacrifice} means to  play some smaller action. E.g. in the PSPE play sequence above there is one sacrifice, in the first move from a heap of size 7. The necessity of sacrifices in certain situations was observed in \cite{St} via a certain periodicity conjecture, which was solved in \cite{CLMW}. 
\end{remark}

To our best knowledge, these type of problems have not yet been studied in the literature. However in this study we will not dwell further on how to select utilities to model various situations, but we will instead set a framework to open up for future modelling problems.

 \subsection{A motivation from normal play CGT}\label{sec:motivnorm}
 The classical two-player, alternating play, normal play theory lays the ground for CGT. Let us review some aspects that motivate this study. 
 \paragraph{The normal play outcome function.}
 An outcome function assigns to each game one out of four outcome classes, which describes who wins, depending on who starts, under optimal play. Therefore it is an extension of the pure subgame perfect equilibrium in EGT, that, in CGT, opens up a variety of useful information about the game, as we explain next. 

The CGT players are usually called Left 
(positive) and Right (negative). The partizan play outcomes are $\Le$ (Left wins independntly of who starts), $\Ne$ (the curreNt player wins), $\Pre$ (the Previous player wins), and $\Rig$ (Right wins independently of who starts). These outcomes are partially ordered, with $\Le>\Ne>\Rig,\Le>\Pre>\Rig$, but where \Ne\ and \Pre\ are incomparable. In impartial play, we have only the incomparable outcomes \Ne\  and \Pre.

 \paragraph{The tabular approach.}

Consider $\s=\{2,3\}$, as in {\sc Normal Pebbles}, P1. The initial outcomes are: 
\begin{center}
\begin{tabular}{|c| c c c c c c c c |}
\hline
$x$		&0 &1 &2 &3 &4 &5 &6 &7  \\ 
\hline
 $o(x)$ 	&\Pre &\Pre &\Ne &\Ne &\Ne &\Pre  &\Pre &\Ne  \\ 
\hline
\end{tabular}
\end{center}
They are computed recursively by checking if there is a \Pre-position as an option. If so, write an \Ne, and otherwise, write a \Pre. Observe how much simpler, and broader, this approach is than the standard method used for Extensive Form Games in Figure~\ref{fig:ext_23-squirrel}. For example, this approach does not fix the starting position, but it is reaadily applicable to all Layer~1 positions. By using the EGT approach, we would have to draw a new game tree for every new starting position. We will see in Section~\ref{sec:cumsub} how this normal play table approach can be generalized to the setting of self-interest utilities, and further down, as the main results Theorem~\ref{thm:main1} and Theorem~\ref{thm:main2}, we study how the general class {\sc {\sc Cumulative Game}s} can be restricted so that this tabular approach is still viable. 
Our next table, shows that our table approach generalizes to arbitrary move options for the players, and we compute simultaneously the optimal outcome for either player as a starting player.

We let $\s=(\{2,3\}, \{1,4\})$ denote a partizan ruleset where \Lu\ subtracts 2 or 3 and \Ri\ subtracts 1 or 4 if the resulting position is nonnegative. The initial outcomes are as in the second row:

\begin{center}
\begin{tabular}{|c| c c c c c c c c c c|}
\hline
$x$		&0 &1 &2 &3 &4 &5 &6 &7 &8&9 \\ 
\hline
$o(x)$ 	&\Pre &\Rig &\Ne &\Le &\Rig &\Ne  &\Le &\Pre &\Ne&\Le \\ 
\hline
$o_L(x)$ &R &R &L &L &R &L &L &R&L& L \\ 
$o_R(x)$ 	&L &R &R &L &R &R &L &L &R& L\\ 
\hline
\end{tabular}
\end{center}
These outcomes are computed recursively: write an \Ne, if Left can move to \Le\ or \Pre\ and Right can move to \Rig\ or \Pre; write a \Pre, if Left cannot move to \Le\ or \Pre\ and Right cannot move to \Rig\ or \Pre. Similarly, if only Left (Right) has a winning move as a starting player, then write a \Le\ (\Rig). For rows three and four, note that the standard CGT outcomes can be decomposed as $o(x)=(o_L(x),o_R(x))$, where $o_L(x)$ denotes the result in optimal play when Left starts, and $o_R(x)$  deontes the result in optimal play when Right starts; in the table `L' is `Left wins' and `R' is `Right wins'. By using this notation, the tabular approach will resemble better a more general approach for {\sc Cumulative Games} (see Section~\ref{sec:cumsub}): here, write $o_L(x)=\mathrm{L}$ if and only if $o_R(x^L)=\mathrm{L}$ for some Left option $x^L$, and write $o_R(x)=\mathrm{R}$ if and only if $o_L(x^R)=\mathrm{R}$ for some Right option $x^R$. 

 \paragraph{Recursive game and disjunctive sum definitions.}
 We have defined these concepts in Section~\ref{sec:layers}. They have the same interpretation in two-player normal play games. The game forms are recursively defined via the notation $G=\{G^\mathcal L\mid G^\mathcal R\}$, where e.g. $G^\mathcal L$ is the set of Left options, and similarly for the disjunctive sum.
 
\paragraph{The CGT game tree approach.}
Below, we draw the standard CGT game tree for the starting position with three pebbles in the previous table. The CGT tradition lists all options of every sub-position of a game, and left slanting edges are used for Left options, while right slanting edges are used for Right options.  As the table shows, this is an \Le-position, but as we mentioned the outcome class does not suffice if we want to play this game in disjunctive sum with other games. The game tree approach suffices, and as we will see, we can simplify the game tree to a canonical form, in fact (for those CGT readers) $G=\{0,-1\,|\,\{0\,|\,-1\}\}={\rm Tiny}(1)>0$, where the second Left option is {\em dominated}. We emphasize that, as in the case of the outcomes, the game theoretical value is computed by not fixing a starting player (at any sub-positions). The second equality sign ``='' is explained in the next paragraph.\\ 

\begin{center}
\begin{tikzpicture} [scale = 0.6]
\draw (0,0) -- (-2,-2);
\draw (0,0) -- (-4,-2);
\draw (-2,-2) -- (-1,-4);
\draw (0,0) -- (2,-2);
\draw (2,-2) -- (3,-4);
\draw (2,-2) -- (1,-4);
\draw (3,-4) -- (4,-6);
\put(-2,5) {$G_1$};
\foreach \Point in {(0,0),(-2,-2),(2,-2),(-4,-2),(3,-4),(4,-6), (-1,-4),(1,-4)}{
\fill \Point {circle[radius=2.5pt]};
}
\end{tikzpicture}
\end{center}


\paragraph{Partial order, group structure, play comparison, reduction theorems and canonical forms.} The two CGT key concepts, ``outcome function'' and ``disjunctive sum'' induce a partial order relation over zero-sum games within a given class, where $G\geq H$ if the outcome of $G+X$ is never worse for the maximizing player, Left, than the outcome of $H+X$, independently of how $X$ is chosen. For normal play this definition turns out to be particularly powerful: every two games can be compared constructively, by essentially playing them out together. Normal play games constitute a group structure, where the negative of a game $G$, denoted $-G$, is the game where the two players have swapped positions. The main theorem of normal play says that $G\geq H$ if and only if player \Lu\ wins the composite game $G+(-H)$ playing second, and $G=H$ if and only if the previous player wins $G-H$. Moreover this result implies a bijection of the natural partial order relations with the four outcome classes, a unique feature within combinatorial games.\footnote{Although, this simple and elegant idea fails for other universes of games, it has still been demonstrated to lay a foundation for larger classes of CGT, namely, first Guaranteed Scoring games \cite{LNS,LNNS}, and then for Absolute CGT \cite{LNSabs1,LNSabs2}, that includes various mis\`ere play settings. By generalizing to general sum games (such as self-interest), however, the normal play foundation seems to break, mainly because games cease to be purely competitive, and the concept of a normal play order embedding, as paved the way forward towards Absolute CGT, does not appear to make much sense in the larger class we study here. The big open problem is what to do instead to have interesting subclasses of games with constructive game comparison, if possible at all. (Another possible direction would be to weaken the overlaying idea that games comparison must include the ``for all $X$'' part, if this can be justified by interesting applications.)}

This is a very satisfactory result in itself, and it becomes even more interesting while the reductions, above mentioned {\em domination} and {\em reversibility} are powerful reduction tools.\footnote{Please see any of the standard texts about combinatorial games for these definitions.} The game form that results after all possible reductions, in any order, is a unique simplest form, so we can talk about normal play {\em game values}, a.k.a. {\em canonical forms}. 

The above game example $G_1$, reduces by dominating one of its Left options. Let us give an example of a game that reduces the rank of the game tree (and not the width). The literal form of $G_2$ is $\{*|*\}$. However reversibility ensures that $G_2=0$. This can be verified directly by the  normal play main theorem, namely the current player loses.\\

\begin{center}
\begin{tikzpicture} [scale = 0.6]
\draw (0,0) -- (-2,-2);
\draw (-2,-2) -- (-1,-4);
\draw (-2,-2) -- (-3,-4);
\draw (0,0) -- (2,-2);
\draw (2,-2) -- (3,-4);
\draw (2,-2) -- (1,-4);
\put(-2,5) {$G_2$};
\foreach \Point in {(0,0),(-2,-2),(2,-2), (-1,-4),(1,-4), (3,-4),(-3,-4)}{
\fill \Point {circle[radius=2.5pt]};
}
\end{tikzpicture}
\end{center}

On a note, some subclasses of normal play games have even stronger properties. For example, every impartial (symmetric) game $G$, however complicated, is equivalent (under the above relation) to some single-pile {\sc Nim} game of size $n_G$. This means not only that the same player wins in $G$ and in $n_G$, but that we may replace the game $G$ in every possible context (e.g. $G \, +$ {\sc Domineering}) with the {\sc Nim} pile $n_G$, and the results would stay the same. The origin of this result is the classical Sprague and Grundy theory \cite{Sp, G}, where all games are impartial, and it is not hard to see that it can be extended to all normal play games. 

\paragraph{Approximation approaches.}
The equivalence classes are large in normal play CGT (for example the 0-class contains all games $G-G$), which often implies powerful reductions. Even so, the canonical forms tend to grow rapidly into, for a human eye, an incomprehensible  jumble. Luckily there are efficient approximation tools in normal play CGT, such as reduced canonical form, mean value, temperature, and atomic weights. This is another reason why normal play has become very popular. Mis\`ere play receives more and more attention, and has an extremely rich theory from other points of view. But note that the equivalence class of zero in mis\`ere play without any restrictions is the trivial one; the current literature studies a multitude of restrictions to obtain larger equivalence classes, constructive comparison, and more powerful reductions. Scoring play combinatorial games are similar to mis\`ere play in this respect. From this point of view, we are looking forward to the development of self-interest (general sum) combinatorial games in the form of {\sc {\sc Cumulative Game}s} or otherwise.

\paragraph{The current or the previous player.} The player about to move is the \emph{current player}.\footnote{The current player is traditionally called the ``next player'', which is a bit unfortunate, because in a current position, the previous player and the next player would be the same player, which is not what is intended. Hence, we will not adapt to that convention here. Whenever we use the term ``next player'' it referres to the player succeding the current player.}  The other player is the {\em previous player}. In normal play, the previous player win positions are exactly the zeros modulo equivalence, while those positions for which the current player wins are much more diverse (in fact they are all confused with zero). Alternating play is too restrictive in a typical EGT setting. The turn function will be generalized further, by adapting conventions from EGT. 
Extensive Form Games carries the information of a current player in every node, so the notion of a ``previous player'' have yet small relevance in EGT. 
The hybrid developed here opens up for the possibility of mixing the best out of both worlds. 
 
 \subsection{Why cumulative play?}
 One intuitive idea of cumulative play is that ``you get what you take'', and you put it in your own pocket for later use. This idea is simpler than a typical zero-sum setting \cite{CLMW}, where the players accumulated scores must be compared to access the utility of play. In a purely self-interest setting no comparison is required. In a self-interest setting, it is more natural to allow for several players, whereas in a zero-sum setting two players is the generic model. 
 
 Accumulation of points can have various other interpretations, as is discussed in {\sc Auction Pebbles} P5 (see also Example~\ref{ex:auc}) where `cumulations' are interpreted as bids in an English Auction type setting, and where we discover that rational play might depend on cumulations, even when the ruleset does not depend on them. In a more `economic setting' one would most likely want to study {\sc  Cumulative Games} with actions depending on current accumulations, as in {\sc Wealth Pebbles}, P6. While we will not dive much deeper into the particulars of that territory in this paper (see \cite{Ankita} for results on a particular ruleset {\sc Robin Hood} in this family), these type of rulesets  are embraced by  our first main result, Theorem~\ref{thm:main1}. 
 
 By laying a framework in a vast generalization of both CGT and EGT, a natural question is ``How much must we narrow down the vision to reach some tractable class of games/rulesets?''. And, from the other point of view, one could ask ``How far we can extend {\sc Subtraction Games} and {\sc Cumulative Subtraction}?'', while still maintaining a tractable (dynamic) outcome function. We suggest that for such a class relevant information during play should only depend on the heap sizes (and not the cumulations). 
 
 The prospects of comparing (sums of) {\sc Cumulative Games} in a broader economic setting is an ultimate goal of bridging CGT and EGT. This work lays a foundation for such studies, that could be extended to include other EGT topics such as random turn games, hidden information and more. We note that, although our current model does not include `nature nodes' with a random selection of move options, our {\em reward function} to come (for cumulation updates) may be stochastic, and by regarding expected values instead of deterministic rewards, all theory presented here will still go through. 

\subsection{More on outcomes}
Outcome functions were originally defined for two-player normal play CGT; it is essentially a pair of (von-Neumann~\cite{N}) minimax algorithms, and first used in the setting of scoring combinatorial games by Milnor \cite{M}. It was reconsidered in Stewart's PhD thesis \cite{St} for symmetric {\sc Cumulative Subtraction}, and simplified in \cite{CLMW}. See also \cite{E} for a generalization of Milnor type games, and \cite{LNS, LNNS} for interesting connections of scoring combinatorial games with normal play games. 
 Most of this work was done in the context of game comparison via disjunctive sum theory, which in every setting relies on a well defined outcome function. 

\subsection{Contribution}
{\sc Cumulative Games} is an advanced $n$-player ruleset family, which is a framework  that combines central concepts from both CGT and EGT. 
We build a vast generalization of the standard CGT-outcome function, now valid for any strategy profile; if the strategy profile is one in PSPE and if utility functions are self-interest, then the outcome function together with the initial cumulations correspond to an $n$-tuple of EGT game values; this is Theorem~\ref{thm:main1},  our first main result. 
Our ruleset family allows for an efficient equilibrium computation under certain restrictions. The result, Theorem~\ref{thm:main2}, which concerns the subfamily {\sc Heap Dynamic}, is the second main result of this paper; in a sense, we are presenting a vast generalization of the dynamic programming ``tabulaar approach'' that usually appears in the outcome (or nim value) computation for {\sc Subtraction Games} (Section~\ref{sec:motivnorm}) and its scoring play relative {\sc Cumulative Subtraction} (Section~\ref{sec:cumsub}).\footnote{Such tabular approaches are variations of the ``shift-register'' model, as shown in early research by Golomb \cite{Go}, with both practical and theoretical consequences. For example the model implies periodicity of one heap normal play {\sc Subtraction Games}.}  Moreover, the defined {\em disjunctive sum} operator in combination with partially ordered perfect play {\em outcomes}, let us define a partial order over games, according to the advantage that a certain player has. 
Theorems~\ref{thm:exthea} and \ref{thm:exthea2} demonstrate that {\sc Cumulative Games} encompass all Extensive Form Games.

\subsection{Outline}
In Section~\ref{sec:cumsub}, we define a class of two-player general sum combinatorial games called {\sc Cumulative Subtraction}, by using the mentioned three layered structure. We analyze this class using both EGT and CGT tools, and in particular, we introduce a CGT-inspired self-interest \emph{outcome function}, and demonstrate its usefulness. Thereby, several ideas in this paper are already present in this section in a greatly simplified form. 

The next sections are dedicated to generalizing these ideas. In Section~\ref{sec:cumgame} we present our general class of {\sc Cumulative  Games}, followed by some more background on Extensive Form Games in Section~\ref{sec:extform}. We demonstrate how {\sc {\sc Cumulative Game}s} can capture both classical zero-sum {\sc Subtraction Games} and our newer general-sum {\sc Cumulative Subtraction}, as well as many other variations. 

Section~\ref{sec:valout} generalizes the notion of a CGT-type outcome function, by adapting to extensive form terminology, whenever applicable. Here, we study  the conditions under which it is well defined and efficient. 

In Section~\ref{sec:disum} we return to one of the most important concepts from CGT, the disjunctive sum of games. 
We show that the defined disjunctive sum operator together with the defined outcome function in PSPE induce a CGT-like partial order over games. 

We close the loop back to EGT in Section~\ref{sec:equiv}, by showing that every Extensive Form Game is strategically equivalent to some grounded {\sc Cumulative Game}. In Section~\ref{sec:discussion}, we conclude by discussing some key issues and by giving some future directions. 

\section{Two-player Cumulative Subtraction}\label{sec:cumsub}

In this section we take an intermediate step towards full generality (Section~\ref{sec:cumgame}) and define a general-sum variation of the classical one-heap two-player {\sc Subtraction Games}. We name the new class of games by {\sc Cumulative Subtraction} (CS), although this term is also used in the special case of zero-sum games in \cite{CLMW} (which is also discussed here). 
For two-player {\sc Cumulative Subtraction}, we call the players by Player~1 and Player~2. 
Here, Layer~1 will be defined on one heap, while Layer~3 allows for several heaps in a disjunctive sum. The ruleset is defined on Layer~2, and it is independent of the turn function, which appears on Layer~3 as alternating play. 

\begin{definition}[Two-player Cumulative Subtraction]\label{def:cumsub}
A one heap two-player Cumulative Subtraction Form is defined on three layers. 
\begin{description}
\item[Layer~1.] The \emph{heap space} is  $\Omega (1,2)= \N_0\times \R^2$; 
\item[Layer~2.] A \emph{heap position} is of the form $\bs \omega=(x,(C_1,C_2))$, where $x\in\N_0$ is the heap size, and where $C_i\in \R$ is the accumulated size of Player~$i$'s pocket. The ruleset is described just below.
\item[Layer~3.] A \emph{grounded position} is a disjunctive sum of heap positions together with a current player $p$ in the alternating move convention. If played on one heap, we denote it by $(x,(C_1,C_2),p)$. 
\end{description}
The ruleset, which sometimes is referred to as {\em heap-size dynamic} \cite{HR},\footnote{Heap-size or ``pilesize'' dynamic rulesets are usually seen as generalizations of fixed or invariant rulesets; here the notion {\sc Heap Dynamic} will later apply as a restriction of the full class of {\sc Cumulative Games}.} is described via the option sets $\bs\omega_1$ and $\bs\omega_2$ from Equation~\eqref{eq:recursive}. Here they are: 

$$\bs\omega_1=\{(x-a, (C_1+a,C_2))\mid a\in \s(x)_1\}$$ and $$\bs\omega_2=\{(x-a, (C_1,C_2+a))\mid a\in \s(x)_2\};$$
where $\s:\N_0\rightarrow 2^\N\times 2^\N$ is a given {\em subtraction map},, that may depend on the heap size, but not the cumulation, and where, for all $x$, $\s(x)_1\cup\s(x)_2\subset [x]$, with in particular $\s(0)_1\cup\s(0)_2=\varnothing$. 
\end{definition}
Observe that Definition~\ref{def:cumsub} assures that every game will end as any action $a$ is a positive integer, and, for all $x,a$ the subtraction $x-a\geqslant 0$. But it does not specify who wins, or what are the player utilities. This allows for several variations,  which we discuss below. We will use the superscript `T', whenever we wish to emphasize that a position, heap size, cumulation etc. is terminal.

\begin{property}[Fixed Rules]\label{prope:fixed}
If there are subsets of the natural numbers $S_1,S_2\subset \N$ such that for all $x\in \N_0$, for $i\in \{1,2\}$, $\s(x)_i=S_i\cap \{0,\ldots ,x\}$, then the ruleset $\s$ is \emph{fixed}.\footnote{In the setting of symmetric (impartial) Subtraction Games, fixed rules have also been called Vector Subtraction Games \cite{Go}, and Invariant Games \cite{DR}.} In case of fixed rulesets, we abuse notation and denote, in case of symmetry, $\s=S_i$, and  otherwise $\s=(S_1,S_2)$.
\end{property}

\subsection{From a CGT-type interpretation}
As we stated at the beginning of this section, {\sc Cumulative Subtraction} is inspired by the traditional CGT-type {\sc Subtraction Games} under normal play rules. Recall that in the normal play convention, which is fundamental to CGT, move-ability lies at the heart of the matter; it is never bad to have more moves. Utility based theories usually do not evaluate moveability. In fact, we have not yet encountered any such  discussion in the EGT-literature. Our model easily allows for a mixture of scenarios (generalizing Section~\ref{sec:laydissum}).

Closer to the current EGT models is the more recent \emph{zero-sum scoring variation}  \cite{CLMW}, where each Player~$i$ aims to  maximize the utility $C_i^\T-C_{-i}^\T$,\footnote{In EGT, the index $-i$ usually means ``the other player(s)''.} the difference of cumulations when the game ends. To align the CGT model further with  EGT, we here introduce a self-interest variant (an extension of  {\sc Squirrel Pebbles} P4), where each player aims to maximize their own terminal cumulation.  Even though the players remove objects from a common heap, there is often a clear distinction; see Section~\ref{sec:discussion} for some further discussion on this topic. 

\subsection{Towards an EGT-type interpretation}\label{sec:towEGT} 

In this section, we use two-player CS as a way to demonstrate the gap we discussed in the introduction; EGT is general in the sense that we can assign any utility function we want (e.g. some monotone function of both cumulations), and (to avoid distancing us too far from the standard CGT settings) the PSPE will be well defined with a unique outcome. Thus, for example,  we can  generalize the standard CGT model, to add and compare games. Note though that this is a first step, and one could envision various ways forward, for example with hidden information or random turns, to generalize {\sc Cumulative Subtraction} further. Our model already allows for randomized updates of cumulations, via expected values, as may be convenient for some learning model, although we skip the details of a formalization here.

In mainstream Game Theory, each player aims to maximize their own utility. We define the {\em self-interest utility} of each Player~$i$ in a terminal state $(x,(C_1,C_2),p)^\T$ as equal to $C_i$ (more general utility functions will be introduced in Section~\ref{sec:cumgame}). 
Therefore, every grounded position $(x,(C_1,C_2),p)$ together with a utility function, induces an Extensive Form Game. To decide on the game solution, we apply the \emph{pure subgame perfect equilibrium} (PSPE) solution concept. We provide a detailed definition in Section~\ref{sec:extform}, but for now it is sufficient to recall that PSPE means that every player selects an action that maximizes their  utility, in every subtree of the game. See Figure~\ref{fig:ext_23} for a self-interest interpretation of {\sc Cumulative Subtraction}. The PSPE has unique utilities because we assume that players have well defined preferences for the other players. For example, in case of two players, they may be  (i) \emph{antagonistic}, that is, in facing equal utility, they aim to minimize the utility of the opponent, or they may be (ii) \emph{friendly}, and this is when they maximize the other player's utility in case of indifference. This is the first variant of a combinatorial game (that we are aware of) that is not zero-sum. In case of zero-sum games, instead of PSPE,  we will refer to the simpler, but equivalent notion of {\em optimal play}.

\subsection{Cumulative Subtraction's outcome functions}
A key concept in CGT is the \emph{outcome function}. Here we will generalize this concept to include self-interest play. In this section, we restrict attention to one heap games. 

Recall Property~\ref{prope:sym} ``symmetric rules''.  

\begin{observation} 
In this section, rules are symmetric if the subtraction map is symmetric, that is, if, for all $x$, $\s(x)_1=\s(x)_2$. In this case, we abuse notation and drop the set-index, i.e. we redefine the subtraction map as $\s:\N_0\rightarrow 2^\N$.
\end{observation}

The outcome function of the symmetric zero-sum variation was recently defined, for fixed rulesets, in \cite{CLMW}.\footnote{Outcome functions have been defined for other scoring games under various names in the literature, e.g.~\cite{M, H, E, Si, LNS, LNNS}.} We can generalize to a {\em heap-size dynamic} variation, without further ado. It is particularly appealing in its concise one-line definition. 

\begin{definition}[Outcome Symmetric Zero-sum \cite{CLMW}]\label{def:outS}
Consider a symmetric ruleset $\s$ in {\sc Zero-sum  Cumulative Subtraction}. The outcome of a heap of size $x$ is $o_\zs\left(x^\T\right) = 0$ if $\s\left(x^\T\right)=\emptyset$, and otherwise 
\begin{align}\label{eq:}
o_\zs(x) = \max \{a-o_\zs(x-a)\mid a\in \s(x)\}.
\end{align}
\end{definition}

This definition is a part of the inspiration for this paper. It is a formalization of the mentioned ``tabular approach'' in this setting (see Example~\ref{ex:cumsub}). In \cite{CLMW} it is shown that this outcome is always nonnegative for fixed subtraction sets. This need not be the case for heap-size dynamic games. If, for example  $\s(2)=2$ and $\s(3)=1$, then $o(3)=-1$. Such things cannot happen for fixed rulesets. Of course, the point of this defineition, is the following result \cite{CLMW}. 

\begin{proposition}[Symmetric, Zero-sum, \cite{CLMW}]\label{prop:zsutility}
Consider a symmetric ruleset $\s$ in {\sc Zero-sum  Cumulative Subtraction}. At any grounded position \newline $(x,(0,0),p)$, suppose that $\left(x^\T,\left(C_1^\T,C_2^\T\right),p'\right)$ is a terminal position under optimal play. Then, the utility is $o_\zs(x)=C_1^\T-C_2^\T$. 
\end{proposition}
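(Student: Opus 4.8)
The plan is to prove, by strong induction on the heap size, a single invariant slightly more general than the statement, and then specialize. For an arbitrary grounded position $(y,(C_1,C_2),p)$ (so the current player is $\p$ and the previous player is $p$), I claim that every optimal (zero-sum, i.e. PSPE) play reaches a terminal position whose cumulations satisfy
\[ C_{\p}^\T - C_{p}^\T = (C_{\p} - C_{p}) + o_\zs(y). \]
Specializing to $(x,(0,0),2)$, where the current player is $\p=1$, gives $C_1^\T - C_2^\T = o_\zs(x)$; the inequality $o_\zs(x)\ge 0$ is then immediate, since $0$ always lies in the set over which the maximum defining $o_\zs$ is taken.

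The base case is a terminal $y$, where $\s(y)=\emptyset$: no move exists, play halts at once, so $C_{\p}^\T=C_{\p}$ and $C_{p}^\T=C_{p}$, while $o_\zs(y)=0$, and the invariant holds. For the inductive step at a non-terminal $y$, the current player $\p$ is \emph{forced} to choose some $a\in\s(y)$; this increases $\p$'s pocket by $a$ and swaps the roles of current and previous player. Applying the induction hypothesis to the heap $y-a$ and rearranging, the value player $\p$ secures by playing $a$ is $(C_{\p}-C_{p}) + a - o_\zs(y-a)$. As the game is zero-sum, $\p$ picks $a$ to maximize precisely this quantity, so the realized value equals $(C_{\p}-C_{p}) + \max_{a\in\s(y)}\bigl[a - o_\zs(y-a)\bigr]$.

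Comparing with $o_\zs(y)=\max\bigl(\{a-o_\zs(y-a):a\in\s(y)\}\cup\{0\}\bigr)$, the two expressions differ only in the extra option $0$ carried by $o_\zs$. Hence the whole argument reduces to the key lemma that \emph{at every non-terminal $y$ the forced maximum is already nonnegative}, $\max_{a\in\s(y)}[a-o_\zs(y-a)]\ge 0$, so that the $\{0\}$ is redundant away from terminal positions. This is the assertion that the current player can always guarantee at least a draw, and it is the crux. I would prove it by contradiction: if $o_\zs(y-a)>a$ for all $a\in\s(y)$, then each $y-a$ is non-terminal and its defining maximum is attained by some legal move $b\in\s(y-a)$ with $b=o_\zs(y-a)+o_\zs(y-a-b)>a$; choosing $a$ to be the \emph{largest} move available from $y$ and using $b\le y-a\le y$, the move $b$ would be a legal move from $y$ strictly larger than the largest one, which is absurd.

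The main obstacle is exactly this last lemma, and it is also where the precise ruleset hypotheses matter. The contradiction step needs that a best reply $b$ from the heap $y-a$ is still legal from the larger heap $y$, i.e. that the subtraction sets are nested, $\s(y-a)\subseteq\s(y)$, which holds automatically for classical subtraction games with a fixed set $\s$. Without such monotonicity both the lemma and the proposition break: for instance $\s(3)=\{3\}$, $\s(4)=\{1\}$ and $\s(y)=\emptyset$ otherwise yield $o_\zs(4)=0$, yet the forced play from heap $4$ ends with difference $-2$. I would therefore invoke the lemma under the standing subtraction-set (nestedness) assumption, with the induction above serving as the routine wrapper around it.
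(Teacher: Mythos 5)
The paper's own ``proof'' of this proposition is a single line---``This is immediate by Definition~\ref{def:outS}''---so any honest argument is necessarily a different route, and yours supplies exactly the content that the one-liner suppresses. Your induction invariant and its unwinding are correct, and you have correctly located the crux: Definition~\ref{def:outS} maximizes over $\{-o_\zs(x-a)+a \mid a\in\s(x)\}\cup\{0\}$, while the current player at a non-terminal heap is \emph{forced} to move, so the identity $o_\zs(x)=C_1^\T-C_2^\T$ holds only if the adjoined $0$ is redundant away from terminal positions. Your contradiction argument for that lemma is sound (the step $b=o_\zs(y-a)+o_\zs(y-a-b)\ge o_\zs(y-a)>a$ uses only $o_\zs\ge 0$, which the $\cup\{0\}$ clause gives for free), and you are right that it genuinely requires the nestedness $\s(y-a)\subseteq\s(y)$ so that the witness $b$ is legal from $y$. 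Your counterexample ($\s(3)=\{3\}$, $\s(4)=\{1\}$, $\s(y)=\emptyset$ otherwise) is a valid symmetric ruleset in the sense of Definition~\ref{def:cumsub} and Property~\ref{prope:sym}, yet forced play from $x=4$ yields utility $-2$ while $o_\zs(4)=0$; so the proposition as literally stated, quantified over all symmetric rulesets, fails, and needs an extra hypothesis such as the fixed rulesets of Property~\ref{prope:fixed} (for which $\s(x)=S\cap\{0,\ldots,x\}$ is automatically nested)---which is the actual setting of \cite{CLMW}, where nonnegativity of symmetric outcomes is a theorem rather than a definitional artifact. In short: your proof is correct under the hypothesis you flag, the reduction to the key lemma is the real mathematical content here, and the scope restriction you identify is a gap in the paper's statement rather than in your argument.
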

\begin{proof}
This is immediate by Definition~\ref{def:outS}.
\end{proof}
Although the rules of game are more natural for self-interest than zero-sum ditto, the self-interest outcome function will require some extra consideration, namely the requirement of a tie-break rule for each player, in case of indifference. 

The outcome at every position $x$ should specify a value for both players, and this pair depends on who starts. Thus, in the general case, the {\em outcome} requires a $2\times 2$ matrix (that is generalized to an $n\times n$ matrix in Section~\ref{sec:cumgame}). For now we can use a simpler notation. 

The outcome is defined with respect to Player~1 as starting player, and by symmetry, the  outcome when Player~2 starts is obtained by reversing the player utilities. To simplify notation,  for symmetric rulesets we write $\s(x)=\s(x)_1=\s(x)_2$.

 
\begin{definition}[Outcome Symmetric Self-interest  Antagonistic]\label{def:outCSgs}
Consider a ruleset $\s$ in {\sc Self-interest  Cumulative Subtraction}, under antagonistic play. The symmetric self-interest outcome of a heap of size $x$ is the pair $ o_\si(x) = ({o_\si^1}(x), {o_\si^2}(x))$, where $o_\si(x)=(0,0)$ if $\s(x)=\emptyset$, and otherwise 
\begin{align}
{o_\si^1}(x)&=\max\{{o_\si^2}(x-a)+a\mid a\in \s(x)\}, \label{eq:max}\\
{o^2_\si}(x)&={o_\si^1}(x-a^{\!*}), \label{eq:min}
\end{align}
where
\begin{equation}\label{eq:argmin}
 a^{\!*} = \argmin_{a\in \s^*(x)}o_\si^1(x-a),
\end{equation}  
and where  $\s^*(x)\subseteq \s(x)$ denotes Player~1's set of indifference actions from~\eqref{eq:max}.
\end{definition}
Let us recall the intuition of the indifference set; in general there can be many options that result in the same PSPE-value for the current player. All these options belong to the indifference set. We cannot assume apriori that the other player's utility is non-sensitive to the current players choice. For uniqueness of the computed values, a generic preference is declared before a game starts. 

When play is self-interest {\em friendly} in case of indifference, instead of $\argmin$ in \eqref{eq:argmin}, we choose $\argmax$, and otherwise the definition is the same. See Observation~\ref{obs:zssi} in Section~\ref{sec:discussion}, for discussions and conjectures on these variations of games and outcome functions.\footnote{Since the first version of this manuscript, Definition 3 has been generalized to allow any combination of antagonistic and friendly players \cite{BKLM}. Using this framework, the authors prove a monotonicity result for two-move games, namely that players cannot lose by acting friendly in cases of indifference.}


\begin{example}\label{ex:cumsub}
Consider the fixed symmetric subtraction set $\s=\{2,3\}$. An extensive form representation, playing from $x=7$, is in Figure~\ref{fig:ext_23}. All heap sizes and the moves between them (i.e., layer~1 up to $x=7$) are in Figure~\ref{fig:subtr23}(a); this DAG does not include cumulations. The computed outcome function under both zero-sum and self-interest variants is displayed in Figures~\ref{fig:subtr23}(b) and (c), respectively.

Note that the representation is smaller than the na\"ive representation in Figure~\ref{fig:ext_23}, and yet it still allows for computing the PSPE. The outcome in both variations is realized via the play sequence $7\mapsto5\mapsto 2\mapsto 0$. 

Both these representations are nice, but due to the Layers~1 and~2, it is much more convenient to present the outcomes in an easily extendable table, similar to the ones for normal play. Next, we display the zero-sum and self-interest outcomes for the fixed symmetric subtraction game $\s=\{2,3\}$. It turns out that, for this ruleset, the self-interest outcomes do not depend on the choice of tie-breaking rule.

\begin{center}
\begin{tabular}{|c| c c c c c c c c c c|}
\hline
$x$		&0 &1 &2 &3 &4 &5 &6 &7  & 8&9 \\ 
\hline
$o_\zs(x)$	&0 &0 &2 &3 &3 &1 & 0 & 1 & 2&3 \\ 
\hline
$o_\si(x)$	&(0,0) &(0,0) &(2,0) &(3,0) &(3,0) &(3,2)  &(3,3) &(4,3) & (5,3)& (6,3) \\ 
\hline
\end{tabular}
\end{center}
For example $o_\zs(6)=\max\{-o_\zs(3)+3,-o_\zs(4)+2\}=0$, and similarly $o_\si(6)=\overline{o}_\si(4)+(3,0)=(3,3)$, if $\overline o$ indicates swapped entries. Observe that the entry for $x=7$, self-interest coincides with the PSPE-computation in Figure~\ref{fig:ext_23}. 
\end{example}
\tikzstyle{edge} = [draw,thick,-]
\tikzstyle{weight} = [font=\small]
\begin{figure}[t]
\begin{tikzpicture}[scale=.85, auto,swap]
\tikzstyle{vertex}=[circle, fill=gray!10, minimum size=38pt, inner sep=0pt]
    \begin{small}
        \foreach \pos/\name in { {(0,7)/S}}
        \node[vertex] (\name) at \pos {$7;(0,0)$};    
        \foreach \pos/\name in {{(-2,5)/B}}
        \node[vertex] (\name) at \pos {$5;(2,0)$};
        \foreach \pos/\name in {{(-4,3)/BB}}
        \node[vertex] (\name) at \pos {$3;(2,2)$};
        \foreach \pos/\name in { {(-6,1)/BBB}}
        \node[vertex] (\name) at \pos {$1;(4,2)$};
        \foreach \pos/\name in { {(-4,1)/BBC}}
        \node[vertex] (\name) at \pos {$0;(5,2)$};
         
         \foreach \pos/\name in {{(-1,3)/BC}}
        \node[vertex] (\name) at \pos {$2;(2,3)$};
        \foreach \pos/\name in {  {(-2,1)/BCB}}
        \node[vertex] (\name) at \pos {$0;(4,3)$};
        \foreach \pos/\name in {  {(2,5)/C}}
        \node[vertex] (\name) at \pos {$4;(3,0)$};
 `````
        \foreach \pos/\name in { {(1,3)/CB}}
        \node[vertex] (\name) at \pos {$2;(3,2)$};
        \foreach \pos/\name in { {(4,3)/CC}}
        \node[vertex] (\name) at \pos {$1;(3,3)$};
    \foreach \pos/\name in { {(0,1)/CBB}}
        \node[vertex] (\name) at \pos {$0;(5,2)$};
           
    \foreach \source/ \dest /\weight in {S/B/2, S/C/3,
                                         B/BB/2, B/BC/3,
                                         BB/BBB/2,BB/BBC/3,
                                         BC/BCB/2,C/CB/2, C/CC/3, CB/CBB/2
                                        }
\path[edge,thin] (\source) -- node[weight] {$\weight$} (\dest);
    \foreach \source/ \dest /\weight in {S/B/2, 
                                         B/BC/3,
                                         BB/BBC/3,
                                         BC/BCB/2,C/CC/3, CB/CBB/2
                                        }
\path[edge,double] (\source) -- node[weight] {$\weight$} (\dest);
\node at (-8.5,7) {$p=1$};
\node at (-8.5,5) {$p=2$};
\node at (-8.5,3) {$p=1$};
\end{small}
\end{tikzpicture}
\caption{
This is a reiteration of Figure~\ref{fig:ext_23-squirrel}, to exemplify a PSPE analysis: the Extensive Form Game induced by the self-interest symmetric fixed subtraction set $\s=\{2,3\}$ {\sc Squirrel Pebbles} and the grounded position $(7,(0,0),2)$ (where Player~1 starts). In each node we show $x;(C_1,C_2)$, while the current player is shown to the left. The cumulations at the leaf levels are also the utilities of the players. The thick edges mark the PSPE of this game. \label{fig:ext_23}}
\end{figure}

\tikzstyle{vertex}=[circle,fill=green!15, minimum size=20pt, inner sep=0pt]
\tikzstyle{edge} = [draw,thick,-]
\tikzstyle{weight} = [font=\small]
\begin{figure}
\centering
\begin{tikzpicture}[scale=0.65, auto, swap]
    \node at (0,-1) {(a) Heap sizes};
    \foreach \pos/\name in {{(0,7)/7}, {(-1,5)/5}, {(1,5)/4},
                            {(-2,3)/3}, {(0,3)/2}, {(2,3)/1}, {(-1,1)/0}}
        \node[vertex] (\name) at \pos {$\name$};
        \foreach \pos/\name in { {(-3,1)/A}}
        \node[vertex] (\name) at \pos {$1$};
    \foreach \source/ \dest /\weight in {5/7/2, 4/7/3,
                                         2/5/3, 3/5/2,
                                         2/4/2,1/4/3,
                                         A/3/2,0/3/3,
                                         0/2/2}
\path[edge] (\source) -- node[weight] {$\weight$} (\dest);
\end{tikzpicture}
\hspace{3 mm}
\begin{tikzpicture}[scale=0.65, auto,swap]
    \node at (0,-0.5) {(b) CGT outcome};
    \node at (0,-1.3) {(0-sum)};
\tikzstyle{vertex}=[circle,fill=gray!10, minimum size=20pt, inner sep=0pt]
    \foreach \pos/\name in {{(2,3)/A}, {(-3,1)/B}, {(-1,1)/C}}
        \node[vertex] (\name) at \pos {$0$};
        \foreach \pos/\name in { {(0,7)/D}, {(-1,5)/E}}
        \node[vertex] (\name) at \pos {$1$};
        \foreach \pos/\name in {  {(0,3)/F}}
        \node[vertex] (\name) at \pos {$2$};
 `````
        \foreach \pos/\name in { {(1,5)/G},{(-2,3)/H}}
        \node[vertex] (\name) at \pos {$3$};
    \foreach \source/ \dest /\weight in {E/D/2, G/D/3,
                                         F/E/3, H/E/2,
                                         F/4/2,A/G/3,
                                         B/H/2,C/H/3,
                                         C/F/2}
\path[edge] (\source) -- node[weight] {$\weight$} (\dest);
  \foreach \source/ \dest /\weight in {E/D/2,
                                         F/E/3, 
                                         A/G/3,
                                         C/H/3,  C/F/2
                                         }
\path[edge,double] (\source) -- node[weight] {$\weight$} (\dest);
\end{tikzpicture}
\hspace{3 mm}
\begin{tikzpicture}[scale=0.65, auto,swap]
\tikzstyle{vertex}=[circle,fill=green!15, minimum size=20pt, inner sep=0pt]
    \node at (0,-0.5) {(c) CGT outcome}; \node at (0,-1.3) {(self-interest)};
    \foreach \pos/\name in {{(2,3)/A}, {(-3,1)/B}, {(-1,1)/C}}
        \node[vertex] (\name) at \pos {$(0,0)$};
        \foreach \pos/\name in { {(0,7)/D}}
        \node[vertex] (\name) at \pos {$(4,3)$};
         \foreach \pos/\name in {{(-1,5)/E}}
        \node[vertex] (\name) at \pos {$(3,2)$};
        \foreach \pos/\name in {  {(0,3)/F}}
        \node[vertex] (\name) at \pos {$(2,0)$};
 `````
        \foreach \pos/\name in { {(1,5)/G}}
        \node[vertex] (\name) at \pos {$(3,0)$};
           \foreach \pos/\name in {{(-2,3)/H}}
        \node[vertex] (\name) at \pos {$(3,0)$};
    \foreach \source/ \dest /\weight in {E/D/2, G/D/3,
                                         F/E/3, H/E/2,
                                         F/4/2,A/G/3,
                                         B/H/2,C/H/3,
                                         C/F/2}
\path[edge] (\source) -- node[weight] {$\weight$} (\dest);
    \foreach \source/ \dest /\weight in {E/D/2,
                                         F/E/3, 
                                         A/G/3,
                                         C/H/3,  C/F/2
                                         }
\path[edge,double] (\source) -- node[weight] {$\weight$} (\dest);
\end{tikzpicture}

\caption{To the left: symmetric {\sc Cumulative Subtraction} with $S=\{2,3\}$ represented as a DAG played from a heap of size 7. The middle picture displays  the zero-sum outcomes, and to the right, the general-sum outcomes. The double edges indicate which child determines the parent's value. 
}\label{fig:subtr23}
\end{figure}

The general (not necessarily symmetric) zero-sum outcome function has a very similar expression as Definition~\ref{def:outS}. The only difference is that the outcome function now requires a move flag corresponding to the previous player index of the ruleset and we may no longer assume that Player~1 starts. 

\begin{definition}[Outcome Zero-sum]\label{def:outSgen}
Given a ruleset $\s$ in {\sc Zero-sum Cumulative Subtraction}, the outcome of a heap of size $x$ is $o_\zs(x)=(o_\zs(x,1),o_\zs(x,2))$, where $$o_\zs(x,1)=\max(\{o_\zs(x-a,2)+a\mid a\in \s(x)_1\})$$ and $$o_\zs(x,2)=\min(\{o_\zs(x-a,1)-a\mid a\in \s(x)_2\}).$$
\end{definition}

Next, we define the, not necessarily symmetric, self-interest variation (in antagonistic play). Again, although the self-interest utility function is a more natural model in  situations where the players do not directly compete eachother, the possiblity of indifference situations requires extra care. The outcome is succinctly represented by a $2\times 2$ matrix, because, similar to Definition~\ref{def:outSgen}, it is important to distinguish `who is the current player?'. 

\begin{definition}[Outcome Self-interest  Antagonistic]\label{def:outCSgsPart}
Consider an instance of {\sc Self-interest Cumulative Subtraction} under antagonistic play. The self-interest outcome of a heap of size $x$ is the matrix 
\begin{equation*}
 o_\si(x) =\begin{bmatrix}
o_\si^1(x,1) & o_\si^2(x,1)\\
o_\si^1(x,2) & o_\si^2(x,2),
\end{bmatrix} 
\end{equation*}
where $o_\si^i(x,p)$ denotes Player~$i$'s grounded game value when player $p$ starts. 
Here $o_\si(x,p)=(0,0)$ if $\s(x)_{p}=\emptyset$, and otherwise 
\begin{align}
{o_\si^1}(x,1)&=\max\{{o_\si^1}(x-a,2)+a\mid a\in \s(x)_1\}, \label{eq:max1}\\
{o^2_\si}(x,1)&={o_\si^2}(x-a^{\!*},2), \label{eq:min}
\end{align}
where
\begin{equation}\label{eq:argmin1}
 a^{\!*} = \argmin_{a\in \s^*(x)_1}o_\si^1(x-a,2),
\end{equation}  
and where  $\s^*(x)_1\subseteq \s(x)_1$ denotes Player~1's set of indifference actions from~\eqref{eq:max1}; 
\begin{align}
{o_\si^2}(x,2)&=\max\{{o_\si^2}(x-a,1)+a\mid a\in \s(x)_2\}, \label{eq:max2}\\
{o^1_\si}(x,2)&={o_\si^1}(x-a^{\!*},1), \label{eq:min}
\end{align}
where
\begin{equation}\label{eq:argmin2}
 a^{\!*} = \argmin_{a\in \s^*(x)_2}o_\si^2(x-a,1),
\end{equation}  
and where  $\s^*(x)_2\subseteq \s(x)_2$ denotes Player~2's set of indifference actions from~\eqref{eq:max2}. 
\end{definition}
There is some abuse of terminology. Equations~\eqref{eq:argmin1} and \eqref{eq:argmin2} may have several solutions; but then they all produce the same result.
\begin{example}\label{ex:cumsub2}
Let us display the initial zero-sum and self-interest outcomes, for the fixed Cumulative Subtraction game $\s=(\{2,3\},\{1,4\})$:
\begin{center}
\begin{tabular}{|c|c| c c c c c c c c |}
\hline
$\s$ & $\ \ \ \ x$		&0 &1 &2 &3 &4 &5 &6 &7  \\ 
\hline
$\{2,3\}$ & $o_\zs(x,1)$	&0 &0 &2 &3 &2 &3 & 4 & $-1$  \\ 
$\{1,4\}$ & $o_\zs(x,2)$	&0 &$-1$ &$-1$ &1 &$-4$ &$-4$ & $-2$ & $-1$  \\ 
\hline
$\{2,3\}$&$o_\si(x,1)$	&$(0,0)$ &$(0,0)$ &$(2,0)$ &$(3,0)$ &$(3,1)$ &$(4,1)$  &$(5,1)$ &$(3,4)$  \\ 
$\{1,4\}$ & $o_\si(x,2)$	&$(0,0)$ &$(0,1)$ &$(0,1)$ &$(2,1)$ &$(0,4)$ &$(0,4)$ &$(2,4)$ &$(3,4)$  \\ 
\hline
\end{tabular}\vspace{1 mm}
\begin{tabular}{|c|c| c c c c c c c c |}
\hline
$\s$ & $\ \ \ \ x$		&7 &8 &9 &10 &11 &12 &13 &14  \\ 
\hline
$\{2,3\}$ & $o_\zs(x,1)$	&$-1$ &0 & 1& 2& 0& 2&3  & $-2$  \\ 
$\{1,4\}$ & $o_\zs(x,2)$	& $-1$&$-2$ &$-1$ & 0&$-5$ &$-4$ & $-3$ & $-2$  \\ 
\hline
$\{2,3\}$&$o_\si(x,1)$	&$(3,4)$ &$(4,4)$ &$(5,4)$ &$(6,4)$ &$(6,6)$ &$(7,5)$  &$(8,5)$ &$(6,8)$  \\ 
$\{1,4\}$ & $o_\si(x,2)$	&$(3,4)$ &$(3,5)$ &$(4,5)$ &$(5,5)$ &$(3,8)$ &$(4,8)$ &$(5,8)$ &$(6,8)$  \\ 
\hline
\end{tabular}
\end{center}
That is, Player~1 subtracts 2 or 3, whereas Player~2 subtracts 1 or 4. So far, the self-interest outcomes do not depend on the particular tie-breaking rule (becasue so far, in all cases of ties, Equations~\eqref{eq:argmin1} and \eqref{eq:argmin2} have two solutions). 
\end{example}
Let us present an intermediate step towards general {\sc Cumulative Games}.
\begin{proposition}[Self-interest Cumulative Subtraction]\label{prop:SG_self}
 Fix a current player $p\in\{1,2\}$. At any grounded position $(x,(C^0_1,C^0_2),2)$, let $(x^T,(C^T_1,C^T_2))$ be the final position under PSPE play. Then, ${o_\si}(x,p)=(C^T_1-C^0_1,C^T_2-C^0_2)$.  That is, the utilities in PSPE play are $C^T_1-C^0_1$ and $C^T_2-C^0_2$ for Player~1 and~2 respectively.
\end{proposition}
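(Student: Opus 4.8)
The plan is to argue by induction on the heap size $x$, and the conceptual crux is an \emph{additive decomposition} of the self-interest utility that makes the cumulations irrelevant to optimal play. In any terminal state the utility of player~$i$ is $C_i^\T = C_i^0 + (\text{total pebbles } i \text{ removes along the play})$, so the initial cumulation $C_i^0$ enters only as an additive constant. Hence $C_i^0$ never affects an $\argmax$ or $\argmin$ taken during backward induction, and the entire PSPE path---together with the gain vector $(C_1'-C_1^0,\,C_2'-C_2^0)$---depends only on the heap $x$ and the previous player $p$, never on the current cumulations. This is exactly the information recorded by $o_\si(x,p)$, and it is precisely the property that fails in auction play, where utility is a nonlinear function of the cumulation. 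I would state this decomposition first, as it both justifies why the outcome function is well defined on layer~2 and supplies the inductive invariant.

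For the base case, if $\s(x,p)=\emptyset$ then the grounded position is terminal, so $x'=x$ and $(C_1',C_2')=(C_1^0,C_2^0)$; both the gains and $o_\si(x,p)$ equal $(0,0)$ by Definition~\ref{def:outCSgs}. For the inductive step I assume the claim for every heap smaller than $x$ and treat a non-terminal position; taking $p=2$ (so the current player $\p$ is player~1; the case $p=1$ is symmetric) is without loss of generality. Each action $a\in\s(x,2)$ sends play to the grounded position $(x-a,(C_1^0+a,C_2^0),1)$, whose PSPE gains are $o_\si(x-a,1)$ by the induction hypothesis. Choosing $a$ therefore yields player~1 a total gain of $a + o_\si^1(x-a,1)$ and player~2 a gain of $o_\si^2(x-a,1)$. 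Since player~1 maximizes her own utility, she selects $a$ maximizing $a + o_\si^1(x-a,1)$, which is exactly the maximization defining $o_\si^1(x,2)$ in \eqref{eq:max1}; this matches the first coordinate of the gain vector with the first entry of $o_\si(x,2)$.

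It remains to match the second coordinate, which is where the tie-breaking enters. Among the maximizing actions---player~1's indifference set $\s'(x,2)$---antagonistic play forces her to break ties by minimizing player~2's utility, i.e. to take $a'\in\argmin_{a\in\s'(x,2)} o_\si^2(x-a,1)$, so that player~2's gain is the corresponding value $o_\si^2(x-a',1)$; this is exactly what the $\argmin$-clause \eqref{eq:min}--\eqref{eq:argmin} of the definition computes. Under friendly play one replaces $\argmin$ by $\argmax$ and the identical argument goes through. This closes the induction, since uniqueness of the PSPE (guaranteed by the fixed tie-breaking rule) ensures the two recursions select the same action at every step. I expect the main obstacle to be purely organizational rather than mathematical: one must track carefully which player is current at each recursive call and keep the superscript and previous-player indices aligned with the row/column convention of the outcome matrix, and one must verify that the ``indifference set'' used for tie-breaking in the PSPE coincides with the set of maximizers appearing on the right-hand side of \eqref{eq:max1}. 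Once the additive cumulation-independence is in hand, this index-matching is routine.
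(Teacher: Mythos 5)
Your proof is correct and is essentially the specialization, to one-heap two-player Cumulative Subtraction, of the argument the paper itself uses: the paper defers Proposition~\ref{prop:SG_self} to Lemma~\ref{lem:main} and Theorems~\ref{thm:main1}--\ref{thm:main2}, whose content is exactly your additive decomposition (the realized gain equals terminal minus initial cumulation, by induction, using cumulation independence of actions and rewards so that initial cumulations never enter any $\argmax$ or $\argmin$), followed by matching backward induction against the max/argmin recursion of Definition~\ref{def:outCSgs}. The only friction is notational rather than mathematical: the superscript and previous-player indices in the printed recursion \eqref{eq:max1}--\eqref{eq:argmin} do not consistently follow the matrix row/column convention stated just below them, so your identification of the maximand $a+o_\si^1(x-a,1)$ with \eqref{eq:max1} holds for the intended (absolute player index) reading of that definition rather than for its literal typography.
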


This will be proved in a more general setting in Section~\ref{sec:cumgame}, in Theorem~\ref{thm:main1} and Theorem~\ref{thm:main2}, the latter which concerns  {\em Self-interest Utililty Heap Dynamic Cumulative Games}; this ruleset restriction will simply be referred to as {\sc Heap Dynamic}. It generalizes all games in this section.

By  Proposition~\ref{prop:SG_self}, the rows in the table represent game values of the corresponding grounded positions, whenever $(C_1^0,C_2^0)=(0,0)$. For the zero-sum variation it is noted in \cite{CLMW} that all symmetric outcomes are nonnegative, but this will no longer hold in the non-symmetric case (for example $o_\zs(3,2)=1$ and $o_\zs(7,1)=-1$), and moreover, now a player may get a zero-sum outcome larger than the maximum of their subtraction set, e.g. $x=6$, which is not possible in case of symmetry. We postpone study of combinatorial properties and asymptotic of non-symmetric games, similar to and generalizing \cite{CLMW}. 
In Section~\ref{sec:discussion}, we guide the reader into some interesting problems related to this section. 
Before we move on, we review the complexity of {\sc Cumulative Subtraction}.

\subsection{Layer~3 complexity}
In general Extensive Form Games, computing a PSPE requires traversing the entire game tree by backward induction. In the worst case, the size of the game tree may be exponential in its height. In the case of {\sc Cumulative Subtraction}, we can merge identical nodes (e.g. in Figure~\ref{fig:ext_23} node $[0;(5,2)]$ appears twice). Since in a tree rooted in $x$, with a fixed initial cumulation, there are at most $x$ different cumulation values for each player, we can construct a DAG representation with  $x^2$ nodes, which gives us an upper bound of $O(x^2)$ on the computation time of a PSPE, for every game starting from a particular grounded position.

In some cases the outcome function allows us to compute the PSPEs even more efficiently. 

\begin{proposition}\label{prop:SG_complex}
Consider any ruleset $\s$ with $\sup_x |\s(x)| \leq M$ for some constant $M$. 
Computing all outcomes $o_\si(x,p)$ for all $x\leq \hat x$, can be done in time $O(\hat x)$. In particular, a PSPE of any Extensive Form Game grounded in $(\hat x,(C^0_1,C^0_2),p)$ can be computed in time $O(\hat x)$.
\end{proposition}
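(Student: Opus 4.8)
The plan is to establish the time bound by dynamic programming over heap sizes, and then to read off a PSPE from the computed outcome table. The crucial structural feature is that the recurrences in Definition~\ref{def:outCSgs} express $o_\si(x,p)$ purely in terms of values $o_\si(x-a,\cdot)$ with $a\in\s(x,p)\subseteq\N$; since every $a\ge 1$ we have $0\le x-a<x$, so the dependency relation on heap sizes is acyclic and oriented from larger to smaller. Hence the values can be filled in by induction on $x$, with all needed subvalues already available.

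First I would compute the outcomes in increasing order $x=0,1,\dots,\hat x$, storing for each $x$ the two pairs $o_\si(x,1)$ and $o_\si(x,2)$ (the four entries of the outcome matrix). The base case, and more generally any $(x,p)$ with $\s(x,p)=\varnothing$, is the terminal value $(0,0)$. For the inductive step, computing $o_\si^1(x,2)$ via \eqref{eq:max1} is a single linear scan over $\s(x,2)$; identifying the indifference set $\s'(x,2)$ (the actions attaining that maximum) is a second scan, and the tie-breaking $\argmin$ of \eqref{eq:argmin} a third. Each scan touches at most $|\s(x,2)|\le M$ actions, so the work is $O(M)$, and the symmetric computation of $o_\si(x,1)$ is likewise $O(M)$. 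Since $M$ is a constant independent of $x$, each heap size costs $O(1)$, and summing over the $\hat x+1$ heap sizes gives total time $O(\hat x)$.

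For the ``in particular'' claim I would observe that the same computation records, for each pair $(x,p)$, the optimal action chosen by the current player; call it $a^\ast(x,p)$. By Proposition~\ref{prop:SG_self}, the heap-size outcome function already captures the PSPE utilities, and in particular the optimal action depends only on the heap size and previous player, not on the accumulated pockets $(C_1,C_2)$: the cumulations enter each player's utility only additively and therefore never change which action is maximal. Consequently the table $\{a^\ast(x,p)\}$ over $x\le\hat x$ and $p\in\{1,2\}$, obtained as a by-product of the outcome computation in $O(\hat x)$ time, is a compact description of a PSPE strategy valid from any initial cumulation. Tracing the realized play from the grounded position $(\hat x,(C_1^0,C_2^0),p)$ then amounts to following $a^\ast$ while updating the pocket of the current player; each move decreases the heap by at least one, so there are at most $\hat x$ moves, each costing $O(M)=O(1)$, for an additional $O(\hat x)$.

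The step I expect to require the most care is the justification that the cumulation-free recurrence of Definition~\ref{def:outCSgs} indeed yields a PSPE, i.e.\ that the optimal action at a grounded position is independent of the current cumulations (the phenomenon that fails for auction play is exactly what must be excluded here). This is precisely the content of Proposition~\ref{prop:SG_self}, proved in the greater generality of Section~\ref{sec:cumgame}, which I would invoke directly; the remaining complexity bookkeeping is routine. A minor point to verify is that the indifference set $\s'(x,p)$ and its $\argmin$ can be extracted within the same $O(M)$ budget as the maximization, which holds because all three are computed by constant-many linear scans over the at-most-$M$ available actions.
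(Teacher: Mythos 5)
Your proposal is correct and follows essentially the same route as the paper's (much terser) proof: dynamic programming over heap sizes using Definition~\ref{def:outCSgs}, with $O(M)=O(1)$ work per heap size, and then an appeal to Proposition~\ref{prop:SG_self} to convert outcomes into PSPE utilities by adding the initial cumulations. The extra detail you supply --- the acyclicity of the dependency on heap sizes, the constant-scan extraction of the indifference set and its $\argmin$, and the explicit tracing of the realized play --- is all sound and merely fills in what the paper leaves implicit.
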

\begin{proof}
The first part follows directly from Definition~\ref{def:outCSgs}, as we can compute the outcome function via dynamic programming. 

For PSPE computation from a particular grounded position, we compute the outcome $o_\si(x,p)=(C_1,C_2)$. By Proposition~\ref{prop:SG_self}, we get the terminal utilities of both players as $u_1=C_1+C^0_1$ and $u_2=C_2+C^0_2$.
\end{proof}

\section{Cumulative Games}\label{sec:cumgame}
In Section~\ref{sec:cumsub} we considered a specific class of self-interest combinatorial games based on collecting pebbles, and we showed that  equilibria are efficiently computed, as expressed via the CGT outcome function. 

Our purpose in this section is to define a more general class of games and extend Proposition~\ref{prop:SG_self}. 
We first lay out the general definitions of {\sc Cumulative Games}, and in particular we generalize the outcome function. 
It is quite obvious that it is not possible to compute PSPE in every game in time that is subexponential in height, which means that a naive  generalization of Propositions~\ref{prop:SG_self} and \ref{prop:SG_complex} could not apply to all games. 
Indeed, in Example~\ref{ex:auc}, we follow up on the auction game example in the Prologue, for which the proposition does not continue to hold, even after a remodeling attempt.

This section is in preparation for Section~\ref{sec:extform}, where we develop the required tools to reason about strategy profiles (in the EGT sense) for general {\sc Cumulative Game}s, and in Section~\ref{sec:valout}, where we define a general variant of the outcome function (which attributes an $n\times n$ matrix of values to every position). We  identify a property of {\sc Cumulative Games}, that is a sufficient condition for a generalization of Proposition~\ref{prop:SG_self}; see Theorem~\ref{thm:main1}.

\subsection{Cumulative Game Form}\label{sec:CGF}
Let us begin giving the definitions for {\sc Cumulative Games}. The general setting involves $n\ge 2$ players and $d\in\N$ heaps, and it begins with the notion of a Cumulative Game Form, and its three layers. 
  Later, we apply the utilities (Definition~\ref{def:util}), and a generalized CGT-type {\em outcome function} (Section~\ref{sec:valout}). 
    
  
\begin{definition}[Cumulative Game Form]\label{def:cumgameform}
An $n$-player Cumulative Game Form on $d$ heaps of pebbles is  a 5-tuple $F=(n,d,R,\Omega,p)$. 
There are three layers in a game form:
\begin{description}
\item[Layer~1.] The heap space is $\Omega=(\N_0\times \R^{n})^d$;
\item[Layer~2.] A heap position is $\om\in \Omega$, together with a ruleset $R= (\A,\re)$;
\item[Layer~3.] A grounded position $(\om,p)\in \Omega\times[n]$, is a heap position, or a disjunctive sum of heap positions, together with a current player $p$, and a turn function $\gamma:\Omega\times[n]\rightarrow [n]$.  
\end{description}
The current player $p$ is given by the heap position and the previous player $p'$, as prescribed by the turn function, $p=\gamma(\om,p' )$. The idea is that a grounded position is a recursive construction (similar to Layer~2), where all subsequent positions will also be grounded.

A typical heap position $\om\in \Omega$ is $\om=((x_1, \C_1),\ldots , (x_d, \C_d))$. 
We may regard the cumulation $\C$ as an $n\times d$ real matrix,  where the entry at $(i,h)$ is Player~$i$'s cumulation on heap $h$. Row $i$ represents the heap cumulations for Player~$i$, whereas column $h$ represents the  player cumulations on heap $h$. If, given $\om$, we wish to extract the cumulation matrix, then we write $\C=\C(\om)$. In the recursive construction, \eqref{eq:gitoo}, it is convenient to abuse notation and write instead $\om=(\x,\C)$, where: 
\begin{itemize}
\item[(i)] $\vec x = (x_1,\ldots , x_d)\in \N_0^d$ is a $d$-tuple of heaps. 
\item[(ii)] the cumulation vector, for each heap $h\in[d]$, is $\C_h\in \R^n$.
\end{itemize}

The ruleset $R$ is described by: for all $\om\in\Omega$, with $\om = (\x,\C)$, for each player $i\in\n$, the option set is
\begin{align}\label{eq:gitoo}
\om^i = \{(\x+\a,\C+\re(\om,i,\a))\mid \a\in \A(\vec\omega)_i\},
\end{align}
 where, 
\begin{itemize}
\item[(iii)] the {\em action-map} $\A:\Omega \rightarrow \left(2^{\Z^d}\right)^n$ specifies, for each heap position, the set of allowed actions on the heaps for each player (possibly some set is empty);
\item[(iv)] the {\em reward} $\re: \Omega \times \n\times \Z^d \rightarrow \R^{n,d}$ depends on a heap position $\om$, Player~$i$ and their action $\a\in \A(\om)_i$. 
\end{itemize}
Regarding item (iii), if $\A(\om)_p=\varnothing$, then $\om$ is a terminal grounded position, with respect to the current player $p$ and cyclic move order say. This does not imply that $\om$ is a `terminal position' at Layer~2 because some player $-p$ might have a non-empty action set (in fact this latter notion is not defined unless it holds for all players). 

\end{definition}


A CGF does not yet specify player  utilities. The utilities are only realized at terminal positions, i.e. when the current player cannot move on any heap. Therefore, all CGFs will be assumed feasible. 

As with Extensive Form Games, the ``game form'' alone dictates how the game \emph{can} be played but not \emph{what} players should do. 
Intuitively, players want to maximize their individual utilities, where the utility of each player depends on the particular terminal grounded position $(\om^\T,p)$, i.e., the current player $p$ cannot move. We omit the superscript `T' as in terminal in the definition of the utility map. In some extreme situation, every player except the current player has plenty move options. But the game ends if the current player cannot move. At that point, the `winning condition' (or utility) is invoked.

\begin{definition}[Utility Map]\label{def:util}
Consider a CGF on a feasible ruleset. 
A \emph{utility map} for Player~$i$ on heap $h$, is a function $u_{i,h}:\R^n\times \n \rightarrow \R^n$, which maps a terminal cumulation on heap $h$ to Player~$i$'s utility. Player~$i$'s utility at any terminal grounded position $(\om,p)$ is 
\begin{align}\label{eq:gitoou}
u_i(\C,p)=\sum_{h\in [d]}u_{i,h}\left(C_{i,h},\; p\right), 
\end{align}
where $\C=\C(\om)$. Let $\ut(\C,p)=(u_1(\C,p)),\ldots , u_n(\C, p))$.
\end{definition}

\begin{observation} 
The utility functions are sensitive to `who is to play?', and this is useful, for example when utility should simulate normal play ending, on all heaps.\footnote{We have even more motivation in the normal play embedded Guaranteed Scoring play \cite{LNS, LNNS} and Absolute CGT \cite{LNSabs1,LNSabs2}.} The reward function is the correct means, whenever we wish to simulate normal (or mis\`ere) play on individual heaps. 
\end{observation}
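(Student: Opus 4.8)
Since this is an Observation rather than a formal theorem, there is no single implication to establish; the plan is to justify each of its two assertions by an explicit construction, making precise the information-flow argument that distinguishes the two mechanisms (the $\p$-dependence of the utility versus the reward function).

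First I would treat the claim that the $\p$-dependence of utilities suffices to simulate a global Normal play ending across all heaps. Recall from Definition~\ref{def:util} that $u_i(\C,\p)=\sum_{h\in[d]}u_{i,h}(C_{i,h},\p)$, so the terminal flag $\p=\gamma(\om^\T,p)$ — the current player who faces the empty action set when the whole game ends — is directly available to every utility map. I would then exhibit a utility map that realises ``the player to move at the global terminal position loses'': define $u_{i,h}(C_{i,h},\p)$ to ignore the cumulation and return $-1$ when $i=\p$ and $+1$ otherwise (the mirror construction, with the signs swapped, gives mis\`ere play). This shows that the mere presence of $\p$ as an argument is exactly what lets a utility map encode a normal (or mis\`ere) ending; the composite-game convention $F$ from the Prologue is the special case in which one dedicated component carries this global ending.

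Next I would establish the subtler second claim, namely that simulating normal/mis\`ere play on an \emph{individual} heap $h$ genuinely requires the reward function. Here $d\ge 2$, and the key point is that the global flag $\p$ no longer reveals who made the last move on heap $h$: after heap $h$ is exhausted, play may continue on other heaps, so the stuck player $\p=\gamma(\om^\T,p)$ generally differs from the last mover on $h$. I would make this an impossibility-style remark: a utility map $u_{i,h}$ sees only $(C_{i,h},\p)$, and under the identity-on-pebbles reward of Cumulative Subtraction the terminal cumulation $C_{i,h}$ records only how many pebbles player $i$ removed from heap $h$. Hence two terminal positions agreeing on all cumulations and on $\p$, but differing in who took the last pebble on $h$, receive identical utilities, so no $u_{i,h}$ can realise ``the last mover on heap $h$ wins.'' The remedy is to exploit that the reward $\re$ depends, by Definition~\ref{def:cumgameform}, on the grounded position and hence on the previous player $p$: I would augment it so that a move by player $i$ on heap $h$ writes a last-mover tag into $C_{i,h}$ (for instance by flipping a reserved parity coordinate), so that at the terminus $C_{i,h}$ determines whether $i$ moved last on $h$, and the utility map reads off the tag to award the per-component point.

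The main obstacle — and indeed the conceptual content of the Observation — is precisely this information-flow gap in the second claim: one must argue that the global flag $\p$ is strictly weaker than per-heap last-mover information, and that among the layers of Definition~\ref{def:cumgameform} only the reward function is positioned to bridge it, since it is evaluated at every move (with access to the moving player) whereas the utility map is evaluated once, at the global terminus. The only technical care required is to choose the tag so that it does not distort the self-interest cumulations already being tracked; encoding it in a reserved coordinate, or as an infinitesimal, lets the pebble count and the last-mover marker be recovered independently from the terminal cumulation, which is the device illustrated in Example~\ref{ex:contprel}.
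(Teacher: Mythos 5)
Your proposal is correct, and its first half coincides with what the paper itself offers: the Observation is not backed by a formal proof but by the construction in Example~\ref{ex:contprel}, whose heap $F$ realises the global Normal-play ending exactly as you do, via a utility $u_{i,h}(C_{i,h},\p)=-1$ if $i=\p$ and $+1$ otherwise, with zero rewards. Where you take a genuinely different route is the per-heap half. The paper's device (heaps $A$ and $B$ of Example~\ref{ex:contprel}) is more economical than your tag-and-decode scheme: the reward there is $\vec 0$ except on the move that empties the heap (the case $-a=x$), where it pays $+1$ to the mover and $-1$ to the others, and the utility on that heap is then simply the identity, so no decoding step is needed. That trigger depends only on the heap size and the action taken, so the construction stays cumulation independent in the sense of Property~\ref{prope:heapsizedyn}, which is precisely what the paper's efficiency results (Theorem~\ref{thm:main2}) require; by contrast, your ``overwrite a last-mover tag at every move'' variant forces the reward to read the current tag, i.e.\ to depend on the cumulation, unless you restrict the tag-writing to the final move on the heap --- at which point it collapses to the paper's device. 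What your proposal adds, and the paper omits, is the impossibility argument: exhibiting two terminal histories with identical cumulations and identical terminal $\p$ but different last movers on heap $h$, so that no utility map of the form $u_{i,h}(C_{i,h},\p)$ can separate them. That argument is sound (two heaps of size $2$ with symmetric subtraction set $\{1\}$ already give such a pair) and it is what actually justifies the Observation's claim that the reward function is ``the correct means'' rather than merely ``a possible means''; the paper asserts this without argument.
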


\begin{definition}[{\sc Cumulative Game}]\label{def:cumgame}
A  Cumulative Game Form $F$ together with a utility map $\ut$ induces a {\sc Cumulative Game} $(F,\ut)$. 
\end{definition}


Note that every grounded {\sc Cumulative Game} is an Extensive Form Game (see Section~\ref{sec:equiv}), and so we can talk about specific strategy profiles etc.

Recall two-player {\sc Cumulative Subtraction}.  
\begin{observation}\label{ex:cumsub_special}
Both {\sc Self-interest} and {\sc Zero-sum Cumulative Subtraction} from Section~\ref{sec:cumsub}, are simple special cases of a {\sc Cumulative Game}. Both of them have the same Cumulative Game Form, where there are two players, $n=2$; a single heap: $d=1,\Omega = \N \times \R^2$; the action map $\A$ is the negation of $\s$;  for each player, the  reward is the `identity map': for all $\vec \omega$, $\re(\om,1,a) = (a,0)$ and $\re(\om,2,a) = (0,a)$; and the turn function is alternating: for all $\vec \omega$, $\gamma(\om,p')=p$. In the self-interest variation, $u_i(\om,p)=C_i$, whereas in the zero-sum variation, $u_i(\om,p) = C_i-C_{-i}$. \end{observation}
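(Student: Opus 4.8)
The plan is a direct verification: instantiate the general Cumulative Game Form of Definition~\ref{def:cumgameform} with the stated parameters and check, layer by layer and option by option, that it reproduces the $2$-player Cumulative Subtraction of Definition~\ref{def:cumsub}, and that the two utility maps recover the self-interest and zero-sum utilities. Since the two games share the \emph{same} game form and differ only in the utility map, I would split the argument into matching the dynamics (the common form) and matching the utilities.

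First I would align the three layers. Setting $n=2$ and $d=1$ collapses $\Omega=(\N_0\times\R^{n})^d$ to $\N_0\times\R^2$, so a heap position $\om$ can be written $(x,(C_1,C_2))$ and a grounded position $(\om,p)$ as $(x,(C_1,C_2),p)$; these are exactly layers~1--3 of Definition~\ref{def:cumsub}. Next I would reconcile the one genuine subtlety, the sign convention: a move in the general form updates the heap by $\x\mapsto\x+\a$, whereas Cumulative Subtraction \emph{removes} pebbles, so the action set must be $\A(\om,p)=-\s(x,p)$, with an action $\a=-a$ encoding the removal of $a\in\s(x,p)$ pebbles (this depends only on $x$ and $p$, as required). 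Reading the identity reward as returning the removed amount to the mover's own pocket---so that $\C+\re(\om,2,\a)=(C_1+a,C_2)$ when player~$1$ moves---and the alternating turn function $\gamma(\om,p)=-p$ as the involution swapping players $1$ and $2$, the generic option $(\om^{(\a)},\gamma(\om,p))$ from previous player~$2$ becomes $(x-a,(C_1+a,C_2),1)$. This is precisely \eqref{eq:mo_c}, and the case $p=1$ is symmetric. Terminality coincides because $\A(\om,p)=\emptyset$ iff $\s(x,p)=\emptyset$, and feasibility holds since each action strictly decreases the nonnegative integer heap, so play ends within $x$ moves.

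For the utilities I would specialize \eqref{eq:gitoou}. With $d=1$ there is a single summand, giving $u_i(\C,\p)=u_{i,1}(\C_1,\p)$. Choosing $u_{i,1}(\C_1,\p)=C_i$ yields the self-interest utility $u_i=C_i$, and choosing $u_{i,1}(\C_1,\p)=C_i-C_{-i}$ yields the zero-sum utility, matching the two variants of Section~\ref{sec:cumsub}. The only care needed here is that the zero-sum utility depends on both players' cumulations, so the per-heap map must be read as a function of the full cumulation vector $\C_1$ on the heap rather than of the mover's own entry alone; I would note this as the intended reading of Definition~\ref{def:util}.

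The main obstacle is thus bookkeeping rather than conceptual: fixing the sign convention between ``adding an action'' and ``removing pebbles'' so that the reward lands the removed amount in the correct pocket, and confirming that the additive-over-heaps formula \eqref{eq:gitoou} can express the zero-sum difference $C_i-C_{-i}$ through a per-heap utility map that reads the whole cumulation vector. Once both conventions are fixed, the verification is immediate and the observation follows.
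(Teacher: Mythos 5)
Your proposal is correct and takes essentially the approach the paper intends: the Observation is stated without proof, and your layer-by-layer instantiation of Definition~\ref{def:cumgameform} (with the sign flip $\A=-\s$, the identity reward landing in the mover's pocket, and the alternating $\gamma$) is exactly the direct verification being taken for granted. Your flagged subtlety about the zero-sum utility needing the per-heap map to read the full cumulation vector is well spotted and is precisely the issue the paper itself acknowledges and resolves immediately afterwards via Proposition~\ref{prop:zsidut}, which pushes the zero-sum interaction into the reward function to restore identity utilities.
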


In view of our results to come we will require utility to be the identity map, i.e. for all players $i$, $u_i(\C,p) = C_i$, a.k.a. self-interest (Property~\ref{prope:si}). But this was not the case in Definition~\ref{def:cumsub} for our motivational ruleset zero-sum symmetric {\sc Cumulative Subtraction}. The situation has a simple remedy, via the reward function. 
{\sc Zero-sum Cumulative Subtraction} does not have identity utility when stated as above, but we can define an equivalent game with identity utilities, by pushing the zero-sum interaction into the reward function. Let us model zero-sum {\sc Cumulative Subtraction} as a {\sc Cumulative Game} with self-interest utility. 

\begin{proposition}\label{prop:zsidut}
For all $\om$ with $a(\om)\in\A$, set $\re(\om, 1, a) = (a,-a)$ and $\re(\om,2,a) = (-a,a)$. If utilities are identity, i.e. $u_i(\C,p) = C_i$, then any sequence of play gives the same utility as {\sc Zero-sum Cumulative Subtraction}, and in particular, the optimal play strategies are identical. 
\end{proposition}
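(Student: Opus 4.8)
The plan is to show that the two game forms generate the \emph{same} Extensive Form Game and assign the \emph{same} utility to every terminal position, so that optimal play must coincide. First I would observe that the only thing that changes between the zero-sum Cumulative Subtraction of Observation~\ref{ex:cumsub_special} and the game described in the Proposition is the reward function: the number of players, the heap space, the action-set $\A$, and the turn function $\gamma$ are all unchanged. Because in Cumulative Subtraction the action-set depends only on the heap size and the previous player (and not on the cumulation $\C$), and because the heap update $\x \leftarrow \x + a$ is unaffected by the reward, both games traverse exactly the same tree of reachable heap-size sequences, and in particular share the same grounded terminal positions (those with $\A(\om,p)=\emptyset$). Hence there is a canonical bijection between play sequences of the two games, under which terminal positions correspond to terminal positions.

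Next I would track the cumulations along a fixed play sequence by induction on its length. Let $C_i^{\zs}$ denote player $i$'s cumulation in the original game (identity reward) and $C_i^{\si}$ the cumulation in the game of the Proposition. In the original game each move increases the current player's own cumulation by the number $k$ of pebbles taken and leaves the opponent's unchanged, so at any point $C_i^{\zs}$ equals the total number of pebbles taken by player $i$. Under the modified reward, the same move instead increases the current player's $C^{\si}$ by $k$ and decreases the opponent's by $k$; summing over the prefix of play gives $C_i^{\si} = C_i^{\zs} - C_{-i}^{\zs}$ after every move, and in particular at every terminal position. Note that the two cumulations genuinely differ; what legitimizes the reduction is precisely the cumulation-independence of $\A$ noted above, which ensures the two games do not diverge despite carrying different cumulation values.

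Then I would apply the utility maps. In the game of the Proposition the utility is the identity, so $u_i^{\si} = C_i^{\si} = C_i^{\zs} - C_{-i}^{\zs}$, which is exactly the zero-sum utility $u_i^{\zs} = C_i^{\zs} - C_{-i}^{\zs}$ of the original game. Thus every play sequence yields identical utility vectors in the two games. Since the two games share the same game tree and assign the same payoff vector to every terminal position, the backward induction (minimax, or PSPE computation) receives identical inputs at every node, so the resulting optimal play strategies coincide, which proves the claim.

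The argument is essentially bookkeeping, so I do not expect a genuine obstacle. The one point that needs care — and the only place where the combinatorial structure of Subtraction Games is actually used — is the observation that $\A$ does not depend on the cumulation, guaranteeing that the two games visit the same positions and terminate simultaneously. The only real pitfall is keeping the sign conventions of the reward (and of the action $a$ as an element of $\A = -\s$) consistent with Definition~\ref{def:cumgameform} throughout the induction.
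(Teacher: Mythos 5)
Your proof is correct and is simply the fully spelled-out version of what the paper dismisses with the single word ``Obvious'': the cumulation bookkeeping $C_i^{\si}=C_i^{\zs}-C_{-i}^{\zs}$ along any play sequence, combined with cumulation-independence of the action set, is exactly the intended argument. No substantive difference from the paper's approach.
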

\begin{proof}
{\sc Zero-sum Cumulative Subtraction} is reviewed in Observation~\ref{ex:cumsub_special}, and the assertion follows.
\end{proof}

We have arrived at  motivating examples towards the modeling of Section~\ref{sec:valout} and the main results. Sometimes it is possible to model one and the same ruleset in different ways, by for example rearranging the rewards and utilities. For the purpose of the next property, we include the utility-map into the notion of a ruleset.

\begin{property}[Self-interest]\label{prope:si}
A ruleset  is \emph{self-interest} if the utility map is identity, with respect to terminal player cumulations. That is, for each player $i$ and each heap $h$, $u_{i,h}(C_{i,h},p)=C_{i,h}$, independent of the current player $p$.\end{property}

\begin{property}[Cumulation Independence]\label{prope:cumind}
A ruleset  is \emph{cumulation independent} if the actions, rewards and  the turn function do not depend on the cumulations. 
\end{property}

These two concepts interact in various ways, while modeling {\sc Cumulative Games} with various properties, and we will study some theoretical consequences of both. At first, we recall the sample ruleset {\sc Auction Pebbles} from Section~\ref{sec:intrex}.

\subsection{Auction Pebbles}\label{sec:auc}
Let us continue the discussion of the {\sc Auction Pebbles} variation of {\sc Cumulative Subtraction} from Section~\ref{sec:intrex}, where the PSPE strategy profile, but not the ruleset, depends on the cumulation vector. We will vary the utility function, rewards and the cumulation dependence, while modeling the same setting. 

In Examples~\ref{ex:auc} and \ref{ex:cumrew}, we illustrate that: for a  modest economic style generalization of the classical combinatorial game {\sc Cumulative Subtraction}, one should not hope for generic outcome representations via the tabular approach.

However, we can still define an outcome function that is more efficient than a generic Extensive Form Game tree approach. 
The second purpose hints at bridges between classical CGT {\sc Subtraction Games} to classical EGT English Auctions.

We review a variation of the one heap {\sc Auction Pebbles} by using the new notation. This example has a \emph{symmetric utility function}, and hence we write $u_i(\C)=u_i(\C,\cdot)$ for the utility of Player~$i$, assuming that Player~1 starts.

\begin{example}[{\sc Auction Pebbles}]\label{ex:auc}
Consider {\sc Cumulative Subtraction} as in Definition~\ref{def:cumsub}, with a fixed symmetric ruleset $\s=\{2,3\}$, announced by an auctioneer, and with identity rewards. The starting heap size is $x^0\in \N$, and Player~1 starts bidding on a single item of value $v\ge x^0$. Each player has offered an initial bid, corresponding to an initial cumulation vector $\C^0$.  During play they can increase their bids according to their actions: their bids increment with the sizes of their subtractions. 

For each player $i\in \{1,2\}$ the utility depends on the final cumulation vector $\C$, and it is of the form $u_i(\C)=v-C^i$ if $C^i> C^{-i}$, and otherwise $u_i(\C)=0$. That is, the players desire to win the auction, but with the smallest possible margin, because they pay their bid. (Similar to ``chicken game'' this game has an element of win-loss, but is still not a zero-sum game). 
 
From a heap of size $x^0=3$, with initial bid vector $\C^0=(0,0)$ or $\C^0=(0,1)$, Player~1 wins the item in a single turn by removing 2 pebbles, so this is equilibrium play. However, if the initial cumulation is $\C^0=(0,2)$, then, in equilibrium play, Player~1 must remove 3, and wins with a lower utility, if we set $v=4$. Thus, Player~1's PSPE strategy depends on the initial cumulation, although the rules do not depend on cumulations (!). 
\end{example}

The following example complements Example~\ref{ex:auc} in the sense that we can model the ruleset {\sc Auction Pebbles} as self-interest if we transfer the original function of the utilities to the rewards. However, as we will see, this slick maneuver will turn the ruleset cumulation dependent (!). 

\begin{example}[Modified Utility {\sc Auction Pebbles}]\label{ex:cumrew}
Interestingly enough, Example~\ref{ex:auc} can be rewritten with identity utility, if we elaborate the rewards to mimic the situation. It is non-intuitive, but certainly doable for simple games like this. Let 
\begin{enumerate}
\item $r_1((3,(0,0)),1,2)=2$, 
\item $r_1((3,(0,0)),1,3)=1$, 
\item $r_1((3,(0,1)),1,2)=2$, 
\item $r_1((3,(0,2)),1,2)=0$,
\item $r_1((3,(0,2)),1,3)=1$,
\end{enumerate}
and so on. Player~1 will use exactly the same strategy as in the previous example if we set self-interest utility, i.e. $u_1(\C) = C_1$. This, however, requires that we let the reward depend, not only on the action taken (and perhaps the heap size), but also on the cumulation. This is the second condition, that we will disallow in a heap size dynamic computation of the outcome (in equilibrium). Note for example the distinction between items~3. and~4. In item~3. Player~1 wins and her utility is 2, whereas in item~4. nobody wins and her utility is 0. In the latter case it is beneficial to play instead as in item~5., which indeed coincides with Example~\ref{ex:auc}.
\end{example}

Of course, for generic {\sc Cumulative Games}, where rules may depend in a complicated way on player cumulations, we should not hope for more efficient computation of game values/outcomes than what is given by a generic non-efficient PSPE computation (typically exponential in the depth of the game tree). 

The `fix' of the PSPE cumulation dependency, via the reward function, to obtain identity utility, improves the situation with respect to the first main result, Theorem~\ref{thm:main1}, but not with respect to the second main rest, Theorem~\ref{thm:main2}. With respect to Theorem~\ref{thm:main1} the rule cumulation dependency is not a problem, while self-interest (identity utility) is the required property to have a utility related outcome function.   

Of course, we know already that sometimes this `fix' gives a neat outcome, via Proposition~\ref{prop:zsidut}. There is a fine distinguishing line somewhere between these examples and Proposition~\ref{prop:zsidut}, and to make a formal treatment of this ``fine line'', we will use definitions from main stream Game Theory via the Extensive Form Games. 

Without the tabular approach, if one would be interested instead in the PSPE outcome, i.e. the vectors of game values of for a different initial state (with different heap size or cumulation) the backward induction had to start all over again. The idea of an efficient outcome function is that a computation of game value vectors for a given heap size is universally valid, within the same ruleset, and can be linearly shifted with respect to variations in the initial cumulations. 


We have not proved that {\sc Auction Pebbles} cannot be interpreted as a cumulation independent self-interest ruleset. But our examples seem to point in the direction that this is impossible. This is highly relevant in terms of our main results in Section~\ref{sec:valout}.

\section{Extensive Form Games}\label{sec:extform}
In Section~\ref{sec:cumgame} we defined a generalization of Cumulative Subtraction Games, and we aim to prove that it can capture all Extensive Form Games. To this end, we begin by defining Extensive Form Games in a modular way.

\begin{definition}
An \emph{Extensive Form Game} is a tuple $G=(F,\vec U)$, where $F=(\n,S,T,s_0,\delta,g)$ is the game form with
\begin{enumerate}
\item $\n$ is a set of $n$ players;
	\item $S$ is a finite set of states;
	\item $T\subseteq S$ is a set of terminal states;
	\item $s_0\in S$ is an initial state;
	\item $\delta:S \rightarrow \n$ is a turn function;
	\item $g:S\rightarrow 2^S$ is the game function.
	\end{enumerate}
And where $\vec U = (U_1,\ldots,U_n)$, where $U_i:T\rightarrow \mathbb R, i\in \n$ is the utility function of Player~$i$.  
\end{definition}

A note about item 5: we assign players to terminal states, even though no action is possible. Intuitively, one reason is that the most important theory in CGT concerns normal play, where by definition a player who cannot move loses. In any CGT situation, the basic terminating issue is: what happens when the current player cannot move? In mis\'ere play they win, and in scoring play a score will be assigned, depending on who is to play. The question in the title ,``who is the current player?'', is fundamental to the study of cumulative/combinatorial games, not just to assign the starting player, but perhaps even more so, to punish or reward a terminal player.

Intuitively, player~$i = \delta(s)$ is playing in state $s\in S$, and should select the next state from $g(s)\subseteq S$.  We assume that there are no cycles in $g$, and $g(s)=\varnothing$ if and only if $s\in T$. We denote by $S_i\subseteq S$ all states such that $\delta(s)=i$. A state $s'$ is a \emph{descendant} of state $s$ if there is a path in $g$ from $s$ to $s'$. Without loss of generality, we assume that each game state in $S$ is a descendant of $s^0$. 

\paragraph{Strategies and strategy profiles} 
Consider an Extensive Form Game $G=(F,\vec U)$, with $F=(N,S,T,s_0,\delta,g)$.
A Player~$i$ \emph{strategy} is a function $\varsigma_i:S_i\setminus T\rightarrow S$, such that $\varsigma_i(s)\in g(s)$. That is, a unique action (next state) is selected in every state such that $\delta(s)=i$. A \emph{strategy profile} is a vector $\vec\varsigma = (\varsigma_1,\ldots,\varsigma_n)$. Consider any $s\in S$. We denote by $\vec\varsigma|_s$ the restriction of $\vec\varsigma$ on the \emph{subgame} $G|_s$, the subgame of $G$ rooted by $s$.  Denote by $\Sigma(G)$ the set of all strategies in $G$.

\paragraph{Terminal maps and utility maps}
Given a game $G$, and a strategy profile $\vec\varsigma\in \Sigma(G)^n$, the \emph{terminal map} $\tau_{\vec\varsigma}:S\rightarrow T$ maps any state $s$ to the terminal that is reached  when players start from state $s$ and follow their strategies in $\vec\varsigma$ (in the subgame $G|_s$ of $G$). We may omit the parameter $\vec\varsigma$ when  clear from the context. Similarly, the \emph{utility map} $\mu_{i}:\Sigma(G)^n \times S\rightarrow \R$ maps any state to the utility of Player~$i$. That is, $\mu_{i}(\vec\varsigma,s) = U_i( \tau_{\vec\varsigma}(s))$ is the utility to Player~$i$ in game $G|_s$ under strategy profile $\vec\varsigma$. 

\begin{observation}\label{obs:const}
For any profile $\vec\varsigma$, the terminal state $\tau_{\vec\varsigma}(s)$ is constant for any $s$ along the  unique path the profile defines from $s^0$ to $\tau_{\vec\varsigma}(s^0)$, and thus so is $\mu_i(\vec\varsigma,s)$, for all $i\in\n$. 
\end{observation}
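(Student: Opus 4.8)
The plan is to reduce the statement to a single \emph{one-step consistency} property of the terminal map and then to propagate it down the play path by a trivial induction. The key observation is that $\tau_{\vec\varsigma}$ is defined by ``follow the profile until a terminal is reached'', so taking one prescribed step and then following the profile must land on the same terminal as following the profile from the start.

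First I would record the defining recursion of the terminal map. For $s\in T$ we have $\tau_{\vec\varsigma}(s)=s$, and for a non-terminal $s$ the current player $i=\delta(s)$ moves to $\varsigma_i(s)\in g(s)$, after which every player continues to play according to $\vec\varsigma$; hence the play from $s$ agrees, after its first step, with the play from $\varsigma_i(s)$. This yields
\begin{equation*}
\tau_{\vec\varsigma}(s)=\tau_{\vec\varsigma}\bigl(\varsigma_{\delta(s)}(s)\bigr)\qquad\text{for all }s\in S\setminus T .
\end{equation*}
Because $g$ is acyclic and $S$ is finite, the play from any state reaches a terminal after finitely many steps, so both $\tau_{\vec\varsigma}$ and this recursion are well defined.

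Next I would apply this identity along the unique path that $\vec\varsigma$ traces from the root. Write this path as $s^0=z_0,z_1,\ldots,z_k$, where $z_{j+1}=\varsigma_{\delta(z_j)}(z_j)$ for $0\le j<k$ and $z_k=\tau_{\vec\varsigma}(s^0)\in T$. The one-step identity, used at each non-terminal $z_j$, gives $\tau_{\vec\varsigma}(z_j)=\tau_{\vec\varsigma}(z_{j+1})$, and chaining these equalities down the path yields
\begin{equation*}
\tau_{\vec\varsigma}(z_0)=\tau_{\vec\varsigma}(z_1)=\cdots=\tau_{\vec\varsigma}(z_k)=z_k=\tau_{\vec\varsigma}(s^0).
\end{equation*}
Thus $\tau_{\vec\varsigma}(s)$ is constant for every state $s=z_j$ on the path, which is the first assertion. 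The claim for utilities then follows by applying $U_i$: for every $i\in\n$ and every $j$,
\begin{equation*}
\mu_i(\vec\varsigma,z_j)=U_i\bigl(\tau_{\vec\varsigma}(z_j)\bigr)=U_i\bigl(\tau_{\vec\varsigma}(s^0)\bigr)=\mu_i(\vec\varsigma,s^0),
\end{equation*}
so $\mu_i(\vec\varsigma,\cdot)$ is constant along the path as well.

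I expect no real obstacle here: the content is entirely bookkeeping. The only point deserving care is well-foundedness---using acyclicity and finiteness of $g$ to guarantee that the play path terminates and that the recursion for $\tau_{\vec\varsigma}$ bottoms out---together with checking that the subgame restriction $G|_s$ in the definition of $\tau_{\vec\varsigma}$ coincides with the tail of the global play starting at $s$, which is exactly what the one-step identity encodes.
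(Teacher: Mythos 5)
Your proof is correct and is exactly the argument the paper implicitly relies on: the paper states this as an unproved observation, immediate from the definition of $\tau_{\vec\varsigma}$, and your one-step consistency identity $\tau_{\vec\varsigma}(s)=\tau_{\vec\varsigma}(\varsigma_{\delta(s)}(s))$ chained along the play path, followed by composition with $U_i$, is the canonical way to make that immediacy explicit. No gaps; the well-foundedness remark is the only point needing care and you address it.
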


\begin{definition}\label{def:PSPE}
Consider a game $G$. A strategy profile $\vec \varsigma=\vec \varsigma^*$ is a \emph{pure subgame perfect Nash equilibrium (PSPE)} if, for all $s\in S\setminus T$, for all $i\in \n$, for any alternative strategy $\varsigma'_i$, 
$$\mu_{i}(\vec\varsigma, s) \ge \mu_{i}((\vec\varsigma_{-i},\varsigma_i'),s).$$
\end{definition}

A game is \emph{generic} if a player is never indifferent between two terminals $\tau$ and $\tau'$ unless $\vec U(\tau)=\vec U(\tau')$. 
Any game can be made generic by specifying some tie-breaking rule (say, lexicographic) in case of indifference.

\begin{property}[Generic Ruleset]
A ruleset is \emph{generic} if each player has a tie breaking rule (preference order of the other players), which is independent of the grounded position. 
\end{property}

Generic games are known to have a unique PSPE utility or \emph{game value} (an $n$-tuple of real values),\footnote{There may be several PSPEs leading to the same value.} which can be found by backward induction. We assume that all our games are generic.
\begin{definition}\label{def:gamevalue}
The game value of an Extensive Form Game $$G=(N,S,T,s_0,\delta,g,\vec{U})$$ is $\ve(G) = (\mu_1(\varsigma^*, s_0), \ldots, \mu_n(\varsigma^*,s_0))$. 
\end{definition}
In the context of grounded {\sc Cumulative Games}, the value $\ve(G)$ is referred to as the \emph{grounded value}. See also Definition~\ref{def:cgvalue} to come.

\section{Cumulative Games' strategy profiles}\label{sec:cumstrat}
A \emph{strategy profile} $\vec\sigma$ for a CGF is an infinite set  of extensive form strategy profiles $\vec\varsigma_{\om,p}$,\footnote{In fact it is an uncountable set since the cumulation vectors are vectors of real numbers.} one for each grounded position $(\om ,p)$, that is \emph{consistent}, in the following sense. For every $(\om',p')$ that is a descendant of $(\om, p)$, the strategy $\vec\varsigma_{\om',p'}$ coincides with $\vec\varsigma_{\om,p}|_{\om',p'}$. Equivalently, for any $\om\in\Omega$, and any $p\in\n$, $\vec\sigma(\om,p)$ selects an action from $\a\in \A(\om)_{p}$.  

\begin{definition}[Cumulative Map]
Consider a CGF, with a given strategy profile $\vec\sigma$. Then $\vec c(\sig,\om,p)=\C(\tau_{\sig}(\om,p))$ is the   \emph{cumulative map} of the grounded position $(\om,p)$. Let $$(\om,p)=(\om^0,p^0),\a^1,(\om^1,p^1),\a^2,\ldots,\a^k,(\om^k,p^k)=\tau_{\sig}(\om)$$ be the sequence of grounded positions and actions from $\om$ in profile $\sig$. Then, for each heap $h$, $$\vec c_{h}(\sig, \om,p)= \C_{h}(\om,p) + \sum_{i= 1}^{k-1} \re_{h}(\om^i,p^i,\a^{i+1},\delta(\om^i,p^i)),$$ and
$\vec c(\sig, \om,p)=\sum_{h\in[d]} \vec c_{h}(\sig, \om,p)$.
\end{definition}

A PSPE in a {\sc Cumulative Game} $(R,\gamma,\om,p,\ut)$ is just a PSPE in the induced Extensive Form Game, $G(R,\gamma, \om,p,\ut)$; we  assume that the game $G$ is generic unless otherwise stated, and thus each player has a unique preference order of the other players. Since this is a property inherited from EG, we strengthen it somewhat by assuring a global uniformity (Layer~1).

\begin{definition}\label{def:cgvalue}
The \emph{grounded game value} of a grounded {\sc Cumulative Game} $(R,\gamma,\om,p,\ut)$ is the game value, $\vec v(G)\in\R^n$, of the induced Extensive Form Game $G$. 
\end{definition}
Although this definition gives a valid and important notion for a combinatorial game, it is not what is usually called a `game value'. 
For the discussion to proceed, one needs to define the notion of an outcome function. While we do this, we will also prove some important properties of the function, while restricting the class of all {\sc Cumulative Games} appropriately.

\section{Outcome Functions and the Main Results}\label{sec:valout}

We generalize the CGT-outcome function e.g. \cite{BCG} in our setting for {\sc Cumulative Game}s to include general-sum games, and to allow for any prescribed strategy profile (not just the ones in PSPE). Our definition generalizes the outcome functions of Section~\ref{sec:cumsub}. The natural self-interest restriction of the utility function will be required, for the outcome to carry the important PSPE-information (Theorem~\ref{thm:main1}), assuming that games are generic.

In Definition~\ref{def:out}, the notation $\vec\omega^{(\vec a)}$, is the induced $\vec\sigma$-option of $\vec\omega$, but without mention of the reward and  cumulation. This is intentional, as seen in the proof of Lemma~\ref{lem:main}. 

\begin{definition}[$\sig$-outcome]\label{def:out}
Consider a given profile $\vec \sigma$ for a Cumulative Game Form. Let $\vec o_{\vec\sigma}: \Omega\times \n\rightarrow \R^n$. 
For any grounded terminal position $(\vec\omega,p)^\T$, set $\vec o_{\vec\sigma}(\vec\omega, p) := \vec 0$. 
For any non-terminal $(\vec\omega, p)$, with $\vec a = \vec\sigma(\vec\omega)\in \A(\vec\omega)$, set
$$\vec o_{\vec\sigma}(\vec\omega, p) := \vec o_{\vec\sigma}\left(\vec\omega^{(\vec a)},\gamma(\vec\omega,p)\right)+\re\left(\vec\omega, p, \vec a\right),$$ 
The $\vec{\sigma}$-outcome is an $n\times n$-matrix, denoted $\vec o_{\vec\sigma}(\vec\omega)$, one row vector $\vec o_{\vec\sigma}(\vec\omega, p)$ for each current player $p$. The row vectors are the groundeded $\vec\sigma$-outcomes. 
\end{definition}
In traditional CGT, the outcomes describe the result in optimal/perfect play, for either player as a starting player.  Similarly, by Theroem~\ref{thm:main1}, if for each player, the utility function is the identity funtion, the $i$th column vector of $\vec o_{\vec\sigma}(\vec\omega)$, $\vec o^i_{\vec\sigma}(\vec\omega)$, describes Player~$i$'s  result, in terms of accumulated rewards, depending on who starts, given the profile $\vec\sigma$, and if the initial cumulation vector is $\vec 0$. This does not depend on whether the rules are cumulation dependent. Cumulation independency is important if we want a tractable outcome computation, as the second main result, Theorem~\ref{thm:main2} shows. The lemma shows that the rewards act the anticipated way, with respect to the outcome function. The initial cumulation $\C$ may affect the actions and/or rewards, but the outcome function is sensitive to any such influence, so we do not require cumulation independency. The outcome function is general, as it should.


\begin{lemma}\label{lem:main}
Consider a feasible CGF, and a given strategy profile $\vec\sigma$. 
For any grounded position $(\vec\omega,p)$, the grounded \sig-outcome is 
$\vec o_{\vec\sigma}(\vec\omega, p) = \vec c({\vec\sigma}, \vec\omega,p) - \vec C(\vec\omega, p)$.
\end{lemma}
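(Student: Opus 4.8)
The plan is to induct on the length $k$ of the play path that $\vec\sigma$ determines from the grounded position $(\omega,p)$ to its terminal $\tau_{\vec\sigma}(\omega,p)$; this is a legitimate well-founded induction because the Cumulative Game Form is feasible, so every grounded game terminates and $k$ is finite. I would use three ingredients. First, the recursive definition of the $\vec\sigma$-outcome (Definition~\ref{def:out}), taken componentwise: $o^i_{\vec\sigma}(\omega,p)=o^i_{\vec\sigma}(\omega^{(a)},\gamma(\omega,p))+\re(\omega,p,a)_i$, where $a=\sigma(\omega)$ and $\re(\omega,p,a)_i$ is player $i$'s reward component. Second, the cumulation update rule $\C(\omega^{(a)})=\C(\omega)+\re(\omega,p,a)$, read directly off $\omega^{(\a)}=(\x+\a,\C+\re(\om,p,\a))$ in Definition~\ref{def:cumgameform}; note that the cumulation depends only on $\omega$, so despite the $(\omega,p)$ notation in the statement I write $C_i(\omega)$. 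Third, the defining identity $c_i(\vec\sigma,\omega,p)=C_i(\tau_{\vec\sigma}(\omega,p))$ for the cumulative map. (In the one-heap case $\C$ is an $n$-vector and $\re(\omega,p,a)\in\R^n$.)

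For the base case $k=0$, the position $(\omega,p)$ is terminal, so $\tau_{\vec\sigma}(\omega,p)=(\omega,p)$ and hence $c_i(\vec\sigma,\omega,p)=C_i(\omega)$; meanwhile Definition~\ref{def:out} sets $\vec o_{\vec\sigma}(\omega,p)=\vec 0$, so both sides vanish. For the inductive step, write $p'=\gamma(\omega,p)$. The child $(\omega^{(a)},p')$ has a $\vec\sigma$-path of length $k-1$, so the inductive hypothesis gives $o^i_{\vec\sigma}(\omega^{(a)},p')=c_i(\vec\sigma,\omega^{(a)},p')-C_i(\omega^{(a)})$. Combining this with the recursion and the update rule,
\begin{align*}
o^i_{\vec\sigma}(\omega,p)
&= o^i_{\vec\sigma}(\omega^{(a)},p') + \re(\omega,p,a)_i \\
&= c_i(\vec\sigma,\omega^{(a)},p') - C_i(\omega^{(a)}) + \re(\omega,p,a)_i \\
&= c_i(\vec\sigma,\omega^{(a)},p') - C_i(\omega),
\end{align*}
where the last line cancels $C_i(\omega^{(a)})=C_i(\omega)+\re(\omega,p,a)_i$. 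So the rewards telescope against the cumulation increments, and only the terminal cumulation survives.

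It remains to replace $c_i(\vec\sigma,\omega^{(a)},p')$ by $c_i(\vec\sigma,\omega,p)$, and I expect this to be the one genuinely content-bearing step. It amounts to showing $\tau_{\vec\sigma}(\omega,p)=\tau_{\vec\sigma}(\omega^{(a)},p')$, i.e. that moving one step along the $\vec\sigma$-path does not change the terminal eventually reached. This follows from the consistency of the Cumulative-Game strategy profile $\vec\sigma$ (the restriction of $\vec\varsigma_{\omega,p}$ to the subgame rooted at the selected child is exactly $\vec\varsigma_{\omega^{(a)},p'}$) together with Observation~\ref{obs:const}, which states the terminal map is constant along the unique profile path; since $(\omega^{(a)},p')$ is the option that $\vec\sigma$ selects, it lies on that path. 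Because $c_i(\vec\sigma,\cdot)=C_i(\tau_{\vec\sigma}(\cdot))$, the two cumulative maps coincide, and substituting this into the display finishes the induction. Everything apart from this consistency argument is the routine telescoping cancellation above.
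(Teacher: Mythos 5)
Your proof is correct and follows essentially the same route as the paper's: induction on the length of the $\vec\sigma$-play path, using the recursive definition of the outcome, the cumulation-update/reward identity, and Observation~\ref{obs:const} to identify $c_i(\vec\sigma,\omega^{(a)},\gamma(\omega,p))$ with $c_i(\vec\sigma,\omega,p)$. You merely perform the telescoping cancellation and the terminal-map substitution in the opposite order, and you are somewhat more explicit than the paper about why the induction is well-founded (feasibility) and why the terminal map is preserved along the selected option (consistency of the profile), which is a welcome clarification rather than a deviation.
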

\begin{proof}
Indeed, if $(\vec\omega,p)\in T$, then by definition $\vec c({\vec\sigma},\vec\omega,p) - \vec C(\vec\omega,p)=\vec 0=\vec o_{\vec\sigma}(\vec\omega, p)$. Otherwise, if $(\vec\omega,p)\in (\Omega\times\n)\setminus T$, then
\begin{align}
o_{\vec\sigma}(\vec\omega, p)&= o_{\vec\sigma}\left(\vec\omega^{(\vec a)},\gamma(\vec\omega, p)\right)+\vec r_i\left(\vec\omega, p,\vec a\right)\label{eq:out1}\\
&=\left (\vec c \left({\vec\sigma},\vec\omega^{(\vec a)},\gamma(\vec\omega, p)\right) - \vec C\left(\vec\omega^{(a)},\gamma(\vec\omega, p)\right)\right)+\vec r \left(\vec\omega, p,\vec a\right)\label{eq:out2}\\
&=\vec c({\vec\sigma},\vec\omega,p) - \left(\vec C \left(\vec\omega^{(\vec a)},\gamma(\vec\omega, p)\right)-\vec r \left(\vec\omega, p,\vec a\right)\right)\label{eq:obs}\\
&=\vec c ({\vec\sigma},\vec\omega,p) - \vec C \left(\vec\omega,p\right).\label{eq:out4}
\end{align}
For \eqref{eq:out1}, we use the recursive definition of the outcome function, and for  \eqref{eq:out2}, we use the induction hypothesis. For \eqref{eq:obs}, we use Observation~\ref{obs:const}.  For \eqref{eq:out4}, recall that change in $\vec C$ after action $\vec a$ is exactly the reward.  
\end{proof}

The outcome function is used in zero-sum combinatorial Subtraction Games to characterize the optimal outcomes \cite{BCG,Si}. However it is also meaningful when considering self-interest extensive-form games. Next we show how the \sig-outcome finds the vector of grounded game values in case of generic {\sc Cumulative Games} with self-interest utility.

\begin{theorem}[Outcome-Utility Connection]\label{thm:main1}
Consider a feasible self-interest {\sc Cumulative Game}, and a given strategy profile $\vec\sigma$. 
Then, for any grounded position $(\vec\omega,p)=(\vec x,\C,p)$, Player~$i$'s utility is 
$\mu_i(\vec\sigma,(\vec\omega,p))=o^i_{\vec\sigma}(\vec\omega, p)+C_i(\vec\omega, p).$
In particular, this holds for a strategy profile $\vec\sigma^*$ in PSPE.
\end{theorem}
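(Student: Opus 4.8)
The plan is to reduce the statement entirely to Lemma~\ref{lem:main}, observing that under self-interest utility the quantity $\mu_i(\vec\sigma,(\omega,p))$ is nothing but the terminal cumulation of player~$i$ accrued along the play induced by $\vec\sigma$. First I would unwind the definition of the utility map: by construction $\mu_i(\vec\sigma,(\omega,p)) = u_i\bigl(\C(\tau_{\vec\sigma}(\omega,p)),\p\bigr)$, where $\tau_{\vec\sigma}(\omega,p)$ is the terminal grounded position reached by following $\vec\sigma$ from $(\omega,p)$. Since the game has self-interest utility, Definition~\ref{def:util} specializes (in the one-heap case treated here) to $u_i(\C,\p)=C_i$ with no summation, so this equals $C_i(\tau_{\vec\sigma}(\omega,p))$, which by the definition of the cumulative map is exactly $c_i(\vec\sigma,\omega,p)$.

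Next I would invoke Lemma~\ref{lem:main}, which asserts $o^i_{\vec\sigma}(\omega,p) = c_i(\vec\sigma,\omega,p) - C_i(\omega,p)$. Rearranging this and substituting the identity from the previous step yields $\mu_i(\vec\sigma,(\omega,p)) = c_i(\vec\sigma,\omega,p) = o^i_{\vec\sigma}(\omega,p) + C_i(\omega,p)$, which is precisely the claim. I would stress that this chain holds for an arbitrary profile $\vec\sigma$: genericity plays no role in the core identity, since the whole inductive/telescoping content has already been discharged inside Lemma~\ref{lem:main}.

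For the ``in particular'' clause, I would simply note that a PSPE profile $\vec\sigma^*$ is one admissible instance of $\vec\sigma$, so the identity applies verbatim with $\vec\sigma=\vec\sigma^*$. This is where the genericity hypothesis earns its keep: it guarantees that a PSPE exists and that its value is unique, so that $\mu_i(\vec\sigma^*,(\omega,p))$ is well defined independently of which PSPE realizes the equilibrium value.

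I do not anticipate a genuine obstacle here; the theorem is essentially a repackaging of Lemma~\ref{lem:main} through the self-interest utility convention. The only point requiring care is the faithful identification of $\mu_i$ (defined via the induced Extensive Form Game's $U_i$) with $u_i$ evaluated at the reached terminal, and then of $u_i$ with the terminal cumulation $C_i$; once the one-heap specialization $u_i(\C,\p)=C_i$ is made explicit, this identification is immediate and the rest is pure substitution.
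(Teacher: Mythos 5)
Your proposal is correct and follows essentially the same route as the paper: the paper's proof is exactly the two-step chain $\mu_i(\vec\sigma,(\omega,p)) = c_i(\vec\sigma,\omega,p)$ by identity (self-interest) utility, followed by Lemma~\ref{lem:main} to rewrite $c_i(\vec\sigma,\omega,p)$ as $o^i_{\vec\sigma}(\omega,p)+C_i(\omega,p)$. Your additional remarks on where genericity enters (only for the well-definedness of the PSPE value, not for the identity itself) are accurate and merely make explicit what the paper leaves implicit.
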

\begin{proof}
\begin{align*}
\mu_i(\vec\sigma,(\vec\omega,p)) &= c_i({\vec\sigma},\vec\omega,p) \tag{by identity utility}\\
	&= o^i_{\vec\sigma}(\vec\omega,p)+C_i(\vec\omega,p), \tag{by Lemma~\ref{lem:main}}
\end{align*}
as required.
\end{proof}

\begin{corollary} The grounded game value of a grounded position $G=(\vec\omega,p)$ as in Theorem~\ref{thm:main1} is 
 $\vec  v(G)=\vec o_{\vec\sigma^*}(\vec\omega, p)+\C$.
\end{corollary}
\begin{proof}
Obvious.
\end{proof}
Although Theorem~\ref{thm:main1} finds the grounded game values via the outcome function, it is not yet evident whether a reasonable efficient algorithm (type the one-row approach in the tables in Sections~\ref{sec:motivnorm} and \ref{sec:cumsub}) could find these values. However, with appropriate restrictions, we can shed some light on this issue.

Recall Property~\ref{prope:cumind}, cumulation independency of rules. 
A player gets rewarded by the actions they take, possibly depending on the heap size, but not by the history of the game. This notion assumes that actions are cumulation independent.  If the rewards are cumulation independent, we write $\re(\vec\omega,p,\vec a) = \re(\vec x,p,\vec a)$, for $\vec x=\vec x(\vec\omega)$. Similarly, if the actions are independent of the cumulations, we write $\A(\vec \omega,p)=\A(\vec x,p)$. Similarly, the turn function becomes instead $\gamma(\vec x,p)$.

\begin{property}[Short Ruleset]
Consider a feasible ruleset. If, for any $\vec\omega$ and any player $p$, the set $\A(\vec\omega)_p$ (or $\A(\vec x)_p$) is finite, then the ruleset is \emph{short}.
\end{property}

Note, that if the ruleset is short, then the sizes of $\A(\vec x)_p$ may not be bounded in terms of the heap sizes in $\vec x$, but, for all $\vec x$ and all $p$, $|\A(\vec x,p)|<\infty$.
\begin{property}[Heap Dynamic]\label{prope:heapsizedyn}
A short and generic ruleset is \emph{heap dynamic} if all actions, all rewards, and the turn function, are independent of the cumulations in all heap  positions. Sometimes we regard all such games as a ruleset  called {\sc Heap Dynamic}.
\end{property}
Note that the variations of {\sc Cumulative Subtraction} as defined in Section~\ref{sec:cumsub} are heap dynamic. 

We state the recursive computation of the $\sig^*$-outcome for a given generic ruleset with cumulation independent reward. The outcome function becomes particularly simple in the case of Heap Dynamic games, and implies the existence of a dynamic programming algorithm to solve the game.\footnote{Such algorithms can also have theoretical importance, as was recently shown in \cite{CLN}, where dynamic programming approach lead to discovery of a simulation of a two-player normal play game via a one dimensional (diamond shaped) cellular automaton.} 

\begin{theorem}[Efficient Outcome]\label{thm:main2}
Consider an instance of Heap Dynamic, with self-interest utility, and let $\vec o = \vec o_{\vec \sigma^*}$. For any grounded position $(\vec \omega, p)$, if $\A(\vec\omega)_p=\varnothing$, then $\vec o(\vec\omega)=\vec 0$, and otherwise 

\begin{align}\label{eq:outmax}
o^p(\vec\omega,p)=\max_{a\in A(\vec\omega)_p}\{o^p(\vec\omega^{(\vec a)},\gamma(\vec x,p))-r_p(\vec x,p,\vec a)\}, 
\end{align}
and if $i\ne p$, then $o^i(\vec\omega,p)=o^i(\vec\omega^{(a^{\!*}))},\gamma(\vec x,p))-r_i(x,p,a^{\!*})$, where $a^{\!*}$ is a generic maximizing action, i.e. an action that in case of indifference follows the preference order of player~$p$. 

\end{theorem}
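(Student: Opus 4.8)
The plan is to argue by backward induction on the finite game tree rooted at $(\omega,p)$; this tree is finite because the form is feasible, so every play terminates. Throughout, let $\p=\gamma(x,p)$ denote the current player at $(\omega,p)$, i.e. the unique player who actually selects an action, and let $\vec\sigma^*$ be the PSPE profile defining $\vec o=\vec o_{\vec\sigma^*}$. The two displayed identities then split naturally: the maximization is carried out for the current player $\p$, while every other player's entry is a passive consequence of $\p$'s choice. The base case is immediate: if $A(\omega,p)=\varnothing$ then $(\omega,p)$ is terminal, and Definition~\ref{def:out} sets $\vec o(\omega,p)=\vec 0$.

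For the inductive step, fix a non-terminal $(\omega,p)$ and assume the characterization of Theorem~\ref{thm:main1} holds at every child $(\omega^{(a)},\p)$. The key is to convert ``the current player maximizes outcome'' into the displayed recurrence by a double application of Theorem~\ref{thm:main1}. Consider any $a\in A(\omega,p)$ and suppose $\p$ plays $a$ and thereafter all players follow $\vec\sigma^*$ from $(\omega^{(a)},\p)$. By Theorem~\ref{thm:main1} applied at the child, $\p$'s utility in this continuation is $o^{\p}(\omega^{(a)},\p)+C_{\p}(\omega^{(a)},\p)$. Because the reward is cumulation independent, the cumulation update gives $C_{\p}(\omega^{(a)},\p)=C_{\p}(\omega,p)+\rho_{\p}(x,p,a)$, so this utility equals $o^{\p}(\omega^{(a)},\p)+\rho_{\p}(x,p,a)+C_{\p}(\omega,p)$, where the last summand does not depend on $a$. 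Since $\vec\sigma^*$ is a PSPE (Definition~\ref{def:PSPE}) and the game is finite, no single deviation of $\p$ at $(\omega,p)$ is profitable; hence the action prescribed by $\vec\sigma^*$ maximizes the quantity above over $a\in A(\omega,p)$, equivalently it maximizes $o^{\p}(\omega^{(a)},\p)+\rho_{\p}(x,p,a)$. Substituting this optimal action into the one-step recursion of Definition~\ref{def:out} (equivalently Lemma~\ref{lem:main}) yields exactly the max-recurrence \eqref{eq:outmax} for the current player $\p$.

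It remains to treat the other players and to record the efficiency consequence. Let $a'$ be the maximizing action; genericity of the ruleset makes $a'$ unique once ties are broken by $\p$'s fixed preference order, so $\vec\sigma^*$ is well defined. For $i\neq\p$, player $i$ takes no action at $(\omega,p)$, so the realized action is $a'$ irrespective of $i$, and Definition~\ref{def:out} together with cumulation independence gives $o^i(\omega,p)=o^i(\omega^{(a')},\p)+\rho_i(x,p,a')$, which is the second identity. This closes the induction. Finally, when the game is heap size dynamic (Property~\ref{prope:heapsizedyn}) the right-hand sides depend on $\omega$ only through the heap $x$, so a short secondary induction shows the outcomes themselves depend only on $x$, and the recurrence can be unwound by dynamic programming.

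The main obstacle is the step ``PSPE implies the prescribed action maximizes $o^{\p}(\omega^{(a)},\p)+\rho_{\p}(x,p,a)$''. This requires the single-deviation characterization of subgame-perfect equilibrium (legitimate here precisely because the game is finite) coupled with a careful two-level use of Theorem~\ref{thm:main1}: once at the parent to equate a player's utility with outcome plus current cumulation, and once at each child to evaluate a candidate deviation. Cumulation independence is exactly what lets the reward appear as the additive, action-only term $\rho_{\p}(x,p,a)$, so that the constant current cumulation $C_{\p}(\omega,p)$ cancels out of the $\argmax$ and the clean recurrence survives.
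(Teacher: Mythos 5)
Your proof is correct and follows essentially the same route as the paper's (very terse) proof: combine Theorem~\ref{thm:main1} with the recursive Definition~\ref{def:out} via Lemma~\ref{lem:main}, use the one-deviation property of the PSPE to turn the current player's utility maximization into the max-recurrence, and invoke cumulation independence plus shortness to make the $\max$ well defined over a finite action set. Your write-up is simply the fully expanded backward induction that the paper compresses into two lines.

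One caveat worth recording: you derive $o^{\p}(\omega,p)=\max_a\{o^{\p}(\omega^{(a)},\p)+\rho_{\p}(x,p,a)\}$ and $o^i(\omega,p)=o^i(\omega^{(a')},\p)+\rho_i(x,p,a')$, with a \emph{plus} sign, and then assert these are ``exactly'' the displayed identities; as printed, \eqref{eq:outmax} and the second identity carry a \emph{minus} sign in front of $\rho$, and index the maximized coordinate by the previous player $p$ rather than the current player $\gamma(x,p)$. Your version is the one consistent with Definition~\ref{def:out} and Lemma~\ref{lem:main} (and with the two-player Definition of the self-interest outcome, where the maximized coordinate belongs to the current player), so the discrepancy is an inconsistency internal to the paper's statement rather than an error in your argument --- but you should not claim literal agreement with the displayed formulas without noting the sign and indexing conventions you are using.
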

\begin{proof}
Combine Definition~\ref{def:util} with Theorem~\ref{thm:main1}. Observe that, since the rewards are cumulation independent, then the $\max$ operator is well defined, and, because the ruleset is short, there are only finitely many actions available, for any given position $\vec\omega$. Indeed, even if the actions are cumulation dependent, we can find a required generic $a^{\!*}$, and similarly for $\gamma(\vec x)$. 
\end{proof}

\begin{observation}[Heap Dynamic Complexity]\label{obs:complexity}
If the {\sc Cumulative Game} $G$ is Heap Dynamic, then the complexity of finding the outcome, the vector of game values is linear in the input size which is bounded by $\max |A(x)|\cdot \text{rank} (G)$. In the case of a subtraction game $G=S(x)$, this is bounded by $x^2$. 
If actions depend on the cumulations, we cannot a priori say anything about the input size in terms of $x$, apart from bounding it by the number of game states, which is in general exponential in rank$(G)$. 
\end{observation}

\section{Partially ordered heap monoids}\label{sec:disum}

The real elegance of CGT starts with the notion of disjunctive sum, game comparison and the partial order induced by the outcome function \cite{BCG, Co, Si}. For all this to make sense, the perhaps most important property is that of additive closure. For, if we add two games in a well defined class of games, then their sum should remain in the same class. Traditional recreational rulesets are mostly not additively closed, e.g. Tic-tac-toe: if we add two Tic-tac-toe positions the resulting game does not (easily) correspond to another Tic-tac-toe position. On the other hand, rulesets composed in the context of CGT often satisfy closure properties,\footnote{But not always. If not one can define a ruleset closure of all sums of positions in a given ruleset; se e.g. \cite{PS, Mi}. This is not necessary in our study. Therefore, our class has potentially good properties for future theoretical work generalizing the current framework of CGT to include general-sum, and so on.} such as Nim, Domineering, Hackenbush, and many others. In normal play though, all theory goes through, without investigating additivity properties of specific rulesets. In mis\`ere play, various restrictions are required to analyze disjunctive sums and partial orders.

In this section, we designate the use of letters $G,H$ to Layer~2 games, that is (ungrounded) heap positions of Cumulative Game Forms. 
In Section~\ref{sec:layers}, we defined a disjunctive sum of Layer~2 $n$-player games, $G+_{\! n}H$, but without specifying the details of the addition rule for {\sc Cumulative Games} as defined in Section~\ref{sec:CGF}. The sum of the games belongs to the game space $\Omega(d^+,n)$, where $d^+ = {d_H+d_G}$, the total number of heaps in both $G$ and $H$. 

Let us define vector concatenation. If $\vec x\in \mathbb N^d$ and $\vec y\in \mathbb N^{f}$ then $$\vec x \vec y = (x_1,\ldots x_d,y_1,\ldots , y_{f})\in \mathbb N^{d+f}.$$ 
For all players $i\in\n$, let $$\A^+(G+_nH)_i = \{\vec a\vec 0 \mid \a \in \A(G)_i,\vec 0\in \N^{d_H}\}\cup\{\vec 0\vec a\mid \vec 0\in \N^{d_G}, \a\in\A(H)_i\}.$$ 
For all players $i\in\n$,  for all $\a\in\A(G)_i$, let $$\re^+(G+_n H,i,\a\vec 0) =\re_G(G,i,\a)\, \vec 0,$$
and, for all $\a\in\A(H)_i$, let $$\re^+(G+_n H,i,\vec 0\a) =\vec 0 \, \re_{\!H}(H,i,\a).$$

Thus, altogether, all actions of $G+_nH$ are well defined, and they have well defined rewards. And so the disjunctive sum rulesets are well defined. By combining this with the recursive definition of a disjunctive sum of games \eqref{eq:disj}, we get:

\begin{theorem}
The sum of two heap positions of a Cumulative Game Form is a heap position of a Cumulative game Form. 
\end{theorem}

Note, that the turn function is irrelevant on Layer~2. But, if we want to compute outcomes, utilities, and so on, then the turn function must be defined, so we suggest a given $n$-player turn function, when we add games.

\begin{definition}\label{def:comp}
Let $G$ and $H$ be $n$-player {\sc Cumulative Games}, on a given turn function. Then player $i$ weakly prefers game $G$ to game $H$, i.e. $G \ge_p H$ if, for all {\sc Cumulative Games} $X$, $\vec o^i(G+_nX)\ge \vec o^i(H+_nX)$. That is, for all starting players $p$, $o^i(G+X,p)\ge o^i(H+X,p)$. 
\end{definition}

\begin{theorem}
The class of {\sc Cumulative Games} is a partially ordered heap monoid, under disjunctive sum, with respect to player $j\in\n$, say. Similarly, the restriction {\sc Heap Dynamic} is a partially ordered heap monoid.
\end{theorem}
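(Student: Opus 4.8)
The plan is to verify the two defining ingredients of a partially ordered monoid separately: first that $(\text{Cumulative Games},+)$ is a commutative monoid, and second that the relation $\ge_j$ of Definition~\ref{def:comp} is a partial order on it which is compatible with $+$ (translation invariant). Much of the monoid side is already in hand: the closure ``a sum of two Cumulative Games is a Cumulative Game'' and the neutral element $\om=(\x,\vec 0)$ with $\A(\om,p)=\emptyset$ for all $p$ were established just above. What remains there is associativity and commutativity, and both follow structurally from the definition of $+$ on the four data (positions, actions, rewards, utilities): a disjunctive sum concatenates the heap tuples and sums the per-heap data, concatenation of tuples together with componentwise addition is associative, and the symmetric role of heaps (utility is a sum over heaps, and play in distinct heaps is independent) makes $G+H$ and $H+G$ agree as games up to the irrelevant ordering of heaps. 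I would present associativity and commutativity as immediate and spend no space on the bookkeeping.

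The order side splits into reflexivity and transitivity (easy), compatibility (the crux), and antisymmetry (a remark). First I would record that, because we fix the cyclic turn order $\p=p\bmod n+1$ and assume generic rulesets, the PSPE value, hence the outcome matrix $\vec o=\vec o_{\vec\sigma^*}$, is well defined on every sum, so all the comparisons below are between genuine real vectors. Since $\vec o_j(\cdot)\ge\vec o_j(\cdot)$ is the entrywise order of the column indexed by the starting player, reflexivity of $\ge_j$ is immediate, and transitivity follows coordinatewise from transitivity of $\le$ on $\R$: from $\vec o_j(G+X)\ge\vec o_j(H+X)$ and $\vec o_j(H+X)\ge\vec o_j(K+X)$ for every $X$ we read off $G\ge_j K$.

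The key step is compatibility: $G\ge_j H$ implies $G+K\ge_j H+K$. Here I would exploit the universal quantifier already built into Definition~\ref{def:comp}. By associativity, for any test game $Y$ we have $(G+K)+Y=G+(K+Y)$ and $(H+K)+Y=H+(K+Y)$, and $K+Y$ is itself a Cumulative Game by closure. Instantiating the hypothesis $G\ge_j H$ at the particular test game $X:=K+Y$ gives $\vec o_j(G+(K+Y))\ge\vec o_j(H+(K+Y))$, which is exactly $\vec o_j((G+K)+Y)\ge\vec o_j((H+K)+Y)$; as $Y$ was arbitrary, $G+K\ge_j H+K$. Finally, $\ge_j$ compares only player $j$'s outcome column, so it is literally a preorder; as is standard in CGT I would pass to the quotient by $G\equiv_j H\iff(G\ge_j H\text{ and }H\ge_j G)$, on which $\ge_j$ is a genuine partial order, and $+$ descends to the quotient precisely because the compatibility just proved shows that $+$ respects $\equiv_j$.

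For the second assertion I would check closure of heap-size dynamic Cumulative Subtraction under $+$: summing two such games keeps the per-heap action sets equal to the subtraction sets, keeps rewards cumulation independent (indeed identity), and keeps the turn function cyclic, so the sum is again heap-size dynamic Cumulative Subtraction; the neutral element lies in the class, and the three order properties were proved uniformly over the class that $X$ ranges over, so they restrict verbatim. The step I expect to need the most care is the well-definedness underpinning everything, namely that $\vec o$ on an arbitrary sum $G+X$ is a single well-defined object: this rests on the fixed cyclic turn function (so $G+X$ is a bona fide Cumulative Game form, the turn function not being part of a layer~2 position) together with the generic-ruleset assumption (so the PSPE value is unique). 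Once that is secured, compatibility is the purely formal ``for all $X$'' manipulation above.
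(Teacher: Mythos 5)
Your proof is correct and follows essentially the same route as the paper's: well-definedness of the comparison rests on closure of the class under disjunctive sum (so that $\vec o$ accepts every test sum $G+X$ as input), and the order axioms---in particular compatibility, obtained by instantiating the universal quantifier in Definition~\ref{def:comp} at $X:=K+Y$---are then formal consequences, with the heap-size dynamic case handled by re-checking closure and restricting the ``for all $X$'' quantifier to the subclass. You are in fact more careful than the paper's one-line proof, which does not spell out associativity, transitivity, or the point that $\ge_j$ is a priori only a preorder and becomes a genuine partial order after passing to the quotient by mutual domination.
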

\begin{proof}
Definition~\ref{def:comp} is well defined, because of the closure of the disjunctive sum operator: the outcome function accepts {\sc Cumulative Games}. Similarly, the sum of two heap dynamic games belongs to {\sc Heap Dynamic}, and so, if we also restrict the ``for all $X$'' part in Definition~\ref{def:comp}, we have another partial order.
\end{proof}
Similarly, one can have a subclass of {\sc Cumulative Games} where the rewards are cumulation independent (but where actions may depend on cumulations), and this class would again satisfy all closure properties, and therefore define a partial order specific for this class of games. In CGT, usually, when one restricts the class of attention, then the partial order changes. This is mostly studied in the setting of Mis\`ere games (see \cite{MR, Si2} for surveys). Restriction to subclasses of games can be important to obtain efficient reductions of games, to increase the sizes of the equivalence classes of games.

We have the following conjecture.

\begin{conjecture}
All cumulative games are incomparable, and all heap dynamic games are incomparable.
\end{conjecture}

It is probably required to search for more efficiant restrictions, in the pursuit of finding any interesting structure of a heap monoid of {\sc Cumulative Games}. There are sub classes with a lot of structure, since we can model for example normal play, by selecting rewards and utilities as appropriate, and let $n=2$ players.

\section{Strategic Equivalence of Games}\label{sec:equiv}
Given a position $\om^0$ and a previous player $p^0$, we define the set of all possible descendants under a given ruleset $R$ as $$D_R(\om^0,p^0)=\{(\om,p)\mid \text{there is an $R$-path from $(\om^0,p^0)$ to $(\om,p)$}\}$$

For a given ruleset $R$ and initial state $s^0=(\om^0,p^0)$, let $\length(s)$ denote the number of actions from $s^0$ to $s\in D_R(s^0)$.\footnote{We are abusing notation here a bit, the reason being Observation~\ref{obs:cumext} to come. We are identifying states with paths, since they may not be uniquely identified otherwise in a {\sc Cumulative Game}.} Let $\text{rank}(s^0)=\max \{\length{(s)}\mid s\in D_R(\omega^0,p^0), p^0\in N\}$.

\subsection{Every Cumulative Game is an Extensive Form Game}\label{sec:CGEFG}
A ruleset $\Ru$ together with a game position $\om=(\x,\C)$ and a utility function $\ut$ defines an $n$-tuple of Extensive Form Games. Every grounded {\sc Cumulative Game} is an Extensive Form Game.
\begin{observation}\label{obs:cumext} 
 Any grounded {\sc Cumulative Game}, with an initial state $\om^0\in \Omega$ and a previous player $p^0\in \n$, defines a unique extensive form $F=(\n,S,T,s_0,\delta,g)$, where 
\begin{enumerate}
	\item $S=D_R(\om^0,p^0)$;
	\item $T= \{s\in S \mid A(s,\gamma(s))=\varnothing\}$;
	\item $s^0 = (\om^0,p^0)$;
	\item For any $s=(\om,p)\in S$, $\delta(s)=\gamma(\om,p)$;
	\item For any $s=(\om,p) \in S$, $g(s) = \{\left(\om^{(\a)}, \gamma(s)\right) : \a\in A(s,\gamma(s))\}$, where $\om^{(\a)}=\left( \x + \a,\C+\re(\om,p,\a)\right)$. 
	\end{enumerate}
	Moreover, a grounded {\sc Cumulative Game} together with a utility function $\ut$ defines the Extensive Form Game $G=(F,U)$ where
	\begin{enumerate}
	\item[6.] For any $t=(\vec x,\C) \in T$ and $i\in [n]$, $U_i(t) = u_i(\C,\delta(t))$. 
\end{enumerate}
\end{observation}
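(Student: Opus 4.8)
The plan is to verify, item by item, that the tuple $F=(\n,S,T,s_0,\delta,g)$ produced by the construction meets every requirement in the definition of an Extensive Form Game, and that adjoining $U$ yields a bona fide game; uniqueness is then immediate, because each component is read off deterministically from $(R,\gamma,\om^0,p^0)$ (and $\ut$), so there is nothing to choose.

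First I would settle the conditions that are essentially definitional. That $s^0=(\om^0,p^0)\in S$ follows from the length-$0$ $R$-path, and that every state of $S=D_R(\om^0,p^0)$ is a descendant of $s^0$ is immediate from the definition of $D_R$. That $\delta$ lands in $\n$ holds because $\gamma:\Omega\times\n\to\n$ by the definition of a turn function. That $g(s)\subseteq S$ is the transitivity of $R$-paths: an option of a descendant is again a descendant, so the option set $\{(\om^{(\a)},\gamma(s)):\a\in A(s,\gamma(s))\}$ sits inside $D_R(\om^0,p^0)$. The equivalence $g(s)=\varnothing\iff s\in T$ is read directly off the two definitions, since both sides are governed by the single condition $A(s,\gamma(s))=\varnothing$; this is exactly the notion of a terminal grounded position.

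The two conditions that carry actual content are acyclicity and finiteness of $S$, and both rest on feasibility. For acyclicity, a directed cycle in $g$ would produce a play sequence revisiting a grounded position indefinitely, i.e. an infinite $R$-path, contradicting that every grounded game terminates; equivalently, under the paper's convention of identifying states with the $R$-paths reaching them (flagged in the footnote to $\length$), $F$ is literally a tree of histories and cycles cannot occur. For finiteness I would invoke K\"onig's lemma: feasibility gives that there is no infinite $R$-path from $s^0$, and shortness of the ruleset gives finite branching, $|A(s,\gamma(s))|<\infty$, so the tree of $R$-paths is finite, whence $S$ (a quotient of this finite set of paths) is finite. Here I would flag that finiteness genuinely needs the short-ruleset hypothesis in addition to feasibility, since infinite branching with uniformly bounded depth would already break the finite-$S$ requirement.

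Finally, for the utilities I would check only that $U_i:T\to\R$ is well defined: by Definition~\ref{def:util}, $u_i(\C,\p)=\sum_{h\in[d]}u_{i,h}(C_{i,h},\p)$ is a real number for every terminal cumulation $\C$ and current player $\p=\delta(t)$, so setting $U_i(t)=u_i(\C,\delta(t))$ gives a legitimate utility function and $G=(F,U)$ is an Extensive Form Game. The only genuine obstacle in the whole argument is the interplay between feasibility, shortness, and the path-versus-position bookkeeping of the footnote: once one fixes the convention that states are histories, acyclicity is free and finiteness reduces to K\"onig's lemma, but one must be explicit about this convention to avoid the ambiguity that $\length(s)$ would otherwise suffer when a position is reachable by several distinct paths.
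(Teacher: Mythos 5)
Your verification is correct, and it is essentially the only reasonable route: the paper states this as an Observation with no proof at all, treating the correspondence as definitional, so your item-by-item check (membership of $s^0$, closure of $g(s)$ under descendants, the equivalence $g(s)=\varnothing\iff s\in T$, well-definedness of $U_i$ on $T$) simply makes explicit what the authors leave implicit. The two points where you add genuine content beyond the paper are also right: acyclicity does follow from feasibility (or for free under the states-as-histories convention of the paper's footnote), and finiteness of $S$ is \emph{not} a consequence of feasibility alone but additionally needs finite branching (the short-ruleset property, or K\"onig's lemma applied to the history tree) -- a hypothesis the Observation as stated does not mention, so flagging it is a legitimate and worthwhile correction rather than a defect of your argument.
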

\begin{definition}
We denote by $G(\Ru,\gamma,\om^0,p^0,\ut)$ the Extensive Form Game induced by the the respective grounded {\sc Cumulative Game}  $(\Ru,\gamma,\om^0,p^0,\ut)$, i.e. $(\om^0,p^0)$ under form $(R,\gamma)$ and utility $\ut$. 
\end{definition}

\subsection{Any Extensive Form Game is a one heap Cumulative Game}
Since the move order is arbitrary in EGT but cyclic in the heap games, we cannot hope to find a heap game for each Extensive Form Game. However, we can do almost as well by introducing equivalence classes of Extensive Form Games. 

Consider an Extensive Form Game $G$. We introduce a reduced form of $G$, $\red(G)$ and a cyclic extension, $\cyc(G)$. The idea is that two games are \emph{strategically equivalent} if we bypass each child with exactly one option. When there is no further bypass possible of $G$, we call this game $\red(G)$. Reversely, we can adjoin a sequence of states (each state with exactly one child) between a parent and a child, to \emph{cycle-complete} the game modulo the $n$ players. For any parent-child $s,s'$ such that $\delta(s') = \delta(s)+ k$, we connect $s,s'$ using a path of $k-1$ states $s_1,\ldots , s_{k-1}$ instead of the single edge $(s,s')$, and where $\delta(s_i)=\delta(s)+i$, for all $i$. When each child is cycle-completed, we call this game $\cyc(G)$. The terminal utilities remain the same, and note that terminal states cannot be bypassed. 

\begin{theorem}\label{thm:exthea}
For each Extensive Form Game $G=(\n,S,T,s_0,\delta,g,U)$ there exists a strategically equivalent grounded {\sc Cumulative Game} with a single heap. 
\end{theorem}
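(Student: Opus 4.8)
The plan is to construct, for a given Extensive Form Game $G=(\n,S,T,s_0,\delta,g,U)$, an explicit single-heap Cumulative Game whose induced Extensive Form Game is strategically equivalent to $G$. First I would pass to the cyclic extension $\cyc(G)$, so that the turn order becomes exactly cyclic; this is the object the heap model can natively represent, since in Section~\ref{sec:disum} the turn function $\gamma$ on a game position is forced to be the cyclic map $\p = p \bmod n + 1$. Because $\cyc(G)$ only inserts single-option pass-through states (which never affect any player's choice), and the terminal utilities are preserved, $\cyc(G)$ is strategically equivalent to $G$; so it suffices to encode $\cyc(G)$ as a one-heap Cumulative Game.

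The key idea is to use the heap size as a clock that counts down the depth in the tree, and to push \emph{all} the information of $G$ into the reward and utility maps rather than into the heap dynamics. Concretely, I would set the heap size at the root to $x^0 = \text{rank}(s^0)$ (the maximal path length, from the definition just above the theorem), have every action subtract exactly one pebble, and let the previous player $p^0$ be chosen so that the cyclic turn function reproduces $\delta$ along every path. Since in $\cyc(G)$ consecutive states differ in $\delta$ by exactly $1 \pmod n$, the cyclic $\gamma$ matches $\delta$ automatically once $p^0$ is fixed. The action set $A$ at a position must be a faithful copy of $g(s)$: here I would exploit Observation~\ref{obs:cumext}'s convention of \emph{identifying states with paths} (its footnote warns that Cumulative states may not be otherwise uniquely identified), so that each grounded position corresponds to a unique node $s$ of $\cyc(G)$, and I let $A$ index the children $g(s)$. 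The heap alone cannot distinguish sibling subtrees, but the cumulation — carrying the full play history — does, which is exactly why the identification of states with paths is the right bookkeeping device.

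The utilities are then transported directly: at a terminal position reached along the path corresponding to $t\in T$, I define the reward/utility so that player $i$'s total equals $U_i(t)$. The cleanest route is to make all rewards $\vec 0$ except a final reward (or, equivalently, a utility map $u_{i}$) that reads off $U_i(t)$ from the terminal cumulation/state; since utilities in a Cumulative Game are realized only at terminal positions via \eqref{eq:gitoou}, and a terminal position is identified with its path, this assignment is well defined. A strategy profile in the constructed game then corresponds bijectively to one in $\cyc(G)$, and because the utilities agree node-for-node, the PSPE condition of Definition~\ref{def:PSPE} transfers verbatim, giving strategic equivalence.

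I expect the main obstacle to be the faithful encoding of the \emph{branching structure} of an arbitrary tree into a single-heap model whose state at layer~2 is only $(x,\C)\in\N_0\times\R^n$. A priori two distinct nodes $s,s'$ at the same depth with the same accumulated rewards would collapse to the same heap position, which would let the ruleset and utilities depend on more than $(x,\C)$ — breaking the claim that this is a legitimate Cumulative Game. The resolution, which I would argue carefully, is precisely the path-identification convention flagged in Observation~\ref{obs:cumext}: states are tracked as paths, so the action set and reward at $(\om,p)$ may legitimately depend on the full grounded history, and no genericity or heap-size-dynamic restriction is claimed here (the theorem only asserts existence of \emph{some} strategically equivalent Cumulative Game, not a heap-size-dynamic one). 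Getting the definitions to line up so that this history-dependence is permitted by Definition~\ref{def:cumgameform} — rather than smuggled in illegitimately — is the delicate point; once it is granted, the bijection of strategy profiles and the equality of utilities are routine.
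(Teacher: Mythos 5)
There is a genuine gap, and it is exactly at the point you flag as ``the delicate point'': the encoding of the branching structure fails under your choices. If every action subtracts exactly one pebble and all rewards are $\vec 0$ until the end, then at a node $s$ with several children every child corresponds to the \emph{same} action $a=-1$ and the \emph{same} successor position $(x-1,\vec 0)$. In Definition~\ref{def:cumgameform} the option set of a grounded position is $\{(\om^{(\a)},\gamma(\om,p)) \mid \a\in\A(\om,p)\}$ with $\A(\om,p)\subseteq\Z^d$ and $\om^{(\a)}=(\x+\a,\C+\re(\om,p,\a))$; with a single available $\a$ this set has one element, so the tree collapses to a path and $\cyc(G)$ is not reproduced. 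Your proposed repair --- reading the footnote of Observation~\ref{obs:cumext} as a license to let $\A$ and $\re$ depend on the full play history --- is not available: that footnote is an admitted abuse of notation used only to define $\length$ and $\mathrm{rank}$, whereas the ruleset itself is, by definition, a function of $(\om,p)=(\x,\C,p)$ alone. If your construction needs the ruleset to see more than $(x,\C,p)$, the object you build is not a Cumulative Game, and the theorem is not proved.

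The paper's proof sidesteps this entirely with a preorder enumeration: index the states by $q:S\to[Q]$ in preorder, assign to state $s$ the heap size $x=Q-q(s)$, and let $A(x)=\{x-y : s(y)\in g(s(x))\}$. Distinct states get distinct heap sizes, so siblings are separated by distinct actions landing on distinct heap sizes; the reward $r(\om,a,p)=(a,\ldots,a)$ makes every player's cumulation equal to $Q-x$, so the utility map can read the terminal state off $C_i$ via $u_i(C_i)=U_i(s(C_i))$. Note also that Theorem~\ref{thm:exthea} does not require a cyclic turn function --- the proof simply sets $\gamma(\om,p)=\delta(s(x))$ --- so your detour through $\cyc(G)$ is unnecessary here; the cyclic refinement is the content of the separate Theorem~\ref{thm:exthea2}, which uses a different (level-by-level, heap-increasing) construction. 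To salvage your argument you would need to replace ``subtract one pebble per move'' by an injective state-to-heap-size encoding of this kind, at which point you have essentially rederived the paper's proof.
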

\begin{proof}
Enumerate all states in $S$ from $s_0$ in preorder, so that every node precedes all of its descendents. Denote by $q(s)$ the index of $s$ in this order, and let $Q=|S|$ be the maximal index. Since $q$ is a one-to-one mapping $q:S\rightarrow [Q]$, the function $q^{-1}(z)\in S$ in well defined, and we set $s(x) = q^{-1}(Q-x)$. 

W.l.o.g., $\delta(s')$ is the same for all $s'\in g(s)$ (otherwise we can add dummy states). 

A  position $\omega=(x,\vec C)$ is \emph{valid} if $x+C_j=Q$ for all $j\in N$. Let $\Omega$ contain all valid positions.

Intuitively, every state $s$ in the original game $G$ corresponds to a valid grounded position with heap size $x=Q-q(s)$, and where all players have the same cumulation $C_j = q(s)$. 
We complete the definition of the ruleset $R=([n],d=1,\Omega,A,\vec r)$ as follows. 
Set $A(x) := \{x-y : s(y) \in g(s(x))\}$.
The move order function $\gamma$ is defined as $\gamma(\omega,p):=\delta(s(x))$ where $\omega=(x,\vec C)$.

The (identical reward, identity) reward function is $r(\omega,a,p) := (a,a,\ldots,a)$. This means that if players start from some valid position $\omega = (x,\vec C)$ then after action $a$ that reaches state $\omega = (x',\vec C')$, we will have that $C'_j=(Q-x)+a =(Q-x)+(x-x') = Q-x'$. Thus $\omega'$ is also valid. 

As we intended, every state $s\in S$ in the original game $G$, induces a unique valid position $\omega_s=(x=Q-q(s), \vec C =(q(s),q(s),\ldots,q(s)))$ in the new ruleset $R$.

The initial previous player $p_0$ can be set arbitrarily, as $\gamma$ essentially ignores it.

Finally, we define the self-interest utility function as $u_i(C_i) := U_i(s(C_i))$ if $s(C_i)\in T$ and otherwise $0$. 

We claim that the game $(R,\omega_{s_0},p_0,u)$ is equivalent to $G$. 
 By induction, every move from $s=(\omega,p)\in S$ to $s'=(\omega',p')\in g(s)$ corresponds to a move from $\omega_s$ to $\omega_{s'}$ where $p'=\gamma(\omega,p)$ plays action $a=q(s')-q(s)=x-x'$. When  players in $G$ reach a terminal $t\in T$ and earn $U_i(t)$ each, the corresponding player in $(R,\omega_{s_0},p_0,u)$ gets $u_i(C_i) = U_i(s(C_i))= U_i(s(q(t))) = U_i(t)$, as required.
 \end{proof}

 For example, if $g(s) = \{s',s''\}$ and $q(s)=4, q(s') = 10, q(s'') = 13$, $Q=100$ then $s,s',s''$ correspond to $x=96, x'=90, x''=87$, respectively. $A(96) = \{ 96-90, 96-87\} = \{6,9\}$. Note that $A(x)=\emptyset$ if and only if $g(s(x))=\emptyset$, i.e. if and only if $s(x)\in T$. 

We have another proof of this result, slightly modified, namely where we impose a cyclic turn function on the one heap game, which is the standard in CGT.
\begin{theorem}\label{thm:exthea2}
For each Extensive Form Game $G=(\n,S,T,s_0,\delta,g,U)$ there exists a strategically equivalent grounded {\sc Cumulative Game} with a single heap, and a cyclic turn function. 
\end{theorem}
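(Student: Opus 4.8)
The plan is to reduce to Theorem~\ref{thm:exthea} by first replacing $G$ with its cyclic extension $\cyc(G)$, and then running the same single-heap construction on $\cyc(G)$. Recall that $\cyc(G)$ is obtained by inserting, along every parent--child edge $s\to s'$ with $\delta(s')\equiv\delta(s)+k\pmod n$, a chain of $k-1$ auxiliary single-option states $s_1,\dots,s_{k-1}$ with $\delta(s_i)=\delta(s)+i\pmod n$, leaving terminal states untouched. By construction the turn function of $\cyc(G)$ increments by exactly one modulo $n$ along every surviving edge. First I would record that $\cyc(G)$ is strategically equivalent to $G$: each inserted state carries a forced move and adds no decision, the terminal reached from any original state under any profile is unchanged, and terminal utilities are copied verbatim; equivalently, $\red(\cyc(G))=\red(G)$, which is the defining condition of strategic equivalence. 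Hence it suffices to realize $\cyc(G)$ as a cyclic one-heap Cumulative Game.

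Next I would apply the construction from the proof of Theorem~\ref{thm:exthea} verbatim to $\cyc(G)$: enumerate its states in preorder via $q$, set $Q=|S|$ and $s(x)=q^{-1}(Q-x)$, take the valid positions $\omega=(x,\vec C)$ with $x+C_j=Q$ for all $j$, the action set $A(x)=\{x-y:s(y)\in g(s(x))\}$, the identical reward $\re(\omega,p,a)=(a,\dots,a)$, and the self-interest utility $u_i(C_i)=U_i(s(C_i))$ on terminals. The only modification is the turn function: instead of reading $\gamma(\omega,p):=\delta(s(x))$, I simply declare $\gamma$ cyclic, $\gamma(\omega,p)=\p=p\bmod n+1$, and fix the initial previous player $p_0$ to be the cyclic predecessor of $\delta(s_0)$, i.e.\ the unique player with $\gamma(\cdot,p_0)=\delta(s_0)$.

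The heart of the argument is to verify that this cyclic $\gamma$ reproduces the intended mover at every reachable position, so that the induced Extensive Form Game coincides with the one Theorem~\ref{thm:exthea} builds for $\cyc(G)$. I would argue by induction along a play path from $(\omega_{s_0},p_0)$. The base case holds by the choice of $p_0$. For the inductive step, suppose $\omega_s$ is reached with previous player $p=\delta(s_{\mathrm{par}})$, the player who just moved into $s$. Since $\delta$ increments cyclically in $\cyc(G)$, we have $\delta(s)\equiv\delta(s_{\mathrm{par}})+1\equiv p+1\pmod n$, which is exactly $\gamma(\omega_s,p)$; thus the cyclic turn function selects $\delta(s)$, as needed (the inserted dummy states are covered by the same step). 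With movers, actions, rewards, and utilities all matching those of Theorem~\ref{thm:exthea}, the resulting game is strategically equivalent to $\cyc(G)$, and therefore to $G$.

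I expect the only real obstacle to be bookkeeping around the turn function rather than anything deep. One must check that the relation $\delta(s')\equiv\delta(s)+1$ is intrinsic to $\cyc(G)$ and hence untouched by the preorder reindexing $q$, and that the inserted dummy chains remain forced single-option moves after translation---both are immediate, since $q$ only relabels states and $|g(s_i)|=1$ yields $|A(x)|=1$. Feasibility, validity of positions, and the terminal-utility match then carry over unchanged from the proof of Theorem~\ref{thm:exthea}.
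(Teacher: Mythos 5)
Your proof is correct, and it shares the paper's first move (passing to the cyclic extension $\cyc(G)$) but then diverges: you reduce to Theorem~\ref{thm:exthea} by rerunning its preorder/subtraction construction on $\cyc(G)$ verbatim and only swapping in the cyclic turn function, whereas the paper builds a fresh single-heap game from scratch in which the heap starts at size $0$ and the players \emph{add} pebbles, with heap sizes assigned level by level (breadth-first over $\length(s)$) and the terminal utilities encoded through rewards attached to the terminating actions. Your route buys economy and rigor: all the bookkeeping (validity of positions, feasibility, the cumulation identity $C_j=Q-x$, the terminal-utility match) is inherited from the already-verified Theorem~\ref{thm:exthea}, and the only new obligation is the induction showing that $\gamma(\omega_s,p)=p\bmod n+1$ agrees with $\delta(s)$ along every reachable play path --- which works precisely because in $\cyc(G)$ every immediate predecessor $r$ of $s$ satisfies $\delta(s)\equiv\delta(r)+1\pmod n$, so the previous player at $\omega_s$ is path-independent even when the game graph is a DAG rather than a tree (a point worth stating explicitly, since it is what makes your induction well-founded). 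You are also more careful than the paper in recording why $\cyc(G)$ is strategically equivalent to $G$ via $\red(\cyc(G))=\red(G)$. What the paper's construction buys instead is an illustration of a different corner of the model's flexibility: actions in $\Z^d$ may increase heaps, and utilities can be realized through the reward function at terminating moves rather than through a utility map read off the cumulation. Both arguments are sound; yours is the tighter one.
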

\begin{proof}
We construct a one heap game $\Delta$, with $x_0 = 0$ and $\C_0=\vec 0$, and we will let the $n$ players increase the heap size, assuming they have sufficient budgets of pebbles. Study $\cyc(G)$. The, say $a$, children of $s_0$, the root of $\cyc(G)$, can be enumerated $s_1,\ldots , s_a$. For each child $s_i$ let the heap size be $x_0+i$, so $A(s_0) = \{1,\ldots , a\}$, and the cumulation is updated trivially (no rewards).  

Study an arbitrary non-terminal node $s$ at $\length(s)=\ell$, and suppose that the heap sizes on level $\ell-1$ range between $X$ and $Y$, with $X<Y$. We enumerate all children at depth $\ell$, denoted say $s_{\ell,1},\ldots s_{\ell,\alpha}$, and let the actions be $Y-X, Y-X+1,\ldots , Y-X+\alpha$. These actions are applied to the heap sizes of the parents at depth $\ell-1$ in increasing order, and we may assume the parents were enumerated in non-decreasing order. Again, the rewards are trivial, unless the action is to a terminal state. In this way we obtain a 1-1 mapping of heap sizes with the original game states in the EG. It remains to assign the correct utilities, and this will be achieved by setting them to the corresponding rewards for each terminating action. Empty sets of actions are attached to the terminal heap sizes.
\end{proof}

Consider a cyclic turn function. By combining Observation~\ref{obs:cumext} with Theorem~\ref{thm:exthea2}, a consequence is that any multi-heap game can be simulated by a single heap game. In fact, this is somewhat simpler than the generic case since a multi-heap game regarded as an Extensive Form Game is already cycle-completed. 
\begin{corollary}
Consider a cyclic turn function. Each grounded multi-heap game has an equivalent one heap game. That is, their game trees and utilities are the same.
\end{corollary}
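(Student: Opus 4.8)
The plan is to chain the two correspondences already established in this section. Given a grounded multi-heap Cumulative Game $(\Ru,\gamma,\om^0,p^0,\ut)$ whose turn function $\gamma$ is cyclic, I would first apply Observation~\ref{obs:cumext} to obtain the induced Extensive Form Game $G=(\n,S,T,s_0,\delta,g,U)$, and then feed this $G$ into the construction of Theorem~\ref{thm:exthea2} to produce a strategically equivalent single-heap game $\Delta$ with a cyclic turn function. The content of the corollary is then that, in this cyclic setting, the passage through $G$ does not merely preserve strategic behaviour but preserves the game tree and terminal utilities verbatim.

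The key point that upgrades ``strategically equivalent'' to ``same game tree'' is that a multi-heap game with a cyclic turn function is already cycle-completed, i.e. $\cyc(G)=G$. Since $\gamma$ is cyclic, each move advances the previous-player index by exactly one step: for every parent--child pair $s,s'$ in $g$ we have $\delta(s')=\delta(s)\bmod n+1$. This is precisely the $k=1$ instance of the cycle-completion described before Theorem~\ref{thm:exthea}, in which the path of $k-1$ interpolating states is empty, so no edge is subdivided and no dummy node is inserted. Consequently the state enumeration of Theorem~\ref{thm:exthea2} becomes a plain relabeling of $S$ by heap sizes, and the resulting game function of $\Delta$ is isomorphic, as a rooted tree, to that of $G$.

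It then remains to check that utilities transfer unchanged. In the construction of Theorem~\ref{thm:exthea2} each terminating action carries a reward equal to the original terminal utility, while non-terminal rewards are trivial; because the state-to-heap correspondence is a bijection that sends terminals to terminals, the utility realized at each leaf of $\Delta$ equals $U_i$ at the corresponding leaf of $G$, which in turn is the utility of the original multi-heap game at that terminal. Combining the tree isomorphism with this leaf-wise equality of utilities yields exactly the claimed equivalence: the multi-heap game and $\Delta$ have the same game tree and the same utilities.

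The main obstacle I anticipate is bookkeeping rather than conceptual: one must verify carefully that cycle-completeness genuinely makes the subdivision step of Theorem~\ref{thm:exthea2} vacuous, so that the induced map on states is an honest tree isomorphism and cannot silently collapse or expand nodes (which is what one must guard against in the general, non-cyclic case). Once this is confirmed, the remainder is a direct transcription of the $G\mapsto\Delta$ construction already validated by the theorem.
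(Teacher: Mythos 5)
Your proposal is correct and follows essentially the same route as the paper, whose entire proof is ``Combine Observation~\ref{obs:cumext} with Theorem~\ref{thm:exthea}'' together with the preceding remark that a multi-heap game with cyclic turns is already cycle-completed. You simply spell out in more detail why the cycle-completion step is vacuous and hence why the correspondence is an exact tree isomorphism rather than mere strategic equivalence; this is a faithful elaboration, not a different argument.
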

\begin{proof}
Combine Observation~\ref{obs:cumext} with Theorem~\ref{thm:exthea}.
\end{proof}

Observe that the projection of multiple heaps to one heap here is on layer~3, grounded positions, whereas the famous result that any multi-heap nim position is equivalent to a single heap is a layer~2 result, in the sense that a starting player is not fixed. Equivalence on layer~2, with arbitrary starting player, is in general harder, and we will briefly return to this question in Section~\ref{sec:disum}.

In fact any combinatorial game with cyclic (or alternating) turns, with a given starting player, is equivalent to a one heap {\sc Cumulative Game}, since grounded combinatorial games are Extensive Form Games.

\section{Discussion}\label{sec:discussion}

Let us return to the simpler cases of analysis, concerning zero-sum and self-interest Subtraction Games. 
Any action, that is consistent with an outcome function as in Definitions~\ref{def:outS} and~\ref{def:outCSgs}, will be called an {\em optimal-action}, in a given context.

\begin{observation}[Zero-sum vs. Self-interest]\label{obs:zssi}
One of the first general sum questions for Subtraction Games is: for fixed heap sizes, and fixed subtraction sets, when do Definitions~\ref{def:outS} and~\ref{def:outCSgs} assign the same optimal-action sets? A first (probably correct) guess is that the antagonistic variation is much closer to the zero-sum setting, than the friendly variation. One interesting problem is to explore precisely how much closer that is. And we provide some intuition via some preliminary computations. 

For subtraction sets of size 2, we have not yet detected any difference between zero-sum and self-interest antagonistic optimal-actions.  But for the friendly variation, the first difference appears already on the subtraction set $S=\{3,5\}$, at heap size 14, and where $o_\zs(14)=3$, obtained by subtracting 5 pebbles, but $o_\si^1(14)-o_\si^2(14)=2$, which is obtained by subtracting 3 pebbles. For subtraction sets of size 3, we have detected the first difference of the antagonistic and zero-sum variations for the subtraction set $S=\{6,13,17\}$, at a heap of size 76. The optimal action is  either 6 or 17, and $o_\si^1(76)-o_\si^2(76)=4$, whereas $o_\zs(76)=5$. Moreover, the number of such {\em critical} two or three element subtraction sets with numbers weakly smaller than 20 is one for the antagonistic case, whereas in the friendly case we find altogether 493 cases. If we increase to 30 we find 16 and 2081cases respectively and by increasing  40, we find 68 and 5386 cases respectively. We conjecture that both these numbers grow towards infinity with $\max S$. Fix a  subtraction set. We conjecture that the outcome discrepancy is bounded (for either antagonistic or friendly tie breaking convention), and this would be a corollary of another conjecture, that the greedy action is eventually an optimal action for self-interest (independent of tie-breaking convention); this was proved for zero-sum games in \cite{CLMW}. On the other hand, we conjecture that the outcome discrepancy can be arbitrarily large when we let the subtraction set vary.  
\end{observation}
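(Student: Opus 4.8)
The statement bundles several directly checkable facts together with a list of conjectures, so the plan splits accordingly. The numerical assertions---that for $S=\{3,5\}$ one has $o_\zs(14)=3$ (via subtracting $5$) while friendly self-interest play gives $o_\si^1(14)-o_\si^2(14)=2$ (via subtracting $3$), the analogous first antagonistic discrepancy at heap $76$ for $S=\{6,13,17\}$, and the tabulated counts $(1,493)$, $(16,2081)$, $(68,5386)$ of critical sets with entries at most $20,30,40$---are all finite computations. By Proposition~\ref{prop:SG_complex}, each prefix of the sequences $o_\zs(\cdot)$ and $o_\si(\cdot)$ is obtained in linear time by unrolling the recurrences of Definitions~\ref{def:outS} and~\ref{def:outCSgs}; enumerating the finitely many subtraction sets with bounded entries, running the dynamic program for each, and comparing the induced optimal-action sets then certifies every count. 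Here I would only need to fix the tie-break convention (antagonistic versus friendly) consistently, since that is exactly what separates the two columns of counts; no real obstacle arises beyond careful bookkeeping.

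For the conjecture that, with $S$ fixed, the discrepancy between the self-interest margin $o_\si^1(x)-o_\si^2(x)$ and the zero-sum outcome $o_\zs(x)$ stays bounded in $x$, the natural route is eventual periodicity of the outcome sequences. The zero-sum sequence $o_\zs(\cdot)$ is already known to be ultimately periodic \cite{CLMW}, so the task reduces to showing the same for $o_\si(\cdot)$ and then that the two periodic tails align. Since the outcome values themselves grow (roughly linearly in $x$), a finite-state pigeonhole argument does not apply directly; instead I would track the bounded derived quantities---the first differences $o_\si^1(x)-o_\si^1(x-1)$ and the margin $o_\si^1(x)-o_\si^2(x)$---and argue that the tuple of these over a window of length $\max S$ obeys a finite-state recurrence, forcing periodicity. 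The cleanest sufficient condition, as the observation itself notes, is that greedy play (subtracting $\max S$) is eventually an optimal self-interest action; this is the analogue of the zero-sum theorem of \cite{CLMW}, and from it boundedness follows because both variants would then settle into the same greedy regime.

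The remaining conjectures---that the number of critical sets tends to infinity with $\max S$, and that the discrepancy is unbounded as $S$ ranges over all subtraction sets---I would attack constructively, by exhibiting explicit parametric families $S(k)$ of critical sets and lower-bounding the induced discrepancy by a quantity growing with $k$. The $S=\{6,13,17\}$ example suggests that well-separated three-element sets, where a small and a large action give outcomes that tie in the zero-sum valuation but are broken oppositely by the two self-interest tie-break rules, are the right template; the work is to choose the gaps so that criticality and a growing margin persist for every $k$.

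The main obstacle is this second cluster and its enabling lemma (greedy eventual optimality for self-interest). The zero-sum recurrence of Definition~\ref{def:outS} manipulates a single scalar through a clean $\max$ of $-o_\zs(x-a)+a$, whereas the self-interest recurrence of Definition~\ref{def:outCSgs} couples a $\max$ fixing the mover's own coordinate with a separate $\argmin$/$\argmax$ tie-break that alone determines the opponent's coordinate. This decoupling is precisely what creates the zero-sum/self-interest gap, and it defeats the monotonicity and sign arguments that drive the \cite{CLMW} proof; adapting those techniques---or finding a genuinely two-coordinate potential that is eventually periodic---is where the difficulty lies.
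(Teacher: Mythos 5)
This observation is not a theorem: the paper offers no proof beyond reporting the outputs of the dynamic programs of Definitions~\ref{def:outS} and~\ref{def:outCSgs}, and it explicitly leaves all asymptotic claims as conjectures. Your treatment matches this exactly---the finite numerical claims (the $S=\{3,5\}$ and $S=\{6,13,17\}$ discrepancies and the counts of critical sets) are certified by running the recurrences over the finitely many bounded subtraction sets under each tie-break convention, and you correctly flag the remaining claims (eventual greedy optimality for self-interest, bounded discrepancy for fixed $S$, unbounded discrepancy as $S$ varies) as open, so your periodicity and parametric-family sketches go beyond, but do not contradict, anything the paper actually establishes.
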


Let us mention some examples where Pareto efficiency is the interesting concept.
\begin{example}[Tragedy of the Common]\label{ex:pareto}
In welfare economics, for general sum games, at the core of the heart is the notion of Pareto efficiency (PE). A Pareto efficient play sequence is such that it is impossible to reallocate the actions so as to make any one player better off without making another one worse off. Here, we need to respect that the number of actions for the starting (earlier) player is either the same as the other players or they have one more action.

Consider the two-player symmetric CS game $S = \{20, 31, 51\}$, playing from $x = 100$, with identity rewards and identity utilities. The unique play sequence in PSPE is for Player~1 to take 51, and then Player~2 takes 31. At this point, no further action is possible. Any other play from Player~1 would give Player~2 the opportunity to take 51, and so she would get at most 40. However, there is a solution, which is better for both players. It is when both players have agreed beforehand to take 20 in each move. Then all resources will be allocated, which implies Pareto efficiency. 

Returning to the main example with $S = \{2,3\}$ with identity rewards and identity utilities. Then, playing from any heap size it is easy to see that the outcome is Pareto efficient, both for antagonistic and cooperative tie break rule.

The smallest symmetric subtraction game that is not Pareto efficient for many heap sizes is $S=\{3,7\}$, and the first starting position that fails is $x=30$. The reason is that Player~1 cannot afford to make a big sacrifice and play 3. Because then Player~2 easily responds with 7, and now Player~1 starts from a heap of size 20, which has game value $(10,10)$. Thus the utility would be 13, when they obtain 14 by playing 7 twice. But clearly a Pareto efficient cooperation would yield the utilities $(15,15)$. 
\end{example}
We propose a resolution to the tragedy of the common in the setting of {\sc Cumulative Game}s. By introducing a principal-agent situation, one can offer a solution to the tragedy of the common in Example~\ref{ex:pareto}, by letting the principal choose the reward functions, and the agents play as usual, given the suggested reward function by the principal. So, when they play they seek to maximize the utility as defined here. However, there is a second stage of game, when the principal reveals their `true utilities', and in the example they are simply their respective sum of their actions: set the rewards for each action to $r(s)=-s$. Then indeed the Pareto (efficient) solution is achieved. In this way, rewards can be seen as a way to correct incitements that go wrong because of perhaps too individualistic behavior.

\paragraph{Memory.} A big class of normal play rulesets have `memory', meaning that the action set may depend on previous actions (in full generality the action sets may depend on all history of a game).  The most famous such ruleset in the normal play theory is {\sc Fibonacci Nim} \cite{W}, which depends only on the most recent action, and it has recently been generalized to multi heap situations \cite{LR-S,LR-S2}. Another game in this family that has deeper memory is {\sc Imitation Nim} \cite{L2}. For a theoretical purpose, such history dependencies are usually simply included into the notion of a game position, which permits all standard CGT-tools.

Note that the cumulations in this paper erase any particular memory of ``the previous move'' or similar. 
Here, any notion of memory (including possible dependencies) is stored in a cumulation vector (which is included into the notion of a heap position). In theory we could have memory be more general, but the point is to have some natural restrictions. We feel that, for an economic type game, the available actions may depend on the players' cumulations/budgets/endowments etc, but `how' they got to their present cumulations can be ignored.

The results in this paper would still hold, but with more cumbersome notation, so for now we omit this class of games.

\paragraph{Cyclic games.} A natural extension of the current work is to study {\sc Cumulative Game}s with cycles. One can define various results `in the limit', using standard $\limsup$ and $\liminf$ maps of results along strategy profiles, and thus generalize the outcome functions. 

\paragraph{Philosophy.} We mentioned one difference between EG and CG: it is given who starts in EG but not in CG. Another major distinction concerns the movability, and in particular `who gets to make the final move', often it is useful to have more moves than your opponent, and this is the de facto standard that all CGT should be put in relation with. Extensive Form Games are often more utility oriented, and the notion of who moves last is rarely the most important issue. In fact, even for sequential games a standard assumption is that each player declares their strategies, and so the movability is rarely mentioned. It suffices to know the players' strategy profile, and then their utilities follow. So the underlying philosophy used to be quite different. Here we show that they need not be that different. For example, we can include the movability in various components as rewards and utilities, so normal play could enter EGT as a natural component: how much does a player evaluate a quick gain in a gray zone of the law, compared to the risk of being convicted, and lose the freedome to move?

\paragraph{Can we have non-trivial game comparison for self-interest {\sc Cumulative Game}s?} If one aspires big equivalence classes of games, in the spirit of normal play theory, then it would have to be defined with respect to a given player $p$, as we do in Definition~\ref{def:comp}. For this to be interesting, one would most likely have to restrict the class of games, and the first step would be to find a class of games such that the equivalence class of $\vec 0$ is non-trivial. Indeed, a first step in this direction has been taken in that normal play CGT is bridged with scoring-play CGT in a useful restriction of the full class of scoring games \cite{St} to the class of guaranteed scoring \cite{LNS,LNNS} in which normal play is order embedded; the full class has only the trivial neutral element, whereas for guaranteed scoring the equivalence class of neutral elements contain all 0s of Normal play. This recent  scoring-play development started with Ettinger's seminal Ph.D thesis \cite{E}; he extended the Milnor-type positional (nonnegative incentive scoring-play) games to include so-called zugzwangs (where no player wants to start), a common concept in recreational play.

\paragraph{Solve Wealth Pebbles.} {\sc Wealth Pebbles} P6 is a partizan normal play two-player game, and it is {\em all small} (if one player has a move, then the other player has a move) if we assume that all heaps start with cumulations at least $(1,1)$ (and otherwise it is trivial). Hence, in the context of a disjunctive sum of games, the notion of atomic weights \cite{BCG, Si} will readily arrive as a tool. Thus, a first step to characterize these games would be to determine heap sizes atomic weights as a function of the players current cumulations. Atomic weights are a rough measure of how many times you can afford to `pass' (or wait) in any given component. Of course, if you lead by a certain amount in one component, then a move there is not urgent, but it could be more useful to accumulate more wealth in a neighboring game component. We propose this game among others as a step forward to develop an economic branch of classical CGT. A recent preprint \cite{Ankita} studies a variation of P6 with arbitrary hot positions, called {\sc Robin Hood}.

\end{document}